\documentclass[12pt]{article}

\usepackage[top=1in,bottom=1in, left=1in, right=1in]{geometry} 

\usepackage{amsmath,amsthm,amssymb,amsopn,amsfonts,pdfpages,algorithmic,algorithm,dsfont,float, fourier,enumerate, setspace, xspace, graphicx, booktabs, mathtools, tikz, enumitem, varwidth, bbold, wrapfig, cite,dsfont, soul,xcolor}
\usepackage[colorlinks=true,linkcolor=cyan,citecolor=blue]{hyperref}
\usepackage{graphicx}
\usepackage{caption}
\usepackage{arydshln}

\DeclareMathOperator*{\argmin}{arg\,min}
\newcommand{\csp}{\mathcal{S}_k}
\newcommand{\csq}{\mathcal{S}_\ell}

\newcommand{\bX}{\textbf{X}}
\newcommand{\bA}{\textbf{A}}
\newcommand{\bB}{\textbf{B}}
\newcommand{\bW}{\textbf{W}}
\newcommand{\bP}{\textbf{P}}
\newcommand{\hbP}{\widehat{\textbf{P}}}
\newcommand{\hbX}{\widehat{\textbf{X}}}
\newcommand{\hX}{\widehat{X}}

\newcommand{\bE}{\mathbf{E}}
\newcommand{\bQ}{\textbf{Q}}
\newcommand{\tbQ}{\widetilde{\textbf{Q}}}
\newcommand{\bC}{\textbf{C}}

\newcommand{\op}{O_{\mathbb{P}}}

\newcommand{\bY}{\textbf{Y}}

\DeclarePairedDelimiter{\norm}{\lVert}{\rVert} 
\newcommand{\rvline}{\hspace*{-\arraycolsep}\vline\hspace*{-\arraycolsep}}

\newtheorem{theorem}{Theorem}[section]

\newenvironment{manualtheorem}[1]{%
  \manualtheoreminner
}{\endmanualtheoreminner}
\newtheorem{prop}{Proposition}[section]

\newtheorem{cor}{Corollary}[section]
\newtheorem{ass}{Assumption}
\theoremstyle{definition}
\newtheorem{defn}{Definition}[section]
\newtheorem{rem}{Remark}[section]

\theoremstyle{definition}
\newtheorem{example}{Example}[section]

\usepackage{amssymb,authblk}

\begin{document}

\title{Lost in the Shuffle: Testing Power in the Presence of Errorful Network Vertex Labels}


\author[$1$]{Ayushi Saxena}
\author[$1$]{Vince Lyzinski}
\affil[$1$]{\small Department of Mathematics, University of Maryland}

\maketitle

\begin{abstract}
Two-sample network hypothesis testing is an important inference task with applications across diverse fields such as medicine, neuroscience, and sociology.
Many of these testing methodologies operate under the implicit assumption that the vertex correspondence across networks is a priori known.
This assumption is often untrue, and the power of the subsequent test can
degrade when there are misaligned/label-shuffled vertices across networks.
This power loss due to shuffling is theoretically explored in the context of random dot product and stochastic block model networks for a pair of hypothesis tests based on Frobenius norm differences between estimated edge probability matrices or between adjacency matrices.
The loss in testing power is further reinforced by numerous simulations and experiments, both in the stochastic block model and in the random dot product graph model, where the power loss across multiple recently proposed tests in the literature is considered.
Lastly, the impact that shuffling can have in real-data testing is demonstrated in a pair of examples from neuroscience and from social network analysis.
\end{abstract}

\section{Introduction}
\label{sec:intro}
\noindent Interest in graph and network-valued data has soared in the last decades \cite{Goldenberg2009}, as networks have become a common data type for modeling complex dependencies and interactions in many fields of study, ranging from neuroscience \cite{bullmore2009complex,durante2018bayesian,chung2021statistical} 
to sociology \cite{mitchell1974social,carrington2005models}
to biochemistry \cite{vazquez2003global,temkin2020chemical}, among others. 
As network data has become more commonplace, a number of statistical tools tailored for handling network data have been developed \cite{kolaczyk2009statistical,kolaczyk2014statistical}, 
including methods for hypothesis testing \cite{tang2017semiparametric,Tang2017,testing1_2017}, 
goodness-of-fit analysis \cite{goodness1,goodness2,GoF_ModSel}, 
clustering \cite{newman2001clustering,clauset2004finding,blondel2008fast,rohe2011spectral,Sussman2012,lei2015consistency}, and classification 
\cite{vogelstein2015shuffled,tang2013universally,lyzinski2016community,zhang2018end}, among others.

While early work on network inference often considered a single network as the data set, there has been a relatively recent proliferation of datasets that consist of 
multiple networks on the same set of vertices (which we will referred to as the ``paired'' setting) and multiple networks on differing vertex sets (the ``unpaired'' setting).
In the paired setting, example datasets include the DTMRI and FMRI connectome networks considered in 
\cite{tang2017semiparametric,durante2017nonparametric,Arroyo2019},
and the paired social networks from Twitter (now X) in \cite{lyzinski2020matchability} and from Facebook \cite{viswanath2009evolution,heimann2018regal} to name a few.
In the unpaired setting, example datasets include the social networks that partially share a user base from \cite{magnani2011ml,patsolic2020vertex} and the friendship networks in \cite{mastrandrea2015contact}.
When considering multiple paired networks for a single inference task, these methods often make an implicit assumption that the graphs are ``vertex-aligned,'' i.e., that there is an a priori known, true, one--to--one correspondence across the vertex labels of the graphs.
Often, this is not the case in practice (see the voluminous literature on network alignment and graph matching \cite{30yrs,10yrs,short_survey_GM}), as node alignments may be obscured by numerous factors, including, for example, different usernames across social networks \cite{patsolic2020vertex},
 unknown correspondence of neurons across brain hemispheres \cite{pedigo2023bisected},
and misalignments/misregistrations to a common brain atlas (to create connectomes) as discussed in \cite{fiori2013robust}.
Moreover, these errors across the observed vertex labels can have a dramatically detrimental impact on subsequent inferential performance \cite{lyzinski2018information}.

An important inference task in the multiple network setting is two-sample network hypothesis testing \cite{chen2020multiple,chen2023hypothesis}.
Two-sample network testing has been used, for example, to compare neurological properties (captured via patient connectomes) of populations of patients along various demographic characteristics like age, sex, and the presence/absence of a neurological disorder (indeed, this connectomic testing example will serve as motivation for us in Section \ref{sec:motive}).  
Among the work in this area, 
numerous methods exist for (semi)parametric hypothesis testing across network samples, where one of the parameters being leveraged is the correspondence of labels across networks; see for example \cite{tang2017semiparametric,du2023hypothesis,asta2014geometric,levin2017central}.
There are also nonparametric methods that estimate and compare network distributions, and ignore the information contained in the vertex labels; see, for example, \cite{Tang2017,agterberg2020nonparametric}.

This paper considers an amalgam of the above settings in which there is signal to be leveraged in one portion of the vertex labels, and uncertainty across the remaining labels.
Our principle task is then two-fold. First, we seek to understand how this label uncertainty impacts testing power both theoretically (Section \ref{sec:theory}) and in practice (Sections \ref{sec:PhatvsA}--\ref{sec:SN}). 
Second, we seek to better understand how to mitigate the impact of this uncertainty via a graph matching preprocessing step that aligns the graphs before testing (Section \ref{sec:GM}).
Before formally defining the shuffled testing problem, we will first set here some of the notational conventions that will appear throughout the paper.
Note also that all necessary code to reproduce the figures in the paper can be found at \url{https://www.math.umd.edu/~vlyzinsk/Shuffled_testing/}.


\subsection{Notation}
\label{sec:notation}

Given an undirected graph $n$-vertex graph $G$ (all graphs considered will be undirected graphs with no self-loops), we let $[n]:=\{1,2, ..., n\}$ denote the vertices of $G$. 
The adjacency matrix \textbf{A} $\in \{0,1\}^{n\times n}$ of $G$ is given by:
   $A_{ij} = \mathds{1}\{\{i,j\} \in E(G)\}$
for all $i,j \in [n]$. 
We note here that we will refer to a graph and its adjacency matrix interchangeably, as these objects (in the setting we consider herein) encode equivalent information.
We denote the $i^{th}$ row of any matrix \textbf{M} with the notation $M_i$ (or via $\mathbf{M}_n[i,:]$ if the subscript is needed to denote explicit dependence on $n$).

We define the usual Frobenius norm $\|\cdot\|_F$ 
of a matrix \textbf{A} via 
   $$\|\textbf{A}\|_F = (\sum_{i,j=1}^n  A_{ij}^2)^{1/2}.$$
For positive integers $d$ and $n$, we will define the set of orthogonal matrices in $\mathbb{R}^{d\times d}$ via  $\mathcal{O}_d$ and the set of $n\times n$ permutation matrices via $\Pi_n$. We indicate a matrix of all ones of size $d\times d$ by $\mathbf{J}_{d}$, and $\mathbf{J}_{n,d}$ is the matrix of all one of size $n\times d$. Similarly, we denote the corresponding matrices of all 0's by $\textbf{0}_d = \{0\}^{d \times d}$ and $\textbf{0}_{n,d} = \{0\}^{n \times d}.$ 
Lastly, the direct sum of two matrices \textbf{A} and \textbf{B} is denoted by $\textbf{A} \oplus \textbf{B}$.

For functions $f,g:\mathbb{Z}_{\geq0}\mapsto \mathbb{R}_{\geq0}$, we use here the standard asymptotic notations:

\begin{align*}
    f=O(g)&\text{ if } \exists\, C> 0,\text{ and }n_0\in\mathbb{Z}_{\geq0}\text{ s.t. }f(n)\leq C g(n)\text{ for }n\geq n_0;\\
    f=\Omega(g)&\text{ if } \exists\, C> 0,\text{ and }n_0\in\mathbb{Z}_{\geq0}\text{ s.t. }Cg(n)\leq  f(n)\text{ for }n\geq n_0;\\
    f=\Theta(g)&\text{ if } f=\Omega(g),\text{ and }f=O(g);\,\,f\sim g\text{ if } \lim_{n\rightarrow\infty}f(n)/g(n)= 1;\\
    f=o(g)&\text{ if } \lim_{n\rightarrow\infty} f(n)/g(n)= 0;\,
    f=\omega(g)\text{ if } \lim_{n\rightarrow\infty}g(n)/f(n)= 0.
\end{align*}
Note that when $f=o(g)$ (resp., $f=\omega(g)$ and $f=\Theta(g)$) when $g$ is a complicated function of $n$, we will often write $f\ll g$ (resp., $f\gg g$ and $f\approx g$) to ease notation.


\subsection{Shuffled testing}

To formalize our partially aligned graph setting, suppose we have two networks $\bA_1$ and $\bA_2$ on a common vertex set $V(\bA_1)=V(\bA_2)=[n]$, and that the label correspondence across networks is known for $n-k$ of the vertices; denote the set of these vertices via $M_{n,k}$.
We further assume that the user has knowledge of which vertices are in $M_{n,k}$.
This, for example, could be the result of graph matching algorithms that provide a measure of certainty for the validity of each matched vertex (see, for example, the soft matching approach of \cite{fishkind2019seeded} or the vertex nomination work of \cite{coppersmith2014vertex,fishkind2015vertex}).
The veracity of the correspondence across the remaining $k$ vertices not in $M_{n,k}$ (which we shall denote via $U_{n,k}:=[n]\setminus M_{n,k}$), is assumed unknown a priori.  This may be due, for example, to algorithmic uncertainty in aligning nodes across networks or noise in the data.
We then have that there exists an unknown (where $\Pi_n$ is the set of $n\times n$ permutation matrices)
$$\tbQ\in \Pi_{n,k}:=\{\bQ\in \Pi_n\text{ s.t. }\!\!\!\sum_{i\in U_{n,k}}\!\!\! Q_{ii}\leq k;\, \!\!\!\sum_{j\in M_{n,k}}\!\!\! Q_{jj}= n-k\}$$
such that the practitioner observes $\bA_1$ and $\bB_2=\tbQ\bA_2\tbQ^T$. 
Given the above framework, it is natural to consider the following semiparametric adaptation of the traditional parametric tests.
From $\bA_1$ and $\bB_2$, the user seeks to test if the distribution of $\bA_1$ is different than from that of $\bA_2$, i.e., to test the following hypotheses (where $\mathcal{L}(\bA_i)$ denotes the distribution (law) of $\bA_i$),
$H_0: {\mathcal{L}}(\bA_1)=\mathcal{L}(\bA_2)$ versus 
$H_1: \mathcal{L}(\bA_1)\neq\mathcal{L}(\bA_2).$
In this work, we will be considering testing within the family of (conditionally) edge-independent graph models, so that $\mathcal{L}(\bA_i)$ is completely determined by $\bP_i=\mathbb{E}(\bA_i)$.
Focusing our attention first on a simple Frobenius-norm based hypothesis test (later considering the spectral embedding based tests of \cite{levin2017central,tang2017semiparametric,Tang2017}), we reject $H_0$ if 
$$T=T(\bA_1,\bB_2):=\|\hbP_1-\hbP_{2,\widetilde{\bQ}}\|_F^2$$ 
is suitably large; here $\hbP_1$ is an estimate of $\bP_1$ obtained from $\bA_1$, and $\hbP_{2,\widetilde{\bQ}}$ an estimate of $\tbQ\bP_2\tbQ^T=\mathbb{E}(\bB_2)$
(this will be formalized later in Section \ref{sec:model}; for intuition and experimental validation on why the test statistic using $\hbP$ is preferable to the adjacency-based test using $\|\bA_1-\bB_2\|_F^2$ as its statistic, see Section \ref{sec:PhatvsA}).
To account for the uncertainty in the labeling of $\bB_2$, for each $\alpha>0$ and $\bQ\in \Pi_{n,k}$, define $c_{\alpha,\bQ}>0$ to be the smallest value such that 
$$\mathbb{P}_{H_0}(\|\hbP_1-\mathbf{Q}\hbP_2\mathbf{Q}^T\|_F\geq c_{\alpha,\bQ})\leq \alpha.$$
As we do not know which element of $\Pi_{n,k}$ yields the shuffling from $\bA_2$ to $\bB_2$, a valid (conservative) level-$\alpha$ test using the Frobenius norm test statistic
would reject $H_0$ if 
$$T(\bA_1,\bB_2) > \max_{\bQ\in\Pi_{n,k}}c_{\alpha,\bQ}.$$
The price of this validity is a loss in testing power against any fixed alternative, especially in the scenario where  $\tbQ$ (the true, but unknown, shuffling) shuffles fewer than $k$ vertices in $\bA_2$.
In this case the conservative test is over-correcting for the uncertainty in $U_{n,k}$.
The question we seek to answer is how much testing power is lost in this shuffle, and how robust the adaptations of different testing methods (i.e., different $T$ test statistics) are to this shuffling.

Note that our choice of Frobenius norm for the test statistic is natural here in light of the metric's extensive use for network comparison; see, e.g., its use in the estimation, testing, and matching literatures \cite{fishkind2019alignment,levin2019bootstrapping,lyzinski2018information}.

\subsection{Motivating example: DTMRI connectome testing}
\label{sec:motive}

To motivate the shuffled testing problem further, we first present the following real-data example.
We consider the test/retest connectomic dataset from \cite{zuo2014open} processed via the algorithmic pipeline at \cite{kiar2017high} (note that this data is available for download at \url{http://www.cis.jhu.edu/~parky/Microsoft/JHU-MSR/ZMx2/BNU1/DS01216-xyz.zip}).
This dataset represents human connectomes derived from DTMRI scans, where there are multiple (i.e., test/retest) scans per each of the 57 individuals in the study.
We consider three such scans, yielding connectomes $\textbf{A}_1$ and $\textbf{A}_2$ and $\textbf{A}_3$.
Here, $\textbf{A}_1$ and $\textbf{A}_2$ represent test/retest scans from one subject (subject 1) and $\textbf{A}_3$ a scan from a different subject (subject 2 scan 1).
In each scan, vertices represent voxel regions of the brain with edges denoting whether a neuronal fiber bundle connects the two regions or not (so that the graphs are binary and undirected).
Considering only vertices common to the three connectomes, we are left with three graphs each with $n=1085$ vertices.

For $k>0$, let $\bQ\in\Pi_{n,k}$ be an unknown permutation.
Observing $\textbf{A}_1$ and $\bA_2$ and $\bB_3=\textbf{Q}\bA_3\textbf{Q}^T$
(rather than $\bA_1,\bA_2,$ and $\bA_3$)
we seek to test whether $\bB_3$ is a connectome drawn from the same person as $\textbf{A}_1$ and $\bA_2$ or from a different person (i.e., under the reasonable assumption that $\mathcal{L}(\textbf{A}_1)=\mathcal{L}(\textbf{A}_2)$, we seek to test whether $\bA_3$ is from this same distribution).
Ideally, we would then construct our test statistic as 
\begin{align}
T(\bA_i,\bA_j)=\|\hbP_i-\hbP_j\|_F
\label{eq:teststat}
\end{align} where $\hbP_i$ is the estimate of $\bP_1=\mathbb{E}(\bA_1)=\mathbb{E}(\bA_2)=\bP_2$ or $\bP_3=\mathbb{E}(\bA_3)$ derived from $\bA_i$ as in Section \ref{sec:model}.

Incorporating the unknown shuffling of $U_{n,k}$ in $\bB_3$ is tricky here, as for moderate $k$ it is computationally infeasible to compute $c_{\alpha,\mathbf{Q}}$ for all $\mathbf{Q}\in\Pi_{n,k}$ (where we recall that $c_{\alpha,\mathbf{Q}}$ is the smallest value such that 
$\mathbb{P}(\|\hbP_1-\mathbf{Q}\hbP_2\mathbf{Q}^T\|_F^2\geq c_{\alpha,\bQ})\leq \alpha;$
indeed, the order of $\Pi_{n,k}$ is $k!$ given we know the vertices in $M_{n,k}$), and so it is difficult to compute the conservative critical value $\max_{\bQ\in\Pi_{n,k}} c_{\alpha,\mathbf{Q}}$.
Here, the task of finding the worse-case shuffling in $\Pi_{n,k}$ can be cast as finding an optimal matching between one graph and the complement of the second, which we suspect is computationally intractable.
To proceed forward, then, we consider the following
modification of the overall testing regime:  We consider a fixed (randomly chosen) sequence of nested permutations $\bQ_k\in\Pi_{n,k}$ for $k=0,\, 50,\, 150,\, 200,\,250,\,350$ and consider shuffling $\bA_2$ by $\bQ_k$ and $\bA_3$ by $\bQ_{\ell}$ for all $\ell\leq k$.
We then repeat this process $nMC=100$ times, each time obtaining an estimate (via bootstrapping as outlined below) of testing power against $H_0$.
This is done here out of computational necessity, and although the test does not achieve level-$\alpha$ here, this will nonetheless be sufficient to demonstrate the dramatic loss in testing performance due to shuffling.

\begin{figure}[t!]
\begin{center}
\includegraphics[width=1\textwidth]{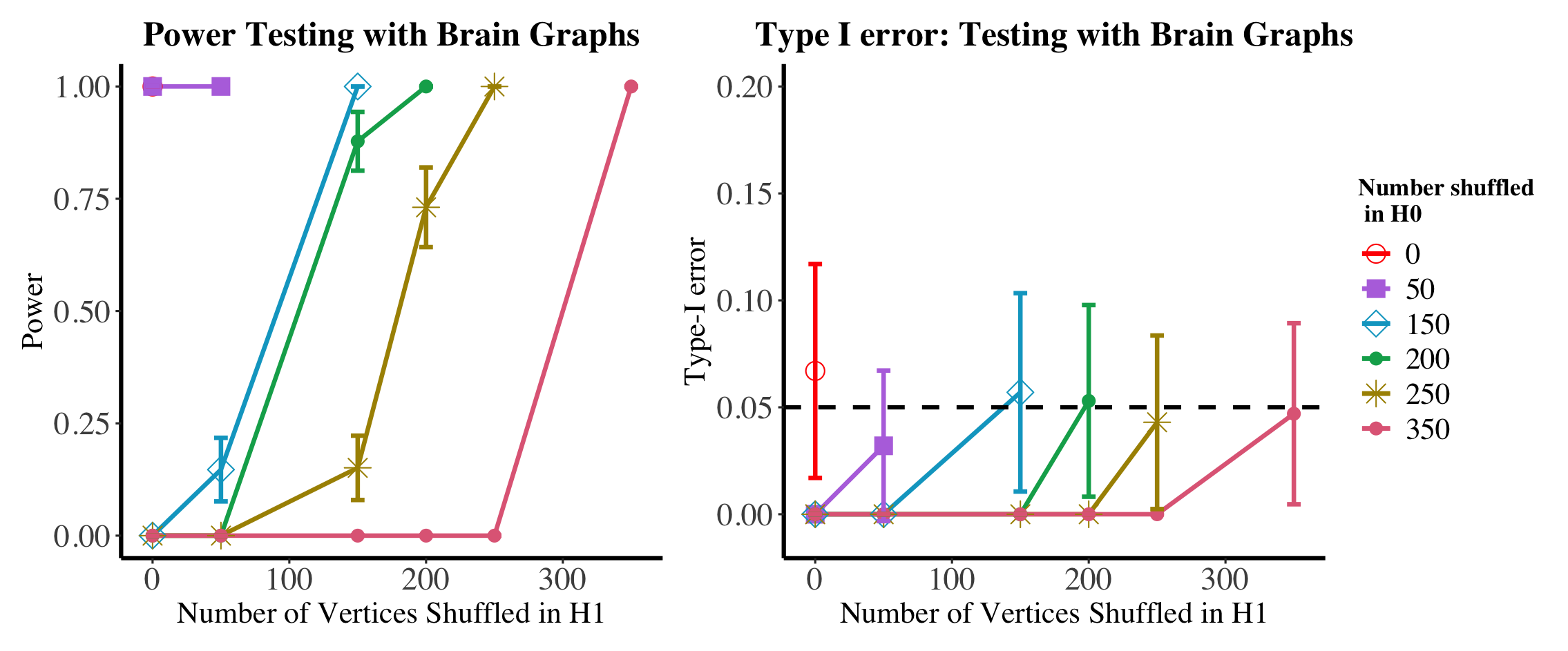} 
\caption{
Results for 200 bootstrapped samples of the test statistic in Eq. \ref{eq:teststat} at approximate level $\alpha=0.05$.
The $x$-axis represents the number of vertices shuffled via $\textbf{Q}_\ell$ (from $0$ to $k$) while the curve colors represent the maximum number of vertices potentially shuffled via $\Pi_{k,n}$, here all shuffled by $\bQ_k$.
The left panel displays testing power, the right type-I error; results are averaged over 100 Monte Carlo iterates (error bars are $\pm2$s.e.).}
\label{fig:brains}
\end{center}
\end{figure}

In order to estimate the testing power here, we rely on the bootstrapping heuristic inspired by \cite{levin2017central,levin2019bootstrapping} for latent position networks.
Informally, we will model $\bA_1$, $\bA_2$, and $\bA_3$ as instantiations of the Random Dot Product Graph (RDPG) modeling framework of \cite{young2007random}.
\begin{defn} 
\label{def:RDPG}
Let $\textbf{X}=[X_1|X_2|\cdots|X_n]^T\in\mathbb{R}^{n\times d}$ be such that $\langle X_i,X_j\rangle\in[0,1]$ for all $i,j\in[n]$.
We say that $\textbf{A} \sim RDPG(\textbf{X}, \nu)$ is an instance of a d-dimensional \emph{Random Dot Product Graph (RDPG)} with latent positions $\textbf{X}$ and sparsity parameter $\nu$ if given $\bX$, the entries of the random, symmetric, hollow adjacency matrix $\textbf{A} \in \{0,1\}^{n \times n}$ satisfy for all $i < j$, 
$    A_{ij} \overset{ind.}\sim  \text{Bernoulli}(\nu X_i^TX_j).$
\end{defn}
\noindent In this framework (where we here have $\nu=1)$, we posit matrices of latent positions $\bX\in\mathbb{R}^{n\times d}$, for $\bA_1$, $\bA_2$, and $\bY\in\mathbb{R}^{n\times d}$  for $\bA_3$ (with $\bX\bX^T, \bY\bY^T\in[0,1]^{n\times n}$), such that the $i$-th row of $\bX$, (resp., $\bY$) corresponds to the latent feature vector associated with the $i$-th vertex in $\bA_1$ and $\bA_2$ (resp., $\bA_3$).
In this setting, the distribution of networks from the same person (i.e., $\bA_1$ and $\bA_2$) will serve as our null distribution, and we seek to test  $H_0:\bA_3\sim \text{RDPG}(\bX,\nu=1)$, where $\bA_3\sim\text{RDPG}(\bY,\nu=1)$.

The RDPG model posits a tractable parameterization of the networks provided by the latent position matrices, and there are a number of statistically consistent methods for estimating these parameters under a variety of model variants.
Here, we will use the Adjacency Spectral Embedding (ASE) to estimate $\bX$ and $\bY$; see \cite{Athreya2018} for a survey of recent work in estimating and inference in RDPGs.
\begin{defn} 
\label{def:ASE}
(Adjacency Spectral Embedding) Given the adjacency matrix $\text{\textbf{A}} \in \{0,1\}^{n \times n}$ of an $n$-vertex graph, the \emph{Adjacency Spectral Embedding (ASE)} of $\textbf{A}$ into $\mathbb{R}^d$ is given by
\begin{equation} \label{ase}
   \text{ASE(\textbf{A}}, d) =  \hbX = U_AS_A^{1/2} \in \mathbb{R} ^{n \times d}
\end{equation}
where $ [U_A | \Tilde{U_A}][S_A \oplus \Tilde{S_A}][U_A | \Tilde{U_A}] \hspace{2pt}$ is the spectral decomposition of $|\text{\textbf{A}}| = (\text{\textbf{A}}^T \text{\textbf{A}})^{\frac{1}{2}},$ $S_A$ is the diagonal matrix with the ordered $d$ largest singular values of $\textbf{A}$ on its diagonal, and $U_A \in \mathbb{R}^{n \times d}$ is the matrix whose columns are the corresponding orthonormal eigenvectors of \textbf{A}.
\end{defn}
\noindent Once suitable estimates of the parameters are obtained---denoted $\widehat{\bX}_{1},\widehat{\bX}_{2}$ for those derived via ASE of $\bA_1,\bA_2$ respectively, and $\widehat{\bY}$ derived via ASE of $\bA_3$---we use a parametric bootstrap (here with 200 bootstrap samples) to estimate the null distribution critical value of
$T(\bA_1,\bA_2)=\|\hbP_1-\bQ_k\hbP_2\bQ_k^T\|_F,$ 
where in the $b$-th bootstrap sample, the test statistic $$T_b(\bA_1,\bA_2)=\|\hbP_{1,b}-\bQ_k\hbP_{2,b}\bQ_k^T\|_F$$ is computed as follows: 
For each $i=1,2$, 
\begin{itemize}
\item[(i)] sample independent $\bA_{i,b}\sim$RDPG$(\hbX_i,\nu=1)$; 
\item[(ii)] compute $\hbX_{i,b}=$ASE($G_{i,b},d$); 
\item[(iii)] set $\hbP_{i,b}=\hbX_{i,b}(\hbX_{i,b})^T$. 
\end{itemize}
Note that we use a single embedding dimension $d$ for all the ASE's in the null, estimated as detailed in Remark \ref{rem:pickd}.
Given this estimated critical value, we then estimate the testing level and testing power as follows.
For each $\ell\leq k$, we mimic the above procedure (again with 200 bootstrap samples) to estimate the distributions of 
$T^{(\ell)}(\bA_1,\bA_2)=\|\hbP_1-\bQ_\ell\hbP_2\bQ_\ell^T\|_F$
and 
$T^{(\ell)}(\bA_1,\bA_3)=\|\hbP_1-\bQ_\ell\hbP_3\bQ_\ell^T\|_F$
where $\hbP_2$ (resp., $\hbP_3$) is the ASE derived estimate of $\bP_2=\mathbb{E}(\bA_2)$ (resp., $\bP_3=\mathbb{E}(\bA_3)$); 
note that similar bootstrapping procedures were considered in \cite{tang2017semiparametric,levin2017central}.
The estimated testing level (resp., power) under the null hypothesis that $\mathcal{L}(\bA_1)=\mathcal{L}(\bA_2)$ (resp., under the alternative that $\mathcal{L}(\bA_1)=\mathcal{L}(\bA_3)$) is then computed by calculating the proportion of $T^{(\ell)}(\bA_1,\bA_2)$ (resp., $T^{(\ell)}(\bA_1,\bA_3)$) greater than the estimated critical value.

Estimated power for 200 bootstrapped samples of these test statistics at approximate level $\alpha=0.05$ are plotted in Figure \ref{fig:brains} left panel (averaged over the $nMC=100$ Monte Carlo replicates).
In the figure, the $x$-axis represents the number of vertices shuffled via $\textbf{Q}_\ell$ (from $0$ to $k$) while the curve colors represent the maximum number of vertices potentially shuffled via $\Pi_{k,n}$; here all shuffled by $\bQ_k$.
As seen in figure, the power of this test increases as both $k$ and $\ell$ increase, implying that the test is able to correctly distinguish the difference between the two subjects when the effect of the shuffling is either minimal (small $k$) or when the shuffling is equally severe in both the null and alternative cases (i.e., $\textbf{Q}_\ell$ shuffles as much as $\textbf{Q}_k$).
When $k$ is much bigger than $\ell$, the test is overly conservative (see Figure \ref{fig:brains} right panel), as expected.
In this case the shuffling in $H_0$ has the effect of inflating the critical value compared to the true (i.e., unshuffled) testing critical value, yielding an overly conservative test that cannot distinguish between the different test subjects.


\subsection{Random graph models}
\label{sec:models}

As referenced above, to tackle the question of power loss statistically, we will anchor our analysis in commonly studied random graph models from the literature.
In addition to the Random Dot Product Graph (RDPG) model \cite{hoff2002latent,young2007random} mentioned above, we will also consider the Stochastic Blockmodel (SBM) \cite{holland1983stochastic} as a data generating mechanism.
These models provide tractable settings for the analysis of graphs where the connectivity is driven by latent features---community membership in the SBM and the latent position vector in the RDPG.

The Stochastic Blockmodel---and its myriad variants including mixed membership \cite{airoldi2008mixed}, degree corrected \cite{karrer2011stochastic}, and hierarchical SBMs \cite{lyzinski2016community,li2020hierarchical}---provide a simple framework for networks with latent community structure.
\begin{defn}
\label{def:SBM}
We say that an $n$-vertex random graph $\bA\sim$ SBM$(K,\Lambda, b,
\nu)$ is distributed according to a stochastic block model random graph with parameters $K\in\mathbb{Z}^+$ the number of blocks in graph, 
$ \Lambda\in[0,1]^{K\times K}$ the block probability matrix, $b\in\mathbb{Z}^K$ the block membership function, and $\nu$ the sparsity parameter, if
\begin{itemize}
    \item[i.] The vertex set $V$ is partitioned into $K$ blocks 
    $V=V_1\sqcup V_2\sqcup\cdots\sqcup V_K,$ where for each $i\in[K]$,
    we have $|V_i|= n_i$ denotes the size of the $i^{th}$ block (so that $\sum_{i=1}^K n_i=n$);
    \item[ii.] The block membership function $b:V\mapsto K$ is such that ${b}(v)=i$ iff $v\in V_i$, and we have for each $\{u,v\}\in\binom{V}{2}$,
    $A_{uv}\stackrel{ind.}{\sim}\text{Bernoulli}(\nu\Lambda_{{b}(u),{b}(v)}).$
\end{itemize}
\end{defn}
\noindent Note that the block membership vector in an SBM is often modeled as a random multinomial vector with block probability parameter $\vec\pi\in\mathbb{R}^K$ giving the probabilities of assigning vertices randomly to each of the $K$ blocks.
Our analysis is done in the fixed block membership setting, although it translates immediately to the random membership setting.
Note also that we will often be considering cases where the number of vertices in $G$ satisfies $n\rightarrow\infty$.
In this case, we write $G\sim\text{SBM}(K_n,\Lambda_n,b_n,\nu_n)$ so that the model parameters may vary in $n$. However, to ease notation, we will suppress the $n$ subscript throughout, although the dependence on $n$ is implicitly understood.

In SBMs, the connectivity structure is driven by the latent community membership of the vertices.
In the space of latent feature models, a natural extension of this idea is to have connectivity modeled as a function of more nuanced, vertex-level, features.
In this direction, we will also consider framing our inference in the popular Random Dot Product Graph model introduced in Definition
\ref{def:RDPG}.
Note that the RDPG model encompasses SBM models with positive semidefinite $\Lambda$.
Indefinite and negative definite SBM's, are encompassed via the generalized RDPG \cite{Rubin-Delanchy2017}, though the ordinary RDPG will be sufficient for our present purposes.
Note also that our theory will be presented for the fixed latent position RDPG above, though it translates immediately to the random latent position setting (i.e., where the rows of $\mathbf{X}$, namely the $X_i$, are i.i.d. from an appropriate distribution $F$).
\begin{rem}
An inherent non-identifiability of the RDPG model comes from the fact that for any orthogonal matrix $\textbf{W}\in\mathcal{O}_d$, we get $\mathbf{A}|\mathbf{X}\stackrel{\mathcal{L}}{=}\mathbf{A}|(\mathbf{XW})$. 
With this caveat, RDPGs are more suitable for modeling in inference tasks that are rotation invariant, such as clustering \cite{Sussman2012,rohe2011spectral}, classification \cite{tang2013universally}, and appropriately defined hypothesis testing settings \cite{tang2017semiparametric,levin2017central}.
\end{rem} 
\begin{rem}
If the RDPG graphs are directed or weighted, then appropriate modifications to the ASE are required to embed the networks (see, for example, \cite{Sussman2014} and \cite{gallagher2023spectral}).
Analogous concentration results are available in both settings, and we suspect that we can derive theory analogous to Theorems \ref{thm:power2}--\ref{thm:power3}.  
Herein, we restrict ourselves to the unweighted RDPG, and leave the necessary modification to handle directed and weighted graph to future work.
\end{rem}
As in \cite{du2023hypothesis}, we will choose to control the sparsity of our graphs via $\nu$ and not through the latent position matrix $\mathbf{X}$ or block probability matrix $\Lambda$.
As such, we will implicitly make the following assumption throughout the remainder for all RDPGs and positive semidefinite SBMs (when viewed as RDPGs):
\begin{ass}
\label{ass:ass1}
If we consider a random graph sequence $\bA_n\sim\text{RDPG}(\bX_n,\nu_n)$ where $\bX_n\in\mathbb{R}^{n\times d}$, then we will assume that for all $n$ sufficiently large, we have that:
\begin{itemize}
    \item[i.] $\bX_n$ is rank $d$, and if $\sigma_1(\bX_n)\geq 
    \sigma_2(\bX_n)\geq \cdots\geq
    \sigma_d(\bX_n)$ are the singular values of $\bX_n$, we have $\sigma_1(\bX_n)\approx \sigma_d(\bX_n)=\Theta(n)$;
    \item[ii.] There exists a fixed compact set $\mathcal{X}$ such that the rows of $\bX_n$ are in $\mathcal{X}$ for all $n$;
    \item[iii.] There exists a fixed constant $a>0$ such that $\bX_n\bX_n^T\geq a$ entry-wise.
\end{itemize} 
\end{ass}

\subsection{Model estimation}
\label{sec:model}

In the RDPG (and positive semidefinite SBM) setting, our initial hypothesis test will be predicated upon having a suitable estimate of $\bP=\mathbb{E}(\bA|\bX)$.
In this setting, the Adjacency Spectral Embedding (ASE) (see Definition \ref{def:ASE}) of \cite{Sussman2012} has proven to be practically useful and theoretically tractable means for obtaining such an estimate.
The adjacency spectral embedding has a rich, recent history in the literature (see \cite{Athreya2018}) as a tool for estimating tractable graph representations, achieving its greatest estimation strength in the class of latent position networks (the RDPG being one such example).
In these settings, it is often assumed that the rank of the latent position matrix \textbf{X} is $d$, and that $d$ is considerably smaller than $n$, the number of vertices in the graph.

A great amount of inference in the RDPG setting is predicated upon $\hbX$ being a suitably accurate estimate of 
$\bX$.
To this end, the key statistical properties of consistency and asymptotic residual normality are established for the ASE in \cite{Sussman2012,Lyzinski2014,Rubin-Delanchy2017} and \cite{Athreya2013,Athreya2018} respectively.
These results (and analogues for unscaled variants of the ASE) have laid the groundwork for myriad subsequent inference results, including clustering \cite{Sussman2012,Lyzinski2014, lei2015consistency,sanna2021spectral}, classification \cite{tang2013universally}, time-series analysis \cite{chen2020multiple,Pantazis2020}, and vertex nomination \cite{fishkind2015vertex,yoder2020vertex}, among others.

\begin{rem}
\label{rem:pickd}
In practice, there are a number of heuristics for estimating the unknown embedding dimension $d$ in the ASE (see, for example, the work in \cite{fishkind2013consistent,chatterjee2015matrix,li2020network}). 
In the real data experiments below, we will adopt an automated elbow-finder applied to the SCREE plot (as motivated by
\cite{zhu2006automatic} and \cite{chatterjee2015matrix}); for the simulation experiments, we use the true $d$ value for the underlying RDPG's/SBM's.
Estimating the correct dimension $d$ is of paramount importance in spectral graph inference, as underestimating $d$ introduces bias into the embedding estimate and overestimating $d$ introduces additional variance into the estimate.
In our experience, underestimation of $d$ would have a more dramatic impact on subsequent inferential performance.

\end{rem}

\section{Shuffled graph testing in theory}
\label{sec:theory}
In complicated testing regimes (e.g., the embedding-based tests of \cite{tang2017semiparametric,levin2017central,asta2014geometric}), analyzing the distribution of the test under the alternative is itself a challenging proposition (see, for example, the work in \cite{Draves2020,tang2017semiparametric}).
Accounting for a second layer of uncertainty due to the shuffling adds further complexity to the analysis.
In order to build intuition for these more complex settings in the context of the RDPG and SBM models (which we will explore empirically in Section \ref{sec:ASE}), we examine the effect on testing power of shuffling in the simple Frobenius norm test considered in Section \ref{sec:motive}.

We consider first the case where $\bA_1\sim$RDPG$(\bX_n, \nu_n)$ and we have an independent $\bA_2\sim$RDPG$(\bY_n, \nu_n)$.
Under the null $\bX_n=\bY_n$, and we will consider elements of the alternative that satisfy the following:  for all but $r=r_n$ rows of $\bY_n$, we have $\bY_n[i,:]=\bX_n[i,:]$ so that we have
$\bY\bY^T=\bX\bX^T+\bE$
where (with the proper vertex reordering) 
\begin{align}
\label{eq:form_of_E}
\bE=
\begin{pmatrix}
\mathbf{E}_{r}&\mathbf{E}_{r}'\\
(\mathbf{E}_{r}')^T&\textbf{0}_{n-r}
\end{pmatrix}
\end{align}
We will further assume that there exists  constants $c_2>c_1>0$ and $\epsilon_n=\epsilon>0$ such that 
$c_1\epsilon\leq |e_{ij}|\leq c_2\epsilon$
for all entries of $\mathbf{E}_{r}$ and $\mathbf{E}_{r}'$.
We note that we will assume throughout that both $\bX_n$ and $\bY_n$ satisfy the conditions of Assumption \ref{ass:ass1}.

The principle challenge of testing in this regime is that the veracity of the (across graph) labels of vertices in $U_{n,k}$ is unknown a priori.  It could be the case that these vertices were all shuffled or all correctly aligned, and it is difficult to disentangle the effect on testing power of $\mathbf{E}$ versus the potential shuffling.
To model this, we consider shuffled elements of the alternative, so that we observe $\bA_1$ and $\bB_2=\widetilde\bQ\bA_2(\widetilde\bQ)^T$, where the true but unknown shuffling of $\bA_2$ is $\widetilde\bQ$, which shuffles $\ell\leq k$ labels in $U_{n,k}$.  

\subsection{Power analysis and the effect of shuffling: $\widehat P$ test}
\label{sec:pwr1}

In this section, we will present a trio of theorems, namely Theorems \ref{thm:power_to_1}--\ref{thm:power_to_0}, in which we characterize the impact on power of the two distinct sources of noise here: the shuffling error ($k$ and $\ell$) and the error in the alternative captured here by $\epsilon$.
When the difference between $k$ and $\ell$ is comparably large (for example, when $\ell\leq r$ $\epsilon\ll\sqrt{(k-\ell)/r}$ and
$\epsilon\ll \frac{k-\ell}{\ell}$), 
then the power of the resulting test will be low even in the presence of modest error $\epsilon$. 
In this case, the relative size of the error in the alternative is overwhelmed by the excess shuffling in the null which is needed to maintain testing level $\alpha$.
The actual shuffling error (i.e., $\ell$) is much less than the conservative null shuffling (i.e., $k$), and the test is not able to distinguish the two graphs in light of the conservative test's overcompensation.
Even in the case where $k-\ell$ is relatively small, if the error $\epsilon$ is sufficiently small, we will have low testing power, as expected.
However, when the difference between $k$ and $\ell$ is relatively small compared to $\epsilon$, or $k$ and $\ell$ are both relatively small compared to $\epsilon$ (see the conditions in Theorem \ref{thm:power_to_1} and Theorem \ref{thm:power_to_12}), then the difference in the number of vertices being shuffled across the conservative null and truly shuffled in the alternative is overwhelmed by the error in the alternative. 
In this case, the noise created by the relatively small differences in shuffling between null and alternative can be overcome, and high power can still be achieved.

\subsubsection{Small $k-\ell$ regime}
\label{sec:kl_small}
Before presenting the trio of theorems, we will first establish the following notation 
\begin{itemize}
\item[i.] For $i=1,2,$ let $\hbP_i$ be
the ASE-based estimate of $\bP_i$ derived from $\bA_i$;
\item[ii.] For $i=1,2,$ and for any $\bQ\in\Pi_{n,k}$, let $\hbP_{i,\bQ}$ be
the ASE-based estimate of  $\bP_{i,\bQ}=\bQ\bP_i\bQ^T$ derived from $\bQ\bA_i\bQ^T$;
\item[iii.] Let $\bQ^*\in\text{argmax}_{Q\in\Pi_{n,k}}\|P_1-P_{1,\bQ}\|_F$; let $\widetilde \bQ$ be the shuffling of $\ell\leq k$ vertices in $U_{n,k}$ such that we observe $\bB_2=\widetilde \bQ \bA_2\widetilde \bQ^T$.
\end{itemize}
In this section, we will be concerned with conditions under which power is asymptotically almost surely 1; specifically conditions under which the following holds for all $n$ sufficiently large
\begin{equation}
\label{eq:power_to_1_2m}
\mathbb{P}_{H_1}\left(\|\hbP_1-\hbP_{2,\tilde\bQ}\|_F>\max_{\bQ\in\Pi_{n,k}}c_{\alpha,\bQ}\right)\geq 1-n^{-2}.
\end{equation}
Our first result tackles the case in which $k$ is relatively small, and only modest error $\epsilon$ is needed to achieve high testing power.
Note that the proof of Theorem \ref{thm:power_to_1} can be found in Appendix \ref{pf:power_to_1}.
\begin{theorem}
\label{thm:power_to_1}
With notation as above, assume there exist $\alpha\in(0,1]$ such that $r=\Theta(n^{\alpha})$ and $k,\ell\ll n^{\alpha}$, and that $\frac{\|\bP_1-\bP_{1,\bQ^*}\|_F^2-\|\bP_1-\bP_{1,\widetilde\bQ}\|_F^2 }{\nu^2 n}=O(k)$.
In the sparse setting, consider $\nu\gg\frac{\log^{4c}(n)}{n^\beta}$ for $\beta\in(0,1]$ where $\alpha\geq\beta$.  If either \begin{itemize}
\item[i.]$k=O\left(\frac{n^\beta}{\log^{2c}n}\right)$ 
and
$\epsilon\gg  \sqrt{\frac{n^{\beta-\alpha}}{\log^{2c}(n)}}$; or
\item[ii.] $k\gg \frac{n^\beta}{\log^{2c}(n)}$ and $\epsilon\gg \sqrt{\frac{k}{n^\alpha}}$
\end{itemize}
then Eq. \ref{eq:power_to_1_2m} holds for all $n$ sufficiently large.
In the dense case where $\nu=1$, if either
\begin{itemize}
\item[i.]$k\gg\log^{2c}(n)\text{ and }\epsilon\gg \sqrt{k/n^{\alpha}};$ or 
\item[ii.]$k\ll\log^{2c}(n)\text{ and }\epsilon\gg\sqrt{(\log^{c}(n))/ n^{\alpha}},$
\end{itemize}
then Eq. \ref{eq:power_to_1_2m} holds for all $n$ sufficiently large.
\end{theorem}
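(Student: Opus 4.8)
The plan is to replace both the conservative critical value $\max_{\bQ\in\Pi_{n,k}}c_{\alpha,\bQ}$ and the alternative statistic $\|\hbP_1-\hbP_{2,\tbQ}\|_F$ by their population (noiseless) surrogates via uniform ASE concentration, thereby reducing Eq.~\ref{eq:power_to_1_2m} to a deterministic comparison between the ``signal'' injected by $\bE$ and a ``cost'' comprising the excess null shuffling and the estimation error. The first ingredient is a high-probability estimation bound: appealing to the ASE residual results of \cite{Lyzinski2014,Rubin-Delanchy2017,Athreya2018} (under Assumption~\ref{ass:ass1}, with the sparsity floor $\nu\gg\log^{4c}(n)/n^{\beta}$ precisely guaranteeing these at the requisite probability), there is a quantity $\gamma_n$ --- depending on $\nu,n$ and polylogarithmic factors with exponent $c$ --- so that on an event $\mathcal E$ of probability at least $1-n^{-2}$ one has $\|\hbP_1-\bP_1\|_F\le\gamma_n$ and $\|\bQ\hbP_2\bQ^T-\bQ\bP_2\bQ^T\|_F\le\gamma_n$ simultaneously over all $\bQ\in\Pi_{n,k}$. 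The observation that avoids a $k!$ union-bound penalty is that $\bQ\bA_2\bQ^T\sim\text{RDPG}(\bQ\bY,\nu)$, so the law of the relabeled residual is $\bQ$-invariant.

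Granting this, I would characterize the two sides. Applied under $H_0$ (where $\bP_2=\bP_1$, and where the concentration holds with probability exceeding $1-\alpha$), the triangle inequality gives $\big|\,\|\hbP_1-\bQ\hbP_2\bQ^T\|_F-\|\bP_1-\bP_{1,\bQ}\|_F\,\big|\le 2\gamma_n$ for each $\bQ$, forcing $c_{\alpha,\bQ}\le\|\bP_1-\bP_{1,\bQ}\|_F+2\gamma_n$ and hence $\max_{\bQ}c_{\alpha,\bQ}\le c+2\gamma_n$ with $c:=\|\bP_1-\bP_{1,\bQ^*}\|_F$. Under $H_1$, using $\bP_2=\bP_1+\nu\bE$ one writes the target of the observed statistic as $\bP_1-\tbQ\bP_2\tbQ^T=(\bP_1-\bP_{1,\tbQ})-\nu\tbQ\bE\tbQ^T$, so that on $\mathcal E$, $\|\hbP_1-\hbP_{2,\tbQ}\|_F\ge S-2\gamma_n$ where $S:=\|(\bP_1-\bP_{1,\tbQ})-\nu\tbQ\bE\tbQ^T\|_F$. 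Thus, on $\mathcal E$, Eq.~\ref{eq:power_to_1_2m} follows once the deterministic bound $S>c+4\gamma_n$ is established.

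The crux is lower-bounding $S$, for which I expand $S^2=a^2+\nu^2\|\bE\|_F^2-2\langle\bP_1-\bP_{1,\tbQ},\,\nu\tbQ\bE\tbQ^T\rangle$, where $a:=\|\bP_1-\bP_{1,\tbQ}\|_F\le c$ since $\tbQ\in\Pi_{n,k}$. The block form in Eq.~\ref{eq:form_of_E} yields the signal $\nu^2\|\bE\|_F^2=\Theta(\nu^2\epsilon^2 rn)$, while shuffling at most $k$ rows gives $c=O(\nu\sqrt{kn})$, so the standing hypothesis reads $c^2-a^2=O(\nu^2 nk)$. The genuine obstacle is the cross term: the unknown shuffle $\tbQ$ may partially align the shuffling discrepancy $\bP_1-\bP_{1,\tbQ}$ with the perturbation $\nu\tbQ\bE\tbQ^T$, and one must prevent this inner product from cancelling the signal. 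I would control it by combining Cauchy--Schwarz with the block structure of $\bE$ --- which confines $\nu\tbQ\bE\tbQ^T$ to the $r$ ``error'' coordinates while $\bP_1-\bP_{1,\tbQ}$ lives on the $\ell$ shuffled coordinates, bounding the overlap --- so that under the stated lower bounds on $\epsilon$ (each regime forcing at least $\epsilon\gg\sqrt{k/n^{\alpha}}=\sqrt{k/r}$ in the $k$-range it governs) the cross term is $o(\nu^2\|\bE\|_F^2)$. This gives $S^2-c^2\ge(1-o(1))\,\Theta(\nu^2\epsilon^2 rn)-O(\nu^2 nk)$, which is positive and of order $\nu^2\epsilon^2 rn$ because $\epsilon^2 r\gg k$.

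It then remains to verify $S^2-c^2>8c\gamma_n+16\gamma_n^2$, and this is exactly where the case split arises: the threshold on $k$ (namely $n^{\beta}/\log^{2c}n$ in the sparse regime and $\log^{2c}n$ when $\nu=1$) marks the value at which the shuffling-gap cost $\nu^2 nk$ overtakes the estimation-slack cost driven by $\gamma_n$, so that in one regime the binding requirement on $\epsilon$ is $\epsilon\gg\sqrt{k/n^{\alpha}}$ and in the other it is the $\gamma_n$-calibrated threshold (e.g.\ $\epsilon\gg\sqrt{n^{\beta-\alpha}/\log^{2c}n}$ in the sparse case and $\epsilon\gg\sqrt{\log^{c}(n)/n^{\alpha}}$ when $\nu=1$). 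Substituting $c=O(\nu\sqrt{kn})$, the explicit form of $\gamma_n$, and the sparsity condition then confirms the inequality in each of the four listed subcases, completing the argument on $\mathcal E$ and hence with probability at least $1-n^{-2}$. I expect the cross-term control, together with the polylogarithmic bookkeeping across the four regimes, to be the principal difficulty; the conceptual content reduces to the single comparison of the signal $\nu^2\epsilon^2 rn$ against the combined cost $\nu^2 nk$ plus the $\gamma_n$-slack.
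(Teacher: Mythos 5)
Your proposal is correct and follows essentially the same route as the paper's proof: ASE concentration giving a $\sqrt{n\nu}\log^{c}n$ estimation slack, permutation-invariance of the residual (the paper's version is the deterministic identity $\|\bQ(\hbP_2-\bP_2)\bQ^T\|_F=\|\hbP_2-\bP_2\|_F$, which is what truly gives simultaneity over $\Pi_{n,k}$ without a union bound, rather than invariance in law) to sandwich the conservative critical value around $\|\bP_1-\bP_{1,\bQ^*}\|_F$, then the expansion of $\|\bP_1-\bP_{2,\widetilde\bQ}\|_F^2$ into shuffling gap, signal $\Theta(nr\nu^2\epsilon^2)$, and cross term $O(n\min(\ell,r)\nu^2\epsilon)$, and finally the same signal-versus-(gap plus slack) comparison with the identical case split on $k$ at $n^{\beta}/\log^{2c}n$ (sparse) and $\log^{2c}n$ (dense). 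The only difference is presentational: you leave the four-regime bookkeeping as a substitution, which the paper carries out explicitly.
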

\noindent Note that $\|\bP_1-\bP_{1,\bQ^*}\|_F^2=O(nk\nu^2)$ so the assumption on the growth rate of $\|\bP_1-\bP_{1,\bQ^*}\|_F^2-\|\bP_1-\bP_{1,\widetilde\bQ}\|_F^2$
in Theorem \ref{thm:power_to_1} considers the case where the shuffling due to $\ell$ does not compensate for the shuffling due to $k$, and the shuffling due to $k$ needs to be relatively minor to achieve the desired power (here, the growth rates on $\epsilon$ in terms of $k$). 

Our second result tackles the case in which $k-\ell$ is relatively small, and only modest error $\epsilon$ is needed to achieve high testing power.
The proof of Theorem \ref{thm:power_to_12} can be found in Appendix \ref{pf:power_to_1}.
\begin{theorem}
\label{thm:power_to_12}
With notation as above, assume there exist $\alpha\in(0,1]$ such that $r=\Theta(n^{\alpha})$ and $k,\ell\ll n^{\alpha}$, and that $\frac{\|\bP_1-\bP_{1,\bQ^*}\|_F^2-\|\bP_1-\bP_{1,\widetilde\bQ}\|_F^2 }{\nu^2 n}=O(k-\ell)$.
In the sparse setting where $\nu\gg\frac{\log^{4c}(n)}{n^\beta}$ for $\beta\in(0,1]$ where $\alpha\geq\beta$, if $k\gg \frac{n^\beta}{\log^{2c}(n)}$ and either
\begin{itemize}
\item[i.]  $\frac{k-\ell}{k^{1/2}}\geq\frac{n^{\beta/2}}{\log^{2c}(n)}$; $\epsilon\gg \frac{\ell}{n^\alpha}$; and $\epsilon\gg \sqrt{\frac{k-\ell}{n^\alpha}}$; or
\item[ii.] $\frac{k-\ell}{k^{1/2}}\leq\frac{n^{\beta/2}}{\log^{2c}(n)}$;  
$\epsilon\gg\frac{\ell}{n^\alpha}$; and $\epsilon\gg\sqrt{\frac{n^{\beta/2}}{\log^{2c}(n)}\frac{k^{1/2} }{n^{\alpha}}}$
\end{itemize}
then Eq. \ref{eq:power_to_1_2m} holds for all $n$ sufficiently large.
In the dense case where $\nu=1$ and $k=\omega(\log^{2c}n)$, if either
\begin{itemize}
\item[i.] $\frac{k-\ell}{k^{1/2}}\geq\log^{c}(n)$; $\epsilon\gg \frac{\ell}{n^\alpha}$; and $\epsilon\gg \sqrt{\frac{k-\ell}{n^{\alpha}}}$; or
\item[ii.] $\frac{k-\ell}{k^{1/2}}\leq\log^{c}(n)$; $\epsilon\gg \frac{\ell}{n^\alpha}$; and $\epsilon\gg \sqrt{\frac{k^{1/2}\log^{c}(n) }{n^{\alpha}}}$,
\end{itemize}
then Eq. \ref{eq:power_to_1_2m} holds for all $n$ sufficiently large.
\end{theorem}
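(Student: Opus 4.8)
The plan is to exhibit a single event of probability at least $1-n^{-2}$ on which the observed alternative statistic exceeds the conservative critical value, by sandwiching both quantities between their population versions and a common estimation-error radius $\delta_n$. Write $S^*:=\|\bP_1-\bP_{1,\bQ^*}\|_F$, $\widetilde S:=\|\bP_1-\bP_{1,\tbQ}\|_F$, and $\mu:=\|\bP_1-\tbQ\bP_2\tbQ^T\|_F$ for the population alternative mean. First I would invoke the Frobenius-norm consistency of the ASE (as in \cite{Lyzinski2014,Athreya2018}) to produce, with probability at least $1-n^{-2}$, a bound $\|\hbP_i-\bP_i\|_F\le\delta_n$ for $i=1,2$; at the sparsity boundary $\nu\asymp\log^{4c}(n)/n^{\beta}$ this radius is of order $\log^{2c}(n)\,n^{(1-\beta)/2}$ in the sparse regime and of order $\sqrt n\,\log^{c}(n)$ when $\nu=1$. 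Since $\|\cdot\|_F$ is permutation invariant, $\|\bQ\hbP_2\bQ^T-\bQ\bP_2\bQ^T\|_F=\|\hbP_2-\bP_2\|_F$, so a triangle inequality gives $\|\hbP_1-\bQ\hbP_2\bQ^T\|_F\le\|\bP_1-\bP_{1,\bQ}\|_F+2\delta_n$ on this event for every $\bQ$; because $\delta_n$ holds with probability exceeding $1-\alpha$ for large $n$, this yields $c_{\alpha,\bQ}\le\|\bP_1-\bP_{1,\bQ}\|_F+2\delta_n$ and hence, maximizing, $\max_{\bQ\in\Pi_{n,k}}c_{\alpha,\bQ}\le S^*+2\delta_n$ with no union bound required. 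The same event gives $\|\hbP_1-\hbP_{2,\tbQ}\|_F\ge\mu-2\delta_n$, so it suffices to establish the deterministic separation $\mu-S^*>4\delta_n$.

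The heart of the argument is the population expansion of $\mu$. Using $\bP_2=\nu(\bX\bX^T+\bE)$ I would write $\bP_1-\tbQ\bP_2\tbQ^T=(\bP_1-\bP_{1,\tbQ})-\nu\,\tbQ\bE\tbQ^T$ and expand
\[
\mu^2=\widetilde S^2-2\nu\langle \bP_1-\bP_{1,\tbQ},\,\tbQ\bE\tbQ^T\rangle+\nu^2\|\bE\|_F^2 .
\]
Three scales then need to be identified. From the block structure of $\bE$ and $c_1\epsilon\le|e_{ij}|\le c_2\epsilon$ one gets $\nu^2\|\bE\|_F^2=\Theta(\nu^2 rn\epsilon^2)=\Theta(\nu^2 n^{1+\alpha}\epsilon^2)$; the hypothesis on $S^{*2}-\widetilde S^2$ bounds the deterministic gap $G:=S^{*2}-\widetilde S^2=O(\nu^2 n(k-\ell))$; and for the cross term I would exploit that $\bP_1-\bP_{1,\tbQ}$ is supported only on the $\ell$ shuffled rows and columns (entries $O(\nu)$) while $\tbQ\bE\tbQ^T$ has entries $O(\epsilon)$, so that $|\langle \bP_1-\bP_{1,\tbQ},\tbQ\bE\tbQ^T\rangle|=O(\nu\,\ell n\,\epsilon)$ and the whole cross term is $O(\nu^2\,\ell n\,\epsilon)$---crucially sharper than the Cauchy--Schwarz bound since $\ell\ll n^{\alpha}$. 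Combining, $\mu^2-S^{*2}\ge \nu^2\|\bE\|_F^2-G-O(\nu^2\ell n\epsilon)$, and the hypotheses $\epsilon\gg \ell/n^{\alpha}$ and $\epsilon\gg\sqrt{(k-\ell)/n^{\alpha}}$ force the $\Theta(\nu^2 n^{1+\alpha}\epsilon^2)$ term to dominate both the cross term and $G$, giving $\mu^2-S^{*2}\gtrsim \nu^2 n^{1+\alpha}\epsilon^2$.

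Finally I would convert this squared gap into a gap of norms. Since both $S^*$ and $\mu$ are $O(\nu\sqrt{nk})$ (using the bound $\|\bP_1-\bP_{1,\bQ^*}\|_F^2=O(nk\nu^2)$), the denominator in the identity $\mu-S^*=(\mu^2-S^{*2})/(\mu+S^*)$ is $O(\nu\sqrt{nk})$, whence $\mu-S^*\gtrsim \nu\, n^{1/2+\alpha}\epsilon^2/\sqrt k$, and the requirement $\mu-S^*>4\delta_n$ produces the stated dichotomy according to whether the deterministic gap $G$ or the fluctuation $\delta_n$ dominates. When $(k-\ell)/\sqrt k\ge n^{\beta/2}/\log^{2c}(n)$ the gap $G$ already exceeds $\delta_n$ in norm units, so beating $G$ (the condition $\epsilon\gg\sqrt{(k-\ell)/n^{\alpha}}$) suffices; when $(k-\ell)/\sqrt k\le n^{\beta/2}/\log^{2c}(n)$ the fluctuation dominates and imposing $\mu-S^*>4\delta_n$ directly reduces, upon substituting the boundary $\nu$ and $\delta_n$, to $\epsilon\gg\sqrt{(n^{\beta/2}/\log^{2c}(n))(k^{1/2}/n^{\alpha})}$. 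The dense case $\nu=1$ is identical with $\delta_n\asymp\sqrt n\log^{c}(n)$, yielding the threshold $\log^{c}(n)$ in the dichotomy and $\epsilon\gg\sqrt{k^{1/2}\log^{c}(n)/n^{\alpha}}$. Stating the $\epsilon$-conditions at the sparsity boundary $\nu\asymp\log^{4c}(n)/n^{\beta}$ is legitimate because increasing $\nu$ only shrinks $\delta_n$ and hence only relaxes the requirement; the entire scaffolding is shared verbatim with Theorem \ref{thm:power_to_1}, the only change being that the deterministic-gap hypothesis passes from $O(k)$ to $O(k-\ell)$.

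The step I expect to be the main obstacle is the cross-term estimate together with the precise polylogarithmic calibration of $\delta_n$ at confidence $1-n^{-2}$: obtaining the sharp $O(\nu^2\ell n\epsilon)$ cross-term bound (rather than the lossy Cauchy--Schwarz $O(\nu^2 n^{1+\alpha/2}\sqrt\ell\,\epsilon)$) requires careful bookkeeping of the overlap of the supports of $\bP_1-\bP_{1,\tbQ}$ and $\tbQ\bE\tbQ^T$, and it is exactly this bound, paired with the boundary value of $\nu$, that determines whether the two $\epsilon$-thresholds above are tight.
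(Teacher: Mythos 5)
Your proposal is correct and follows essentially the same route as the paper's proof: your critical-value bound via ASE Frobenius-norm consistency plus permutation invariance of the embedding is exactly Proposition \ref{prop:crit-value}, your expansion of $\mu^2$ with the support-counting cross-term bound $O(\nu^2 n\ell\epsilon)$ reproduces the paper's bounds on $T_{2,\ell,\ell}$ and $T_{1,k,\ell}$, and your dominance analysis yields the same sufficient $\epsilon$-conditions and the same dichotomy (including stating them at the sparsity boundary, as the paper does). The only step needing repair is the claim that $\mu=O(\nu\sqrt{nk})$, which can fail when $\epsilon$ is large since $\mu^2$ contains the term $\Theta(\nu^2 n^{1+\alpha}\epsilon^2)$; either square the separation inequality as the paper does (so the $\delta_n S^*$ and $\delta_n^2$ terms appear directly), or split into the cases $\mu\le 2S^*$ and $\mu>2S^*$ --- in the latter the separation $\mu-S^*>4\delta_n$ is immediate under your hypotheses because $k\gg \log^{2c}(n)/\nu$ --- so the stated thresholds are unaffected.
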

\noindent 
The assumption on the growth rate of $\|\bP_1-\bP_{1,\bQ^*}\|_F^2-\|\bP_1-\bP_{1,\widetilde\bQ}\|_F^2$
in Theorem \ref{thm:power_to_12} considers the case where the shuffling due to $\ell$ can compensate for the shuffling due to $k$ (i.e., when $k-\ell\ll k$).
In this setting, it is possible to achieve the desired power in alternative regimes with significantly smaller $\epsilon$.
Under mild assumptions, this growth rate condition will hold, for example, in the SBM where the shuffling is across blocks; see Section \ref{sec:SBM_test}.

\subsubsection{Shuffled graph testing in SBMs}
\label{sec:SBM_test}

We consider next the case where $\bA_1\sim$SBM$(K,\Lambda, b, \nu)$, and 
we assume there exists a matrix $\mathbf{E}=[e_{ij}]\in\mathbb{R}^{n\times n}$ of the form (up to vertex reordering) of Eq. \ref{eq:form_of_E}.
such that, under $H_1$, $\bA_2=[A_{2,ij}]$ is an independently sampled graph with independently drawn edges sampled according to
$$A_{2,ij}\stackrel{\text{ind.}}{\sim} \text{Bernoulli}\left(\nu\left[ \Lambda_{b(i),b(j)}+ e_{ij}\right]\right).$$
We will consider here $\bE$ being block-structured (in which case $\bA_2$ is itself an SBM).  As before, we will assume that there exists  constants $c_2>c_1>0$ and $\epsilon_n=\epsilon>0$ such that 
$c_1\epsilon\leq |e_{ij}|\leq c_2\epsilon$
for all entries of $\bE_r, \bE_r'$.

Consider the setting where
$U_{n,k}\subset V_1\cup V_2,$ and $|U_{n,k}\cap V_1|=|U_{n,k}\cap V_2|=k/2,$
so that at most $k$ vertices have shuffled labels and $k/2$ of these are in each of blocks 1 and 2.
Note that, as block labels are arbitrary, this captures the setting where vertices may be flipped between any two different blocks.
In what follows below (see Proposition \ref{prop:crit-value} in the Appendix), we will see that we can bound $\max_{\bQ\in\Pi_{n,k}}c_{\alpha,\bQ}$ in terms of any permutation that interchange exactly $k/2$ vertices between blocks 1 and 2.
Without loss of generality we can then bound $\max_{\bQ\in\Pi_{n,k}}c_{\alpha,\bQ}$ in terms of $\bQ_k$ defined via
$$ \bQ_k = 
\begin{pmatrix}
\mathbf{R}_k&\textbf{0}_{n_1+n_2, n-n_1-n_2}\\
\textbf{0}_{n-n_1-n_2, n_1+n_2}&\textbf{I}_{n-n_1-n_2}
\end{pmatrix}
$$
where
\[ \mathbf{R}_k = 
\begin{pmatrix}
  \begin{matrix}
  \textbf{0}_{k/2, k/2}  & \textbf{0}_{k/2, n_1-k/2} \\
  \textbf{0}_{n_1-k/2,k/2} & \textbf{I}_{n_1-k/2}
  \end{matrix}
  & \rvline & \begin{matrix}
  \textbf{I}_{k/2} & \textbf{0}_{k/2, n_2-k/2} \\
  \textbf{0}_{n_1-k/2,k/2} & \textbf{0}_{n_1-k/2, n_2-k/2}
  \end{matrix} \\
\hline
 \begin{matrix}
  \textbf{I}_{k/2} & \textbf{0}_{k/2, n_1-k/2} \\
  \textbf{0}_{n_2-k/2,k/2} & \textbf{0}_{n_2-k/2, n_1-k/2}
  \end{matrix}  & \rvline & 
  \begin{matrix}
  \textbf{0}_{k/2, k/2} & \textbf{0}_{k/2, n_2-k/2} \\
  \textbf{0}_{n_2-k/2,k/2} & \textbf{I}_{n_2-k/2}
  \end{matrix} 
  \end{pmatrix}\]
\noindent We again consider shuffled elements of the alternative, so that we observe $\bA_1$ and $\bB_2=\bQ_\ell\bA_2(\bQ_\ell)^T$, where $\bQ_\ell$ is defined analogously to $\bQ_k$ (i.e., for any $h\leq k$, $\bQ_h$ shuffles the first $h/2$ vertices between blocks 1 and 2).
In this SBM setting, note that 
\begin{align*}
\|\bP_1&\!-\!\bQ_k\bP_1(\bQ_k)^T\|_F^2=
k^2\nu^2(\Lambda_{11}\!-\!\Lambda_{22})^2/2 + 2k\sum_{i=3}^K n_i \nu^2(\Lambda_{i1}-\Lambda_{i2})^2\\
&+ 2k(n_1-k/2)\nu^2(\Lambda_{11}\!-\!\Lambda_{12})^2 + 2k(n_2-k/2)\nu^2(\Lambda_{22}-\Lambda_{12})^2\\
=&k^2\nu^2(\Lambda_{11}-\Lambda_{22})^2/2 + 2k\sum_{i=1}^K n_i \nu^2(\Lambda_{i1}-\Lambda_{i2})^2\\
&- k^2\nu^2(\Lambda_{11}-\Lambda_{12})^2 - k^2\nu^2(\Lambda_{22}-\Lambda_{12})^2
\end{align*}
If $n_i=\Theta(n)$ for each $i\in[K]$ and $k,\ell\ll n$ (as assumed in Theorems \ref{thm:power2} and \ref{thm:power3}), then $\|\bP_1-\bQ_k\bP_1(\bQ_k)^T\|_F^2-\|\bP_1-\bQ_\ell\bP_1(\bQ_\ell)^T\|_F^2=\Theta(n\nu^2(k-\ell))$ under mild assumptions on $\Lambda$, and Theorem \ref{thm:power_to_12} applies.

\subsubsection{Large $k-\ell$ regime}
\label{sec:kl_big}

We next tackle the power lost by an overly conservative test (i.e., when $k$ is much bigger than $\ell$).
In this case, it is reasonable to expect the power of the resulting test to be small,
as in this setting the shuffling noise could hide the true discriminatory signal in the alternative (here presented by $\bE$).  
Note that the proof of Theorem \ref{thm:power_to_0} can be found in Appendix \ref{pf:power_to_0}.

\begin{theorem}
\label{thm:power_to_0}
With notation as in Section \ref{sec:kl_small}, assume that 
$$\frac{\|\bP_1-\bP_{1,\bQ^*}\|_F^2-\|\bP_1-\bP_{1,\widetilde\bQ}\|_F^2 }{\nu^2 n}=\Omega(k-\ell)$$ and $r=\Theta(n^{\alpha}).$
Suppose further that
$\frac{k-\ell}{\sqrt{k}}\gg \frac{\log^cn}{\sqrt{\nu}}.$
Then if either
\begin{itemize}
 \item[i.] $\epsilon\ll  \frac{k-\ell}{n^\alpha}\text{ and }\ell\geq r$; or
 \item[ii.]$\epsilon\ll  \sqrt{\frac{k-\ell}{n^\alpha}};\,\, \epsilon\ll  \frac{k-\ell}{\ell};\text{ and }\ell\leq r,$
\end{itemize}
we have that for all $n$ sufficiently large
\begin{equation}
\label{eq:pwr_to_0}
\mathbb{P}_{H_1}\left(\|\hbP_1-\bQ_\ell\hbP_{2}\bQ_\ell^T\|_F>\max_{\bQ\in\Pi_{n,k}}c_{\alpha,\bQ}\right)\leq n^{-2}.
\end{equation}
\end{theorem}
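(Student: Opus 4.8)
The plan is to read this as a low-power statement and show that, with probability at least $1-n^{-2}$, the test statistic $\|\hbP_1-\tbQ\hbP_2\tbQ^T\|_F$ (with $\tbQ=\bQ_\ell$ the true $\ell$-shuffle) lies strictly below the deterministic critical value $\max_{\bQ\in\Pi_{n,k}}c_{\alpha,\bQ}$. The whole argument reduces to sandwiching two random Frobenius norms between their deterministic centers via ASE concentration, and then verifying a purely deterministic inequality between those centers. Writing $\bP_2=\bP_1+\nu\bE$, so that $\tbQ\bP_2\tbQ^T=\bP_{1,\tbQ}+\nu\bE_{\tbQ}$, the center of the test statistic is $\|u-v\|_F$ with $u:=\bP_1-\bP_{1,\tbQ}$ and $v:=\nu\bE_{\tbQ}$, while the center governing the critical value is $\|\bP_1-\bP_{1,\bQ^*}\|_F$. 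I will show the former is strictly smaller, by a margin dominating the stochastic fluctuations.

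For the critical-value side I would first invoke Proposition~\ref{prop:crit-value}: under $H_0$, $\|\hbP_1-\bQ\hbP_2\bQ^T\|_F$ concentrates around $\|\bP_1-\bP_{1,\bQ}\|_F$ within a fluctuation $\gamma_n=\Theta(\sqrt{\nu n}\,\log^{c}n)$, uniformly over $\bQ\in\Pi_{n,k}$. This scale comes from $\|\hbP_i-\bP_i\|_F\le\sqrt{2d}\,\|\hbP_i-\bP_i\|_{\mathrm{op}}=O(\sqrt{\nu n}\,\log^{c}n)$, using the rank-$d$ structure of $\hbP_i-\bP_i$ and the $\Theta(\nu n)$ eigengap guaranteed by Assumption~\ref{ass:ass1}. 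Since the upper-$\alpha$ quantile of a random variable cannot lie below its center minus $\gamma_n$ once $n^{-2}<\alpha$, this gives $\max_{\bQ}c_{\alpha,\bQ}\ge c_{\alpha,\bQ^*}\ge\|\bP_1-\bP_{1,\bQ^*}\|_F-\gamma_n$. The same concentration gives, with probability at least $1-n^{-2}$, that $\|\hbP_1-\tbQ\hbP_2\tbQ^T\|_F\le\|u-v\|_F+\gamma_n$. It therefore suffices to prove the deterministic gap bound $\|\bP_1-\bP_{1,\bQ^*}\|_F-\|u-v\|_F>2\gamma_n$.

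For the deterministic core I would work with squares. Setting $a:=\|u\|_F$ and $D:=\|\bP_1-\bP_{1,\bQ^*}\|_F^2-a^2=\Omega(\nu^2 n(k-\ell))$ (the hypothesis), I expand
\[
\|u-v\|_F^2=a^2-2\langle u,v\rangle+\|v\|_F^2,
\]
so the goal becomes $-2\langle u,v\rangle+\|v\|_F^2=o(D)$, after which $\|u-v\|_F\le\sqrt{a^2+o(D)}$ and, using $\sqrt{a^2+D}=\|\bP_1-\bP_{1,\bQ^*}\|_F=O(\nu\sqrt{nk})$,
\[
\|\bP_1-\bP_{1,\bQ^*}\|_F-\|u-v\|_F\ge\frac{(1-o(1))D}{\sqrt{a^2+D}+\sqrt{a^2+o(D)}}=\Omega\!\left(\nu\sqrt{n}\,\frac{k-\ell}{\sqrt{k}}\right).
\]
The hypothesis $\frac{k-\ell}{\sqrt k}\gg\frac{\log^{c}n}{\sqrt\nu}$ makes this dominate $2\gamma_n=\Theta(\sqrt{\nu n}\,\log^{c}n)$, which closes the argument.

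The remaining and principal obstacle is controlling $-2\langle u,v\rangle+\|v\|_F^2$, which is exactly where the two cases are born. The matrix $u$ is supported on the rows and columns of the $\ell$ shuffled vertices ($\Theta(\ell n)$ entries of size $\Theta(\nu)$) while $v$ is supported on the $r$ signal-bearing vertices ($\Theta(rn)$ entries of size $\Theta(\nu\epsilon)$); hence $\|v\|_F^2=\Theta(\nu^2\epsilon^2 rn)$, and a careful support-overlap count bounds $|\langle u,v\rangle|$ by $\Theta(\nu^2\epsilon\, n)$ times the overlap $\Theta(\min(\ell,r))$. When $\ell\ge r$ (Case i) the overlap is $\Theta(r)$, so the linear cross term $\nu^2\epsilon rn\ll D$ dominates and yields $\epsilon\ll(k-\ell)/n^{\alpha}$; when $\ell\le r$ (Case ii) the overlap is $\Theta(\ell)$, giving the linear condition $\epsilon\ll(k-\ell)/\ell$ together with the quadratic condition $\epsilon\ll\sqrt{(k-\ell)/n^{\alpha}}$ arising from $\|v\|_F^2\ll D$. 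Executing this support-overlap bookkeeping cleanly (handling the worst-case alignment of the shuffled and signal vertices), and establishing the uniform $H_0$ concentration behind the critical-value characterization, are the two places where the real work lies; the inequalities above then assemble into the claim.
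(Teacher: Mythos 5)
Your proposal is correct and follows essentially the same route as the paper's proof: a lower bound on the conservative critical value via Proposition~\ref{prop:crit-value}, ASE concentration at scale $\sqrt{\nu n}\log^{c}n$ for the statistic under $H_1$, the decomposition of the squared alternative center into $\|\bP_1-\bP_{1,\widetilde\bQ}\|_F^2$ plus a cross term and $\|\nu\bE\|_F^2$ (your $-2\langle u,v\rangle+\|v\|_F^2$ is exactly the paper's $T_{2,\ell,\ell}$ and your $D$ is $T_{1,k,\ell}$), the same $\min(\ell,r)$ support-overlap bound producing cases i and ii, and the same use of $\frac{k-\ell}{\sqrt{k}}\gg\frac{\log^{c}n}{\sqrt{\nu}}$ to dominate the fluctuation terms. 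The only differences are cosmetic---you work with unsquared norms via a difference-of-square-roots identity where the paper squares the target inequality, and your side condition ``$n^{-2}<\alpha$'' should read ``$n^{-2}<1-\alpha$,'' which is immaterial for $n$ large.
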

\noindent Note again that under mild assumptions, the growth rate requirements of $\|\bP_1-\bP_{1,\bQ^*}\|_F^2-\|\bP_1-\bP_{1,\widetilde\bQ}\|_F^2$ holds in the SBM setting considered in Section \ref{sec:SBM_test}, and Theorem \ref{thm:power_to_0} holds (given the growth rate of $k,r$).

\section{$\widehat \bP$ versus A in the Frobenius test}
\label{sec:PhatvsA}

The issue that is at the heart of the problem with the Frobenius-norm test using adjacency matrices (rather than using $\hbP$) can be best understood via the following simple example:
\begin{example}
\label{ex:badA}
Consider the simple setting where we have independent random variables 
\begin{align*}
X\sim\text{Bernoulli}(p);\,\,
Y\sim\text{Bernoulli}(p);\,\,
Z\sim\text{Bernoulli}(q).
\end{align*}
In this case
\begin{align*}
\mathbb{E}|X-Y|&=2p(1-p)\\
\mathbb{E}|X-Z|&=p(1-q)+q(1-p)
\end{align*}
Note that 
\begin{align*}
    p(1-q)+q(1-p) - 2p(1-p) =
    p(p-q) + (q-p)(1-p)=(q-p)(1-2p) 
\end{align*}
is greater than $0$ when $q>p$ and $p<1/2$, or when $q<p$ and $p>1/2$; and is less than $0$ when $q>p$ and $p>1/2$, or when $q<p$ and $p<1/2$.

Consider next the task of testing
$H_0:\mathcal{L}(\bA)=\mathcal{L}(\bB)$ for a pair of graphs $\bA$ and $\bB$.  A natural first test statistic to use is $T=\|\bA-\bB\|_F^2$, and it is natural to then reject the null when $T$ is relatively large.
In the case where
$\bA\sim$ER$(n,p)$ (i.e., all edges appear in $\bA$ with i.i.d. probability $p$ independent of all other edges) and $\bB\sim$ER$(n,q)$, the test becomes $H_0:p=q$.
However, under $H_0$ we have $\mathbb{E}T=n(n-1)2p(1-p)$ and under the alternative 
$\mathbb{E}T=n(n-1)(p(1-q)+q(1-p))$.
If $p<1/2$, then $p>q$ implies $\mathbb{E}_1T<\mathbb{E}_0T$, and rejecting for large values of $T$ would fail to reject for this range of alternatives.
Of course, in the homogeneous Erd\H os-R\'enyi (ER) case, we would want a two-sided rejection region (or we can appropriately scale $T$ to render a one-sided test suitable), though in heterogeneous 
ER models, adapting $T$ is more nuanced as we shall show below.
While the test using $T=\|\hbP_1-\hbP_2\|_F^2$ does not suffer from this particular quirk, we do not claim it is the optimal test in the low-rank heterogeneous 
ER model.
Indeed, we suspect the more direct spectral tests of \cite{tang2017semiparametric,levin2017central} would be more effective, though the effect of the shuffling is more nuanced in those tests. There, it is considerably more difficult to disentangle the shuffling from the embedding alignment steps of the testing regimes (Procrustes alignment in \cite{tang2017semiparametric}, and the Omnibus construction in \cite{levin2017central}).
\end{example}

Note that, as here we are working with a $2$-block SBM setting of Section \ref{sec:SBM_test}, we adopt the notation $Q_{2h}$ for $0\leq h\leq k$ to emphasize that $2h$ total vertices are being shuffled, with $h$ coming from each block. 
For $i=1,2$ and all $0\leq h\leq k$, let 
$\bA_{i,h}$ be shorthand for $\bQ_{2h}\bA_i\bQ_{2h}^T$ and $\bE_{h}=[e_{i,j}^{(h)}]$ be shorthand for $\bQ_{2h}\bE\bQ_{2h}^T$.
Consider the hypothesis test for testing
$H_0:\mathcal{L}(\bA_1)=\mathcal{L}(\bA_2)$ using the test statistic (where $\mathcal{G}_n$ is the set of all $n$-vertex undirected graphs)
$T_A:\mathcal{G}_n\times \mathcal{G}_n\mapsto \mathbb{R}^{\geq 0}$
defined via $T_A(\bA_1,\bA_2):=\frac{1}{2}\|\bA_{1}-\bA_{2}\|_F^2$.

Assume that we are in the dense setting (i.e., $\nu_n= 1$ for all $n$) and that the following holds:
\begin{itemize}
\item[i.] There exists an $\eta\in(0,1/2)$ such that $\eta\leq \Lambda\leq1-\eta$ entry-wise;
\item[ii.] There exists a $\tilde\eta\in[0,\eta)$ such that for all $\{ij\}$, $|e_{ij}|\leq\tilde\eta$;
\item[iii.] $\min_i n_i=\Theta(n)$, and $\max_i |\Lambda_{1i}-\Lambda_{2i}|,|\Lambda_{11}-\Lambda_{22}|=\Theta(1).$
\end{itemize}
In this case, we have that $T_A(\bA_1,\bA_{2,k})$ is stochastically greater than $T_A(\bA_1,\bA_{2,h})$ for $h<k$, and so the conservative level $\alpha$ test---to account for the uncertainty in $U_{n,2k}$---using $T_A$ would reject $H_0$ if
$T_A(\bA_1,\bA_{3,\ell})> \mathfrak{c}_{\alpha,k}$
where $\mathfrak{c}_{\alpha,k}$ is the smallest value such that 
$\mathbb{P}_{H_0}(T_A(\bA_1,\bA_{2,k})> \mathfrak{c}_{\alpha,k})\leq \alpha.$
As the following proposition shows (proven in Appendix \ref{sec:APpf}), the decay of power for this adjacency-based test exhibits pathologies not present in the $\hbP$-based test (where $\sum_{\{ij\}}$ denotes the sum over unordered pairs of elements of $[n]$, and $n_*=\min_i n_i$, and $\delta:=\max_i |\Lambda_{1i}-\Lambda_{2i}|$ and $\gamma:=|\Lambda_{11}-\Lambda_{22}|$).
\begin{prop}
\label{prop:AP}
With notation as above, let $r=n$ and define $\xi_{ij}:=(2p^{(1)}_{ij}-1)e^{(\ell)}_{ij}$ and 
$\mu_\xi:=\sum_{\{ij\}}\xi_{ij}$.
We have that 
$$\mathbb{P}_{H_1}(T_A(\bA_1,\bA_{2,\ell})\geq \mathfrak{c}_{\alpha,k})=o(1)$$
if 
$$(k-\ell)\frac{n_*}{n}-\frac{k^2-\ell^2}{n}\delta^2-\frac{k-\ell}{n}\gamma^2+\frac{\mu_{\xi}}{n}=\omega(1).$$
\end{prop}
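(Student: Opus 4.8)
The plan is to write $T_A$ as a sum of independent Bernoulli disagreement indicators, reduce the rejection event to comparing two such sums that each concentrate at scale $O(n)$, and then show that the displayed hypothesis is exactly the statement that the gap between the mean of the null statistic (shuffled by $\bQ_{2k}$) and the mean of the alternative statistic (shuffled by $\bQ_{2\ell}$) grows faster than $n$. Since $\bA_1$ and $\bA_{2,\ell}$ are symmetric, hollow, and $\{0,1\}$-valued, $T_A(\bA_1,\bA_{2,\ell})=\sum_{\{ij\}}\mathds{1}\{A_{1,ij}\neq A_{2,\ell,ij}\}$ is a sum of $\binom{n}{2}$ independent indicators (independence across unordered pairs follows from the conditional edge-independence of the two graphs and their mutual independence). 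Writing $p^{(1),h}_{ij}$ for the $(i,j)$ entry of $\bQ_{2h}\bP_1\bQ_{2h}^T$ and recalling $p^{(1)}_{ij}=\Lambda_{b(i),b(j)}$, the shuffled alternative has edge probability $p^{(1),\ell}_{ij}+e^{(\ell)}_{ij}$, so the per-pair disagreement probability is $p^{(1)}_{ij}+(p^{(1),\ell}_{ij}+e^{(\ell)}_{ij})-2p^{(1)}_{ij}(p^{(1),\ell}_{ij}+e^{(\ell)}_{ij})$; summing gives $m_1:=\mathbb{E}_{H_1}T_A(\bA_1,\bA_{2,\ell})$, and the identical computation under $H_0$ with the $\bQ_{2k}$ shuffle (the worst case, by the stochastic-dominance fact recorded above) gives $m_0:=\mathbb{E}_{H_0}T_A(\bA_1,\bA_{2,k})$.

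The first substantive step is to evaluate the mean gap. Expanding both disagreement sums and cancelling the common $p^{(1)}_{ij}$ terms yields
$$m_0-m_1=\sum_{\{ij\}}(1-2p^{(1)}_{ij})\big(p^{(1),k}_{ij}-p^{(1),\ell}_{ij}\big)+\mu_\xi,$$
where the error contribution collapses to $\mu_\xi$ precisely because $\sum_{\{ij\}}(2p^{(1)}_{ij}-1)e^{(\ell)}_{ij}=\mu_\xi$ by the definition of $\xi_{ij}$. I would then evaluate $S:=\sum_{\{ij\}}(1-2p^{(1)}_{ij})(p^{(1),k}_{ij}-p^{(1),\ell}_{ij})$ combinatorially: the summand vanishes unless a vertex lies in the ``extra-shuffled'' set $\{\ell+1,\dots,k\}$ of block $1$ or block $2$ that $\bQ_{2k}$ moves but $\bQ_{2\ell}$ does not. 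The crucial observation---which is exactly what rescues the sign pathology flagged in Example \ref{ex:badA}---is that the block-swap structure couples a block-$1$ relocation with a block-$2$ relocation, so that for an unshuffled partner in block $m$ the two single-endpoint contributions combine to
$$(1-2\Lambda_{1m})(\Lambda_{2m}-\Lambda_{1m})+(1-2\Lambda_{2m})(\Lambda_{1m}-\Lambda_{2m})=2(\Lambda_{1m}-\Lambda_{2m})^2\ge 0.$$
Summing the $k-\ell$ extra swaps against the $\Theta(n)$ unshuffled partners produces a definite-sign leading term of order $(k-\ell)n_*$, while the $\Theta(k^2-\ell^2)$ doubly-shuffled pairs furnish corrections controlled by $(k^2-\ell^2)\delta^2$ and $(k-\ell)\gamma^2$ through assumptions (i)--(iii) (bounded entries, $\Theta(1)$ block gaps, $\min_i n_i=\Theta(n)$). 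Thus the displayed expression matches $\tfrac1n(m_0-m_1)$ up to the positive $\Theta(1)$ constants guaranteed by assumption (iii), and the hypothesis that it is $\omega(1)$ yields $m_0-m_1=\omega(n)$.

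The remaining step is concentration. Each statistic is a sum of $\binom{n}{2}$ independent indicators, so Hoeffding's inequality gives sub-Gaussian tails at scale $O(n)$: $\mathbb{P}(|T_A-\mathbb{E}T_A|\ge s)\le 2\exp(-\Theta(s^2/n^2))$. Applying the lower tail under $H_0$ shows that $m_0-C_\alpha n$ violates the defining inequality for $\mathfrak{c}_{\alpha,k}$ for a suitable constant $C_\alpha$, whence $\mathfrak{c}_{\alpha,k}\ge m_0-C_\alpha n$. Combining this lower bound on the critical value with the upper tail of the alternative statistic about $m_1$ gives
$$\mathbb{P}_{H_1}\!\big(T_A(\bA_1,\bA_{2,\ell})\ge \mathfrak{c}_{\alpha,k}\big)\le \mathbb{P}_{H_1}\!\big(T_A(\bA_1,\bA_{2,\ell})-m_1\ge (m_0-m_1)-C_\alpha n\big)\le \exp\!\Big(-\Theta\big(\tfrac{((m_0-m_1)-C_\alpha n)^2}{n^2}\big)\Big),$$
and since $m_0-m_1=\omega(n)$ forces $(m_0-m_1)-C_\alpha n=\omega(n)$, the exponent diverges and the right-hand side is $o(1)$, as claimed.

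I expect the combinatorial evaluation of $S$ in the second step to be the main obstacle. One must carefully bookkeep which unordered pairs change block-probability under $\bQ_{2k}$ but not $\bQ_{2\ell}$, separate the single-endpoint from the doubly-shuffled contributions, and verify---using that $\delta$ and $\gamma$ are maxima while $n_*$ is a minimum---that the displayed expression genuinely captures the order of $\tfrac1n(m_0-m_1)$; the nonnegativity identity above is the key to obtaining a definite-sign leading term and thereby avoiding the cancellation seen for the homogeneous adjacency test. By contrast, the concentration portion is routine given the bounded-increment structure of $T_A$.
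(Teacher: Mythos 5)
Your proposal is correct, and its first half coincides with the paper's own argument: the reduction of $T_A$ to a sum of $\binom{n}{2}$ independent per-pair disagreement indicators and the resulting mean-gap identity $m_0-m_1=S+\mu_\xi$ are algebraically equivalent to the paper's computation of $\mathbb{E}_{H_0}(T_{k,A})-\mathbb{E}_{H_1}(T_{\ell,A})$; indeed, your pairing identity $(1-2\Lambda_{1m})(\Lambda_{2m}-\Lambda_{1m})+(1-2\Lambda_{2m})(\Lambda_{1m}-\Lambda_{2m})=2(\Lambda_{1m}-\Lambda_{2m})^2\geq 0$ is precisely the mechanism behind the nonnegative squared block differences appearing in Eqs.~(\ref{eq:rowdiff1})--(\ref{eq:rowdiff3}) of the paper, and your upper/lower bookkeeping via $\delta$, $\gamma$, $n_*$ mirrors the bounds derived there. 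Where you genuinely diverge is the probabilistic step. The paper sandwiches $\mathrm{Var}_{H_0}(T_{k,A})$ and $\mathrm{Var}_{H_1}(T_{\ell,A})$ between constant multiples of the corresponding means, invokes Stein's method to obtain asymptotic normality of both statistics, approximates $\mathfrak{c}_{\alpha,k}\approx z_\alpha\sqrt{\mathrm{Var}_{H_0}(T_{k,A})}+\mathbb{E}_{H_0}(T_{k,A})$, and reads the power off as a normal tail probability; you instead apply Hoeffding's inequality to the bounded independent summands to get sub-Gaussian tails at scale $n$, deduce the non-asymptotic lower bound $\mathfrak{c}_{\alpha,k}\geq m_0-C_\alpha n$, and conclude directly. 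Your route is more elementary: it requires no variance estimates and no central limit approximation, and it replaces the paper's unquantified ``asymptotically equal to'' steps with explicit exponential tail bounds, at the cost of yielding only the $o(1)$ conclusion rather than the paper's (approximate) closed-form expression for the power. One caveat applies equally to both arguments: translating the displayed hypothesis, which carries unit coefficients, into $m_0-m_1=\omega(n)$ absorbs the $\Theta(1)$ factors ($\delta^2$ in particular) term by term inside a difference of terms, which is not entirely innocent bookkeeping; but this is exactly the step the paper itself takes (its constant $C$ ``changing line to line''), so your proof is no less rigorous than the original on this point.
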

Digging a bit deeper into this proposition, we see the phenomena of Example \ref{ex:badA} at play.  
Even when $k-\ell$ is relatively small, if sufficiently often we have that
\begin{align}
\label{eq:ep1}
e^{(\ell)}_{ij}<0&\text{ when }p^{(1)}_{ij}<1/2\\
\label{eq:ep2}
e^{(\ell)}_{ij}>0&\text{ when }p^{(1)}_{ij}>1/2
\end{align}
then $\frac{\mu_{\xi}}{n}$ can itself be positive and divergent, driving power to $0$.

\begin{figure*}[t!]
\includegraphics[width=1\textwidth]{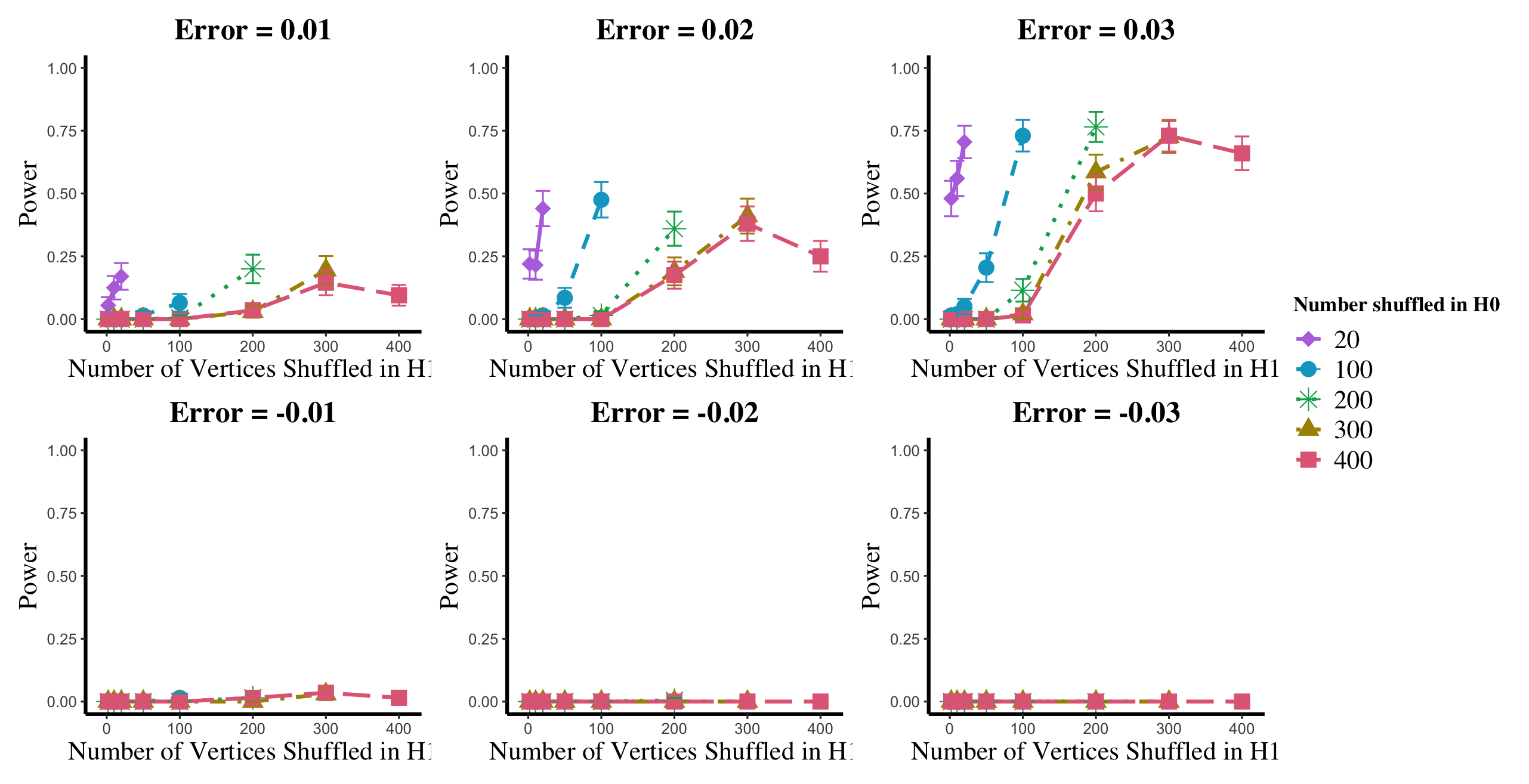}
\caption{
Power results at level $\alpha=0.05$ for $nMC=200$ Monte Carlo replicates of the adjacency matrix-based test statistic and null and alternative distributions presented in Eq. \ref{eq:AandE} (error bars are $\pm2$s.e.). 
In the figure the $x$-axis represents the number of vertices shuffled via $\textbf{Q}_\ell$ (from $0$ to $k$) while the curve colors represent the maximum number of vertices potentially shuffled via $\Pi_{k,n}$, here all shuffled by $\bQ_k$.}
\label{fig:bigoldanalysis1}
\end{figure*}
\subsubsection{Power loss in Presence of Shuffling: Adjacency versus $\widehat \bP$-based tests}
\label{sec:Phat_Adj_eps}
Note that in this section, the testing power was computed by directly sampling the distributions of the test statistic under the null and alternative.  In this 2-block stochastic block model setting, all shufflings permuting the same number of vertices between blocks are stochastically equivalent, and hence we can directly estimate the testing critical value under the null, and statistic under the alternative (for all $\ell\leq k$).
\begin{figure*}[t!]
\includegraphics[width=1\textwidth]{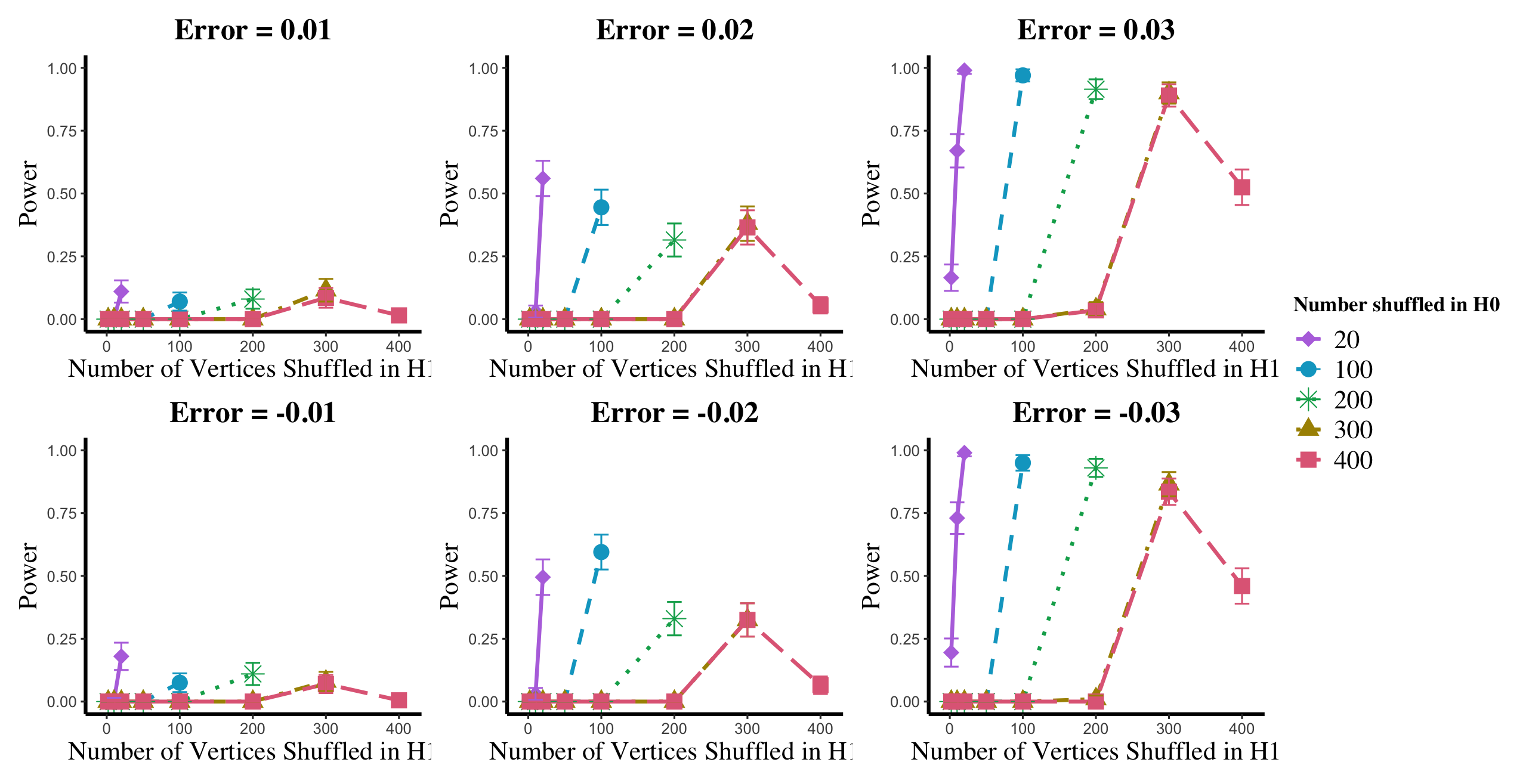}
\caption{
Power results at approximate level $\alpha=0.05$ for $nMC=200$ Monte Carlo replicates of the $\widehat{\bf P}$-based test statistic and null and alternative distributions presented in Eq. \ref{eq:AandE} (error bars are $\pm2$s.e.).
In the figure the $x$-axis represents the number of vertices shuffled via $\textbf{Q}_\ell$ (from $0$ to $k$) while the curve colors represent the maximum number of vertices potentially shuffled via $\Pi_{k,n}$, here all shuffled by $\bQ_k$.}
\label{fig:bigoldanalysis2}
\end{figure*}
Here the number of Monte Carlo replicates used to estimate the critical values is $nMC=200$, and $200$ further Monte Carlo replicates are used to estimate the power curves under each value of $\epsilon$ under consideration.
We note here that the true $d=2$ dimension was used to embed the graphs in this section.

To further explore the theoretical analysis of $T_{A}$ considered above, we consider the following simple, illustrative experimental setup.
With
$b(v)=2-\mathds{1}\{v\in\{1,2,\cdots,250\}$, we consider two $n=500$ vertex SBMs defined via
\begin{equation}
\label{eq:AandE}
\bA\sim \text{SBM}\left(2, \begin{bmatrix}
0.55 & 0.4\\
0.4 & 0.45
\end{bmatrix},b,1\right);\quad
\bB\sim \text{SBM}\left(2, \begin{bmatrix}
0.55 & 0.4\\
0.4 & 0.45
\end{bmatrix}+\textbf{E}_{\epsilon},b,1\right)
\end{equation}
where $\textbf{E}_{\epsilon} = \epsilon \times \textbf{J}_{500 \times 500}$, and
$\epsilon$ ranges over $\{\pm0.01,\pm0.02,\pm0.03\}$ (for an example of an analogous test---and result---in the sparse regime, see \hyperref[app:addexp]{Appendix A.4}).
According to Eq. \ref{eq:ep1} and \ref{eq:ep2}, we would expect the power of the adjacency matrix-based test to be poor for the $\epsilon<0$ values, even when $k-\ell$ is relatively small (i.e., even when the shuffling has a  negligible effect).
We see this play out in Figure \ref{fig:bigoldanalysis1}, 
where the adjacency matrix-based test (i.e., the test where $T(\bA,\bB)=\frac{1}{2}\|\bA-\bB\|_F^2$) demonstrates the following:
diminishing power in the $\epsilon>0$ setting when $k-\ell$ is large, and uniformly poor power in the $\epsilon<0$ setting.
Notably, when $k-\ell$ is small, the test power is (relatively) large when $\epsilon>0$ and is near 0 when $\epsilon<0$.
In Figure \ref{fig:bigoldanalysis2}, 
we see the above phenomena does not occur for the $\hbP$-based test (i.e., the test where $T(A,B)=\|\hbP_A-\hbP_B\|_F$), as for this test we see (nearly identical) diminishing power in the $\epsilon>0$ and $\epsilon<0$ settings when $k-\ell$ is large, and relatively high power when $k-\ell$ is small.
In both cases, the power is increasing as $\epsilon$ increases as expected.
We note here the odd behavior in Figures \ref{fig:bigoldanalysis1} and \ref{fig:bigoldanalysis2} when $k=400$ and $\ell=300,$ and $400$.
When $k=400$, most of the vertices are being shuffled, and the least favorable shuffling under the null does not shuffle the full $k=400$ vertices (indeed, the critical value when $\ell=300$ is larger here).   
This is because when $k=\ell=400$ here, the graphs are closer to the original 2-dimensional, 2-block SBM's with the blocks reversed.
In this case, the ``overshuffling'' of $\ell=400$ results in a smaller test statistic than the least favorable $k<400$ shuffling. 

One possible solution to the issue presented in Example \ref{ex:badA} (and exemplified in Eq. \ref{eq:ep1}--\ref{eq:ep2}) is to normalize the adjacency matrices to account for degree discrepancy.
With the setup the same as in Example \ref{ex:badA},
consider 
$$T(U,V)=\frac{\mathbb{E}|U-V|}{\mathbb{E}U(1-\mathbb{E}U)+\mathbb{E}V(1-\mathbb{E}V)},
$$
so that
\begin{align*}
T(X,Y)=1;\quad T(X,Z)=
    \frac{p(1-q)+q(1-p)}{p(1-p)+q(1-q)} \geq 1.
\end{align*}
With $\bA\sim$ER$(n,p)$ and $\bB\sim$ER$(n,q)$, rejecting $H_0:p=q$ for large values of $T$ will be asymptotically strongly consistent.  However, in heterogeneous ER settings (see Figure \ref{fig:bigoldanalysis3}) this degree normalization is less effective (especially when the expected degrees are equal across networks).
In Figure \ref{fig:bigoldanalysis3}, with 
$$b(v)=2-\mathds{1}\{v\in\{1,2,\cdots,250\},$$ we consider two $n=500$ vertex SBMs defined via
\begin{equation}
\label{eq:normtest}
\bA\sim \text{SBM}\left(2, \begin{bmatrix}
0.55 & 0.4\\
0.4 & 0.45
\end{bmatrix},b,1\right);\quad
\bB\sim \text{SBM}\left(2, \begin{bmatrix}
0.6& 0.35\\
0.35& 0.5
\end{bmatrix},b,1\right)
\end{equation}
and we consider testing $H_0:\mathcal{L}(A)=\mathcal{L}(B)$ in the presence of shuffling using the test statistic 
$$T_{\text{norm}}(\bA,\bB)=\frac{\frac{1}{2\binom{n}{2}}\|\bA-\bB\|_F^2}{\frac{1}{2\binom{n}{2}}\|\bA\|_F^2\left(1-\frac{1}{2\binom{n}{2}}\|\bA\|_F^2\right)+\frac{1}{2\binom{n}{2}}\|\bB\|_F^2\left(1-\frac{1}{2\binom{n}{2}}\|\bB\|_F^2\right)},$$
for the normalized adjacency matrix test (left panel), and the usual $T(\bA_1,\bA_2)=\|\widehat \bP_1-\widehat \bP_2\|_F$ for the $\widehat\bP$ test (right panel).
From the figure, we see that in settings such as this where $\|\bA\|_F^2\approx \|\bB\|_F^2$, the degree normalization is (unsurprisingly) unable to overcome the issues with the adjacency based test outlined in Example \ref{ex:badA}.

\begin{figure}[t!]
\includegraphics[width=1\textwidth]{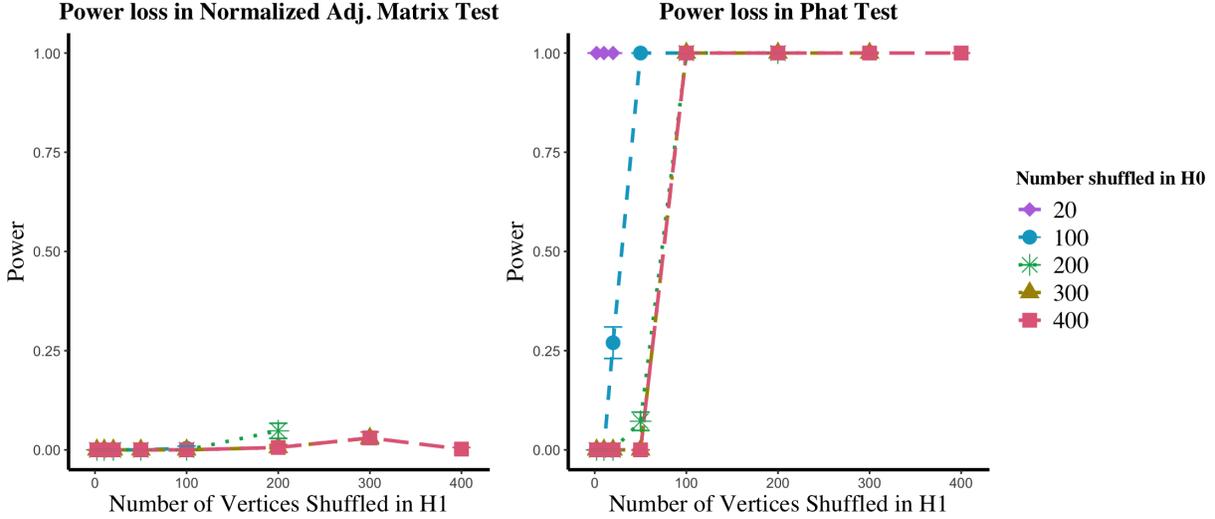}
\caption{
Power results at approximate level $\alpha=0.05$ for $nMC=500$ Monte Carlo replicates of the normalized test statistic and null and alternative distributions presented in Eq. \ref{eq:normtest} (error bars are $\pm2$s.e.). 
In the figure the $x$-axis represents the number of vertices shuffled via $\textbf{Q}_{2\ell}$ (from $0$ to $k$) while the curve colors represent the maximum number of vertices potentially shuffled via $\Pi_{k,n}$, here all shuffled by $\bQ_{2k}$.}
\label{fig:bigoldanalysis3}
\end{figure}

\section{Empirically exploring shuffling in ASE-based tests}
\label{sec:ASE}

\begin{figure}[t!]
\centering
\includegraphics[width=1\textwidth]{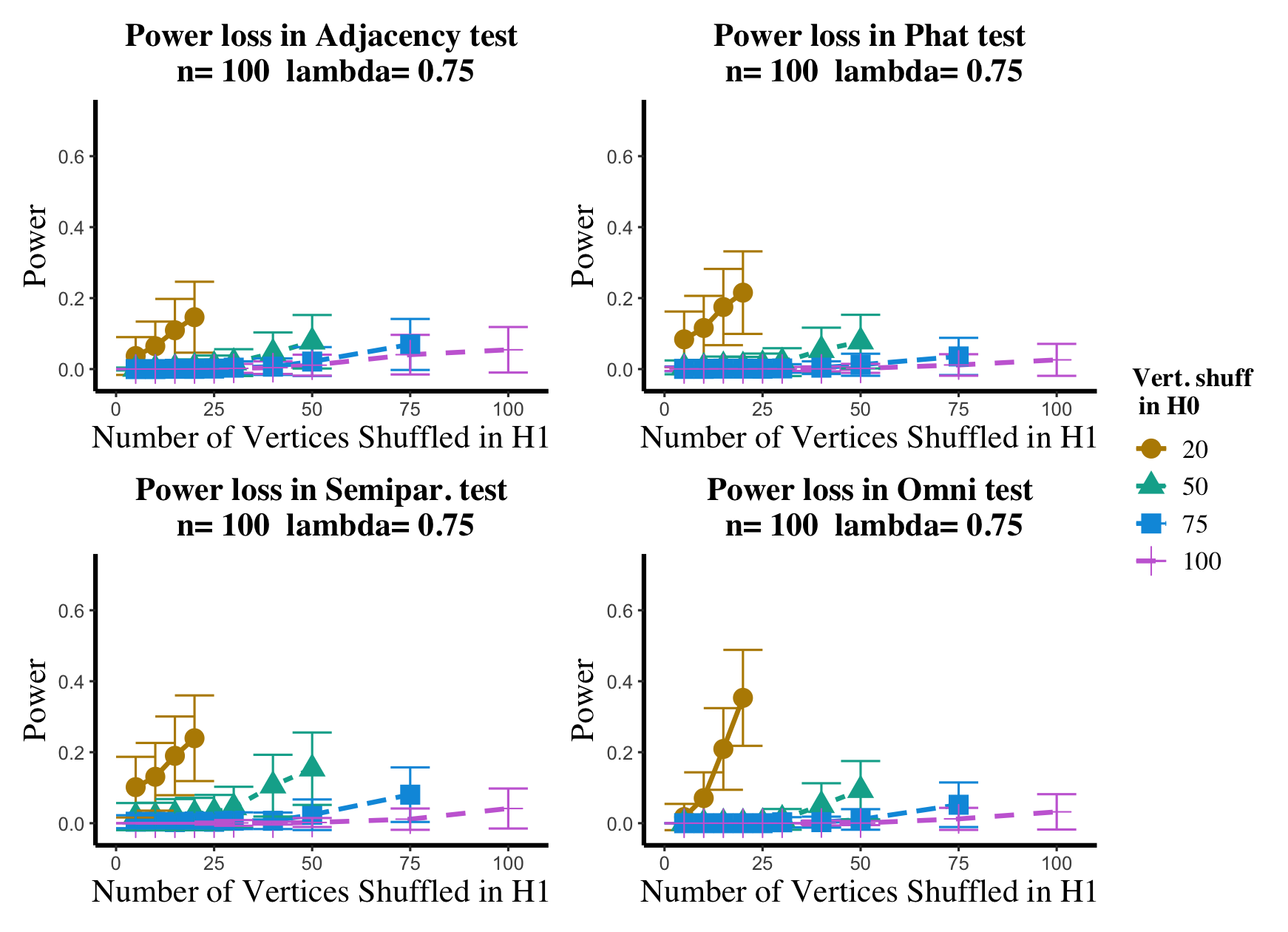}
\caption{For the experimental setup considered in Section \ref{sec:ASE}, we plot the empirical testing power in the presence of shuffling for the four tests: the Frobenius norm difference between the adjacency-matrices, between $\hbP$'s, $T_{\text{Omni}}$ and $T_{\text{Semipar}}$.
In the figure
the x-axis represents the number of vertices actually shuffled in $U_{n,k}$ (i.e., the number shuffled in the alternative) while the curve colors
represent the maximum number of vertices potentially shuffled via in $U_{n,k}$.}
\label{fig:levinvsomni075}
\end{figure}
\begin{figure}[t!]
\centering
\includegraphics[width=1\textwidth]{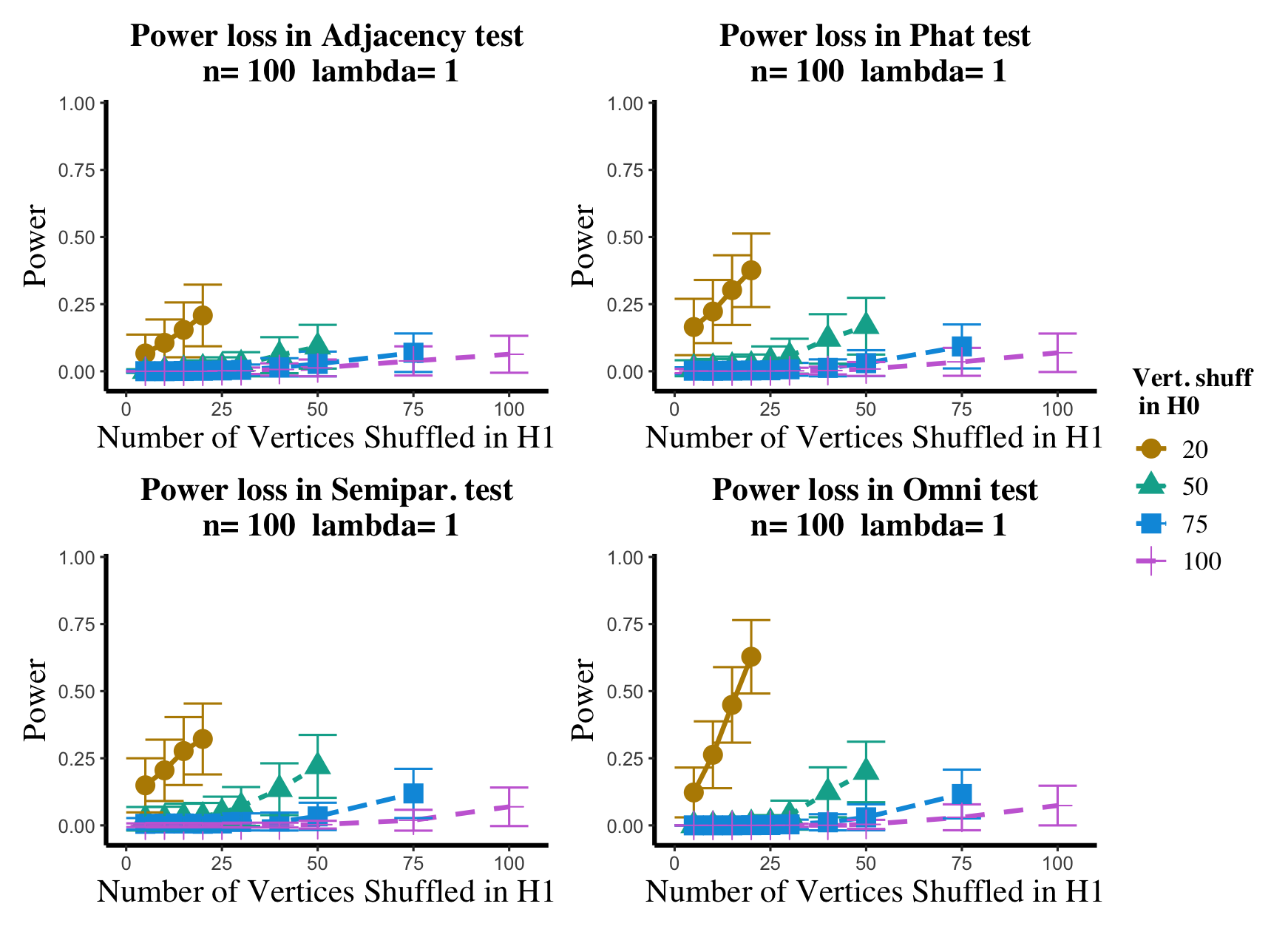}
\caption{For the experimental setup considered in Section \ref{sec:ASE}, we plot the empirical testing power in the presence of shuffling for the four tests: the Frobenius norm difference between the adjacency-matrices, between $\hbP$'s, $T_{\text{Omni}}$ and $T_{\text{Semipar}}$.
In the figure
the x-axis represents the number of vertices actually shuffled in $U_{n,k}$ (i.e., the number shuffled in the alternative) while the curve colors
represent the maximum number of vertices potentially shuffled via in $U_{n,k}$.}
\label{fig:levinvsomni01}
\end{figure}
As mentioned in previous sections, multiple spectral-based hypothesis testing regimes have been proposed in the literature over the past several years; see, for example, \cite{levin2017central,tang2017semiparametric,Tang2017,asta2014geometric}.
One of the chief advantages of the $\hbP$-based test considered herein is the ease in which the analysis lends itself to understanding the effect of shuffling; indeed, this power analysis is markedly more complex for the tests considered in  \cite{levin2017central,tang2017semiparametric}, for example.

In the ASE-based tests in \cite{levin2017central,tang2017semiparametric}, the authors
consider $n$ vertex, $d$ dimensional RDPG's
$\bA \sim \text{RDPG}(\bX, \nu=1)$
and 
$\bB \sim \text{RDPG}(\bY, \nu=1)$, and seek to test 
\begin{equation}
    \label{eq:hypoth}
H_0:\textbf{X}\stackrel{\perp}{=}\textbf{Y},\text{ versus }H_1:\textbf{X}\not\stackrel{\perp}{=}\textbf{Y},
\end{equation}
where $\textbf{X}\stackrel{\perp}{=}\textbf{Y}$ holds if there exists an orthogonal matrix $\mathbf{W}\in\mathcal{O}_d$ such that $\textbf{X}=\textbf{Y}\mathbf{W}$.
This rotation is to account for the inherent non-identifiability of the RDPG model, as latent positions $\textbf{X}$ and $\textbf{Y}$ satisfying $\textbf{X}\stackrel{\perp}{=}\textbf{Y}$ yield the same graph distribution.
The semiparametric test of \cite{tang2017semiparametric} used as its test statistic a suitably scaled version of the Frobenius norm between suitably rotated ASE estimates of $\bX$ and ${\bf Y}$; namely, an appropriately scaled version of 
$T_{\text{Semipar}}(\bA,\bB) = \min_{\textbf{W} \in \mathcal{O}_d} \norm{\widehat{\textbf{X}} \textbf{W} - \widehat{\textbf{Y}} }_F,$
where $\widehat{\textbf{X}}=$ASE($\bA,d$) and $\widehat{\textbf{Y}}=$ASE($\bB,d$).
While consistency of the test based on $T_{\text{Semipar}}$ is shown in \cite{tang2017semiparametric}, the effect of shuffling vertex labels in the presence of the Procrustean alignment step is difficult to parse here, and is the subject of current research.

The separate graph embeddings cannot be compared in $T_{\text{Semipar}}$ above without first being aligned (e.g., without the Procrustes alignment provided by $\bW$) due to the non-identifiability of the RDPG model, and this added variability (uncertainty) motivated the Omnibus joint embedding regime of \cite{levin2017central}.
The Omnibus matrix in the $m=2$ setting--- $m$ here the number of graphs to embed---is defined as follows.
Given two adjacency matrices $\textbf{A}, \textbf{B} \in \mathbb{R}^{n \times n}$ on the same vertex set with known vertex correspondence, the omnibus matrix $\textbf{M} \in \mathbb{R}^{2n \times 2n}$ is defined as
$${\bf M}=\begin{bmatrix}\bA& \frac{\bA+\bB}{2}\\
\frac{\bA+\bB}{2}& \bB
\end{bmatrix}.
$$
Note that this definition can easily extend to a sequence of matrices  $\textbf{A}_1, \dots, \textbf{A}_m$, where the $i,j$-th block of the omnibus matrix 
$\textbf{M}_{ij}= (\textbf{A}_i+\textbf{A}_j)/2$ for all $i,j \in [m]$.
We present the case when $m=2$ for simplicity.
When combined with ASE, the Omni framework allows for us to simultaneously produce directly comparable estimates of the latent positions of each network without the need for a rotation.
Let the ASE of $\mathbf{M}$ be defined as
$\text{ASE}(\mathbf{M},d)=\widetilde{\mathbf{Z}}=[\widetilde\bX^T,\widetilde{\bf Y}^T]^T,$
where $\widetilde{\mathbf{Z}}\in\mathbb{R}^{2n\times d}$ provides, via its first $n$ rows denoted $\widetilde\bX$, an estimate of $\bX$ and, via its second $n$ rows denoted $\widetilde{\bf Y}$, an estimate of ${\bf Y}$.
The Omnibus test, as proposed in \cite{levin2017central}, seeks to test the hypotheses in Eq. \ref{eq:hypoth} via the test statistic
$T_{\text{Omni}} =  \norm{\widetilde{\textbf{X}} - \widetilde{\textbf{Y}} }_F.$
Concentration and asymptotic normality of $T_{\text{Omni}}$ under $H_0$ is established in \cite{levin2017central} (see also the work analyzing $T_{\text{Omni}}$ under the alternative in \cite{Draves2020}), and in \cite{levin2017central} the Omni-based test demonstrates superior empirical testing performance compared to the test in \cite{tang2017semiparametric}.
As in the case with $T_{\text{Semipar}}$, the effect of shuffling vertex labels in the presence of the omnibus structural/construction alignment is difficult to theoretically understand, and is the subject of current research.

In the setting above, we will compare the performance of the $\hbP$ and adjacency-based tests with $T_{\text{Omni}}$ and $T_{\text{Semipar}}$ in a
(slightly) modified version of the experimental setup of \cite{levin2017central}.
To wit, we consider paired $100$-vertex RDPG graphs where
the rows of $\bX$ are i.i.d.\@ Dirichlet($\alpha=(1,1,1)$) random vectors, with the exception that the first five rows of $\bX$ are fixed to be $(0.8, 0.1, 0.1)$ (in \cite{levin2017central} they consider one fixed row).
The rows of $\bY$ are identical to those of $\bX$, with the exception that the first five rows of $\bY$ are fixed to be 
$(1-\lambda)(0.8, 0.1, 0.1)+\lambda (0.1, 0.1, 0.8)$; here we consider $\lambda$ ranging over $(0,0.25,0.5,0.75,1)$ (plots for the $\lambda$ and $k$ values not shown here can be found in Appendix \ref{app:addexp}).
In the language of Section \ref{sec:theory}, we are setting here $r=5$, with the $\lambda$ controlling the level of error $e_{i,j}$ in each entry of $\bE$.
This change in notation (from $\bE$ to $\lambda$) is to maintain consistency with the notation used in the motivating work of \cite{levin2017central,levin2017central2}.
Note that in each ASE computed in this experiment, the true $d=3$ dimension was used for the embedding.

As in the connectomic real-data example,
incorporating the unknown shuffling of $U_{n,k}$ into the adjacency and $\hbP$-based tests and into $T_{\text{Omni}}$ and $T_{\text{Semipar}}$ is tricky here, as for moderate $k$ it is computationally infeasible to compute the conservative critical values exactly. 
Our compromise is that in the Monte Carlo simulations below, we sample random permutations that fix no element of $U_{n,k}$ to act as the elements generating the least favorable null; while this will not guarantee the worst case shuffling is sampled (and so the test will most-likely not achieve its desired level of $\alpha=0.05$), this seems reasonable in light of the non-fixed rows of $\bX$ being i.i.d.

In order to simulate testing $
H_0:\textbf{X}\stackrel{\perp}{=}\textbf{Y}$, we consider the following two-tiered Monte Carlo simulation approach.
\begin{itemize}
\item[1.] For $i=1,\ldots,nMC_1=50$
\begin{itemize}
   \item[i.] Simulate $\bX$ and $\bY$, drawn as above
   \item[ii.] For $j=1,\ldots,nMC_2=100$
   \begin{itemize}
   \item[a.] Generate a nested sequence of random derangements of the elements of $U_{n,k}$ (ranging over $k$);
   \item[b.] Simulate $\bA_1,\bA_2\stackrel{\text{i.i.d}}{\sim} \text{RDPG}(\bX,\nu=1)$ and independently simulate $\bA_3\sim \text{RDPG}(\bY,\nu=1)$;
   \item[c.] Compute the test statistic under the null and alternative for each $k$, namely compute $T^{(k,j)}=\|\hat P_1-\hat P_{2,k}\|_F$ for the null, and $T^{(\ell,j)}=\|\hat P_1-\hat P_{3,\ell}\|_F$ for the alternative;
\end{itemize}
\item[iii.] Estimate the critical value for the test for each $k$ using $\{T^{(k,j)}=\|\hat P_1-\hat P_{2,k}\|_F\}_{j=1}^{100}$
\item[iv.] Estimate the power for the test at each $k$ by computing the proportion of $\{T^{(\ell,j)}=\|\hat P_1-\hat P_{3,\ell}\|_F\}_{j=1}^{100}$ greater than the critical value in step iii.
\end{itemize}  
\item[2.] Compute the power average over the $nMC_1=50$ outer Monte Carlo iterates.
\end{itemize} 

\noindent Results are displayed in Figure \ref{fig:levinvsomni075}-- \ref{fig:levinvsomni01} where we range $k=( 20, 50, 75, 100)$ (in the figures in Appendix \ref{app:addexp}, we range $k=(10, 20, 30, 50, 75, 100)$) and different $k$ values are represented by different colors/shapes of the plotted lines. 
Here $\ell\leq k$, the number shuffled in the alternative (i.e., the number of actual incorrect labels in $U_{n,k}$), ranges over the values plotted on the $x$-axis.
Figure \ref{fig:levinvsomni075} (resp., Figure \ref{fig:levinvsomni01}) show results for $\lambda=0.75$ (resp., $\lambda=1$); plots for the remaining values of $\lambda$ can be found in Appendix \ref{app:addexp}.
From the figures, we see that, as expected, larger values of $\lambda$ (i.e., more signal in the alternative) yield higher testing power, and that power diminishes greatly as the number of vertices shuffled in the null is increasing relative to the number shuffled in the alternative.
We also note that the Omnibus based test appears to be more robust to shuffling than the other tests; developing analogous theory to Theorems \ref{thm:power_to_1}--\ref{thm:power_to_0} for the Omnibus test is a natural next step, though we do pursue this further here.
We lastly note that the loss in power in the large $k$ settings is more pronounced here than in the SBM simulations, even when $\ell\approx k$; we suspect here this is due to the noise introduced by the large amount of shuffling being of higher order than the signal in the alternative.

\begin{figure}[t!]
\centering 
\includegraphics[width=1\textwidth]{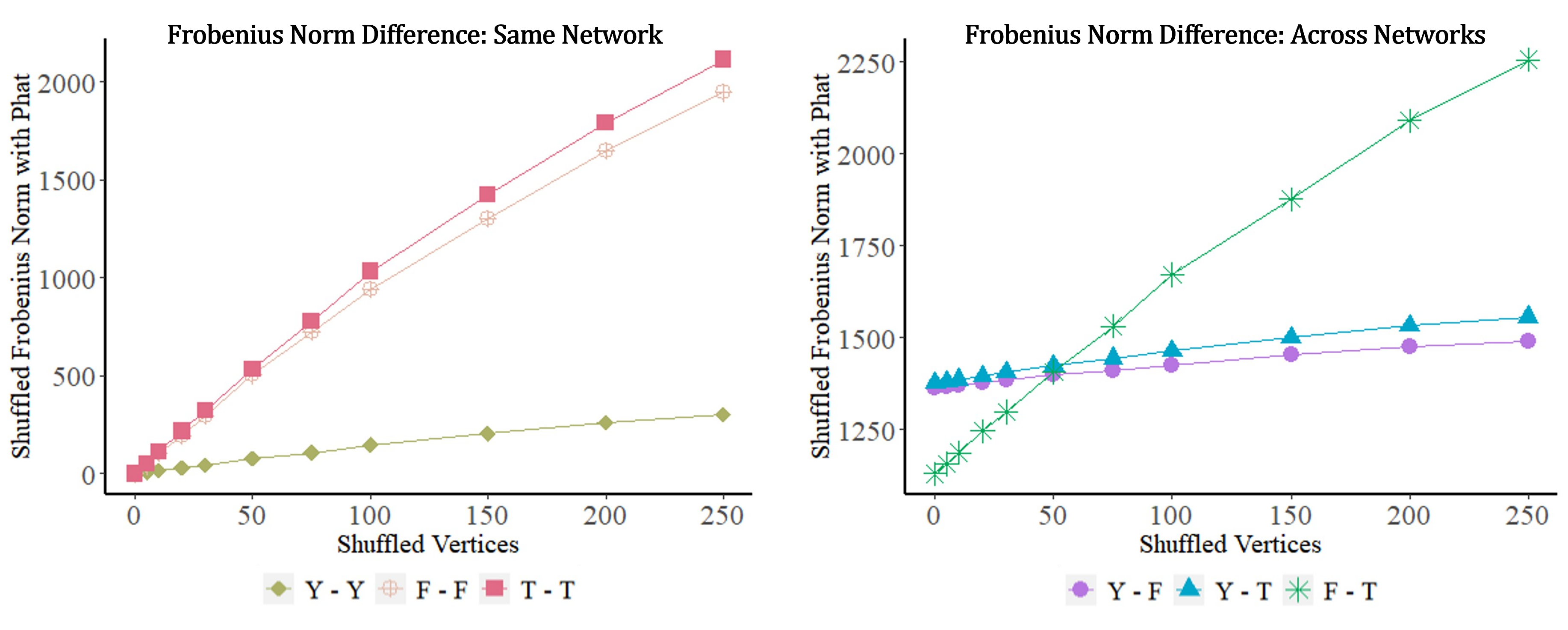}
\caption{In the left (resp., right) panel, we plot the Frobenius norm difference of the estimated $\hbP$ between the same (resp., across) network as the number of shuffled vertex labels is increased within (resp., across) networks; the $x$-axis represents the number of vertices shuffled.
Note the different scales on the $y$-axes of the two panels in the figure.}
\label{fig:sn_frob}
\end{figure}


\section{Shuffling in social networks}
\label{sec:SN}
In this section, we explore the shuffled testing phenomena in the context of the social media data found in \cite{magnani2011ml}.
The data contains a multilayer social network consisting of user activity across three distinct social networks: YouTube, Twitter, and FriendFeed; after initially cleaning the data (removing isolates, and symmetrizing the networks, etc.), there are a total of 422 common users across the three networks. 
 Given adjacency matrices of our three 422-user social networks $\bA=\bA_\bY$ (Youtube), $\bB=\bB_{\bf T}$ (Twitter), and $\bC=\bC_{\bf F}$ (FriendFeed), we ultimately wish to understand the effect of vertex misalignment on testing the following hypotheses
  \begin{align*}
   H_0^{(1)}&\!\!:\!\mathcal{L}(\bA)\!=\!\mathcal{L}(\bB);\,\,
   H_0^{(2)}\!\!:\!\mathcal{L}(\bA)\!=\!\mathcal{L}(\bC);\,\,
   H_0^{(3)}\!\!:\!\mathcal{L}(\bB)\!=\!\mathcal{L}(\bC)\\
   H_1^{(1)}&\!\!:\!\mathcal{L}(\bA)\!\neq\!\mathcal{L}(\bB);\,\,
   H_1^{(2)}\!\!:\!\mathcal{L}(\bA)\!\neq\!\mathcal{L}(\bC);\,\,
   H_1^{(3)}\!\!:\!\mathcal{L}(\bB)\!\neq\!\mathcal{L}(\bC)
  \end{align*}
In this case, it is difficult to get a handle on the critical values across these three tests under shuffling, so we consider the following simple initial illustrative experiment to shed light on these tests.
Assuming an underlying RDPG model for each of the three networks, and using Friendfeed as our fulcrum (note that similar results are obtained using Twitter as the fulcrum network), we consider testing the simple parametric hypotheses under the effect of shuffling
  \begin{align}
   H_0^{(1)}&:\bA\sim\text{RDPG}(\bX_\bC)\quad\quad
   H_0^{(2)}:\bB\sim\text{RDPG}(\bX_\bC)\notag\\
   H_1^{(1)}&:\bA\not\sim\text{RDPG}(\bX_\bC)\quad\quad
   H_1^{(2)}:\bB\not\sim\text{RDPG}(\bX_\bC) \label{eq:h0abc}
   \end{align}
  Letting $\bP_\bullet$ represents the edge probability matrix for the corresponding social network, we compute
  $T(\bA,\bC)=\frac{1}{2}\|\hbP_\bA-\hbP_\bC\|_F^2$ (similarly $T(\bB,\bC)$ and $T(\bA,\bB)$),
where the embedding dimensions needed to compute $\hbP_\bullet=\hbX_\bullet\hbX^T_\bullet$ ($\hbX_\bullet$ being the ASE of the associated network) are each chosen via an automated elbow finder on the scree plot of $\textbf{A}, \textbf{B}, \textbf{C}$ inspired by \cite{zhu2006automatic} and \cite{chatterjee2015matrix}, and where we then set a common embedding dimension for the three networks to the max of these three estimated dimensions. 

We plot these initial findings in 
Figure \ref{fig:sn_frob}.
In the left (resp., right) panel of the figure, we plot the Frobenius norm difference of the estimated $\hbP$ between the same (resp., across) network as the number of shuffled vertex labels is increased within (resp., across) networks; the $x$-axis represents the number of vertices shuffled.
Note the different scales on the $y$-axes of the two panels in the figure.
From the figure, we see that although all network pairs differ significantly from each other, the FriendFeed and Twitter networks are more similar to each other (according to $T$) than either is to the Youtube network.
However, this is obscured given enough vertex shuffling as seen by the green curve crossing the blue/purple curves in the right panel.
Given enough uncertainty in the vertex labels,
we posit that a conservative test using $T$---i.e., one that must assume the uncertain labels are shuffled---would compute the FriendFeed and Twitter networks to be more similar to each other (if the uncertain labels are, in fact, correct and not shuffled under $H_1$) than either is to themselves, and so we should be less likely to reject $H_0^{(2)}$ of Eq. \ref{eq:h0abc}.  

\begin{figure}[t!]
\begin{flushleft}
\includegraphics[width=1\textwidth]{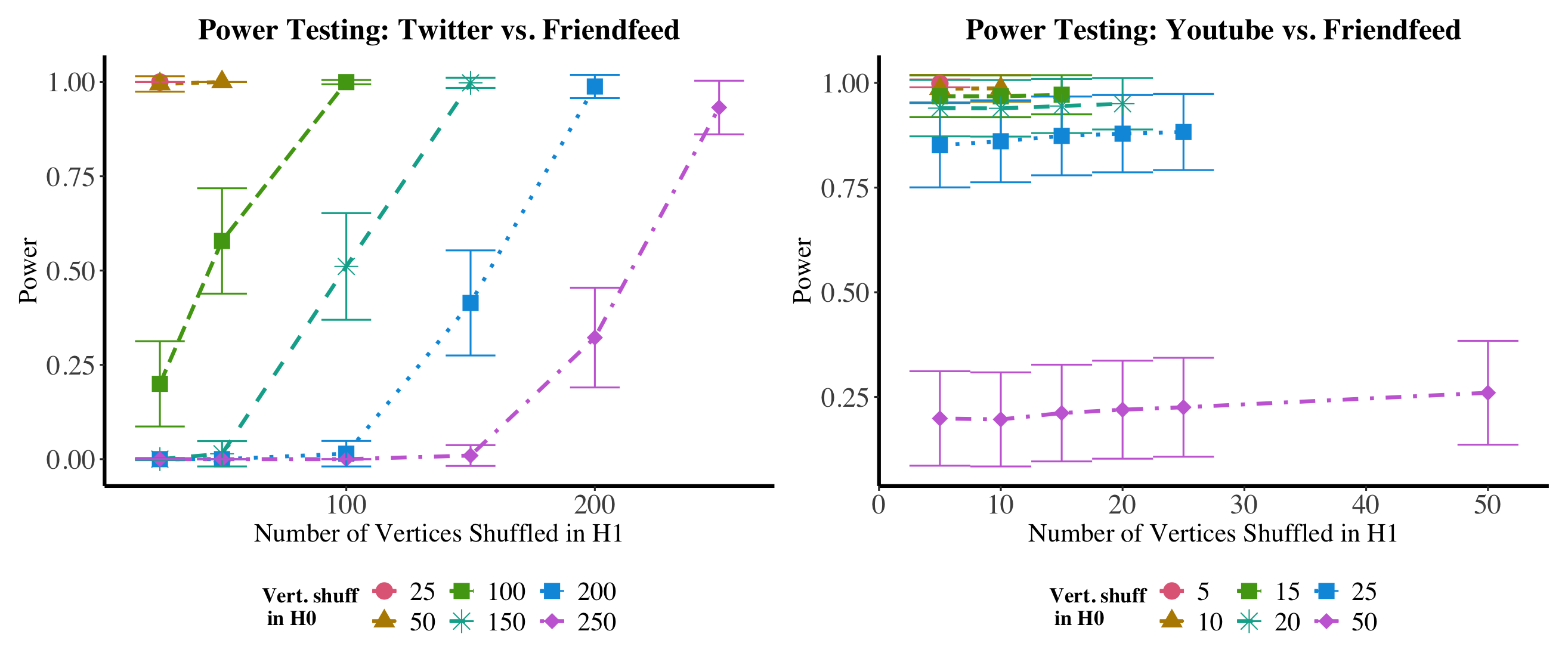}
\caption{Using a parametric bootstrap with 200 bootstrapped replicates to estimate testing power for $H_0^{(1)}$ and $H_0^{(2)}$.
The amount shuffled in the null is $k$; the different colored curves represent the different $k$-values. 
The amount shuffled in the alternative, $\ell\leq k$, is plotted on the $x$-axis. Here, the results are averaged over $nMC=50$ Monte Carlo replicates ($\pm2$s.e.).}
\label{fig:SN_boot}
\end{flushleft}
\end{figure}

We see this play out in Figure \ref{fig:SN_boot}, where we use a parametric bootstrap (assuming the RDPG model framework) with 200 bootstrapped replicates to estimate each of the critical values and the testing power in order to see the effect of vertex shuffling on the testing power for $H_0^{(1)}$ and $H_0^{(2)}$ of Eq. \ref{eq:h0abc}; note that as in section \ref{sec:motive} 
we consider the following
modification of the overall testing regime:  We consider a fixed (randomly chosen) sequence of nested permutations $\bQ_k\in\Pi_{n,k}$ for $k=25,\, 50,\, 100,\, 150,\, 200,\,250$ for $H^{(2)}$ and $k=5, 10, 15, 20, 25,\, 50$ for $H^{(1)}$.
We consider shuffling the null by $\bQ_k$ and the alternative by $\bQ_{\ell}$ for all $\ell\leq k$.
We then repeat this process $nMC=50$ times, each time obtaining a bootstrapped estimate  of testing power against $H_0$.
In the figure, we plot the average empirical testing power, where the amount shuffled in the conservative null is $k$ (so that $U_{n,k}$ is entirely shuffled); the different colored curves represent the different $k$-values. 
The amount actually shuffled in the alternative, $\ell\leq k$, is plotted on the $x$-axis.
From the figure, we see that the test would reject the null in both cases when few vertices are shuffled (i.e., a small $k$-value) or when $k-\ell$ is small in the Twitter versus Friendfeed panel; this is as expected from Figure \ref{fig:sn_frob}, as the networks all seem to differ significantly from each other.
However, testing power degrades significantly when $k-\ell$ is large in the Twitter versus Friendfeed setting and in the large $k$ setting for Youtube versus Friendfeed, and the test no longer rejects the null.
While the former is as expected---indeed, with enough uncertainty the small differences across the Twitter and Youtube networks is lost in the shuffle, even when the networks are different---the latter is surprising.
Further investigation yields that this power degradation as a function of $k$ alone stems from a difference in density, as the Youtube network is sparse and $\|\hbP_C-\hbP_A\textbf{}\|_F\approx\|\hbP_C\|_F$; in this case the error in shuffling under the null when $k$ is large overwhelms any effect of shuffling in the alternative.

If the sparse Youtube network is used as the fulcrum, then both power figures would have power equal to one uniformly.  The sparse Youtube graph is markedly different in degree from the two denser networks, and the shuffling does not affect that difference.  
In general, testing for a measure of equality in graphs with markedly different degree distributions would require a modified version of the hypotheses and test statistic, perhaps of the form (inspired by \cite{tang2017semiparametric}) $\|\widehat{\mathbf{P}}_1-c\widehat{\mathbf{P}}_2\|_F$
for a scaling constant $c$ (to test equality of latent positions up to rotation and constant scaling) or $\|\widehat{\mathbf{P}}_1-\mathbf{D}\widehat{\mathbf{P}}_2\mathbf{D}^T\|_F$ for a diagonal scaling matrix $\mathbf{D}$ (to test equality of latent positions up to rotation and diagonal scaling).

\section{Graph matching (unshuffling)}
\label{sec:GM}

Once we quantify the added uncertainty due to shuffles in the vertex correspondences, it is natural to try to remedy this shuffling via graph matching. 
In this context, the graph matching problem seeks the correspondence between the vertices of the graphs that minimizes the number of induced edge disagreements \cite{short_survey_GM}. 
Formally, letting $\Pi_n$ again denote the sets of $n \times n$ permutations matrices, the simplest formulation of the graph matching problem can be cast as seeking elements in $\argmin_{\textbf{Q} \in \Pi_n} \norm{\textbf{A} - \textbf{QBQ}^T}_F^2$.  
For a survey on the current state of the graph matching literature, see \cite{30yrs,10yrs,short_survey_GM}.

The problem of graph matching can often be made easier by having prior information about the true vertex correspondence.
This information can come in the form of seeds, or a list of vertices for which the true, latent correspondence is known a priori; see \cite{fang2018tractable,fishkind2019seeded,mossel2020seeded}.
In the current shuffled testing regime, only the vertices of $U_{n,k}$ have unknown correspondence, and so the graph matching problem would reduce to
seeking elements in $\argmin_{\textbf{Q} \in \Pi_{n,k}} \norm{\textbf{A} - \textbf{QBQ}^T}_F^2$; i.e., those vertices in $M_{n,k}$ can be treated as seeds.
While there are no efficient algorithms known for matching in general (with or without seeds), 
there are a number of approximate seeded graph matching algorithms (see, for example, \cite{fishkind2019seeded,mossel2020seeded})
that have proven effective in practice.
In applications below, we will make use of the \texttt{SGM} algorithm of \cite{fishkind2019seeded} to approximately solve the seeded graph matching problem.

\subsection{SBM Shuffling and Matching}
\label{sec:SGM_SBM}

\begin{figure}[t!]
\begin{center}
\includegraphics[width=1\textwidth]{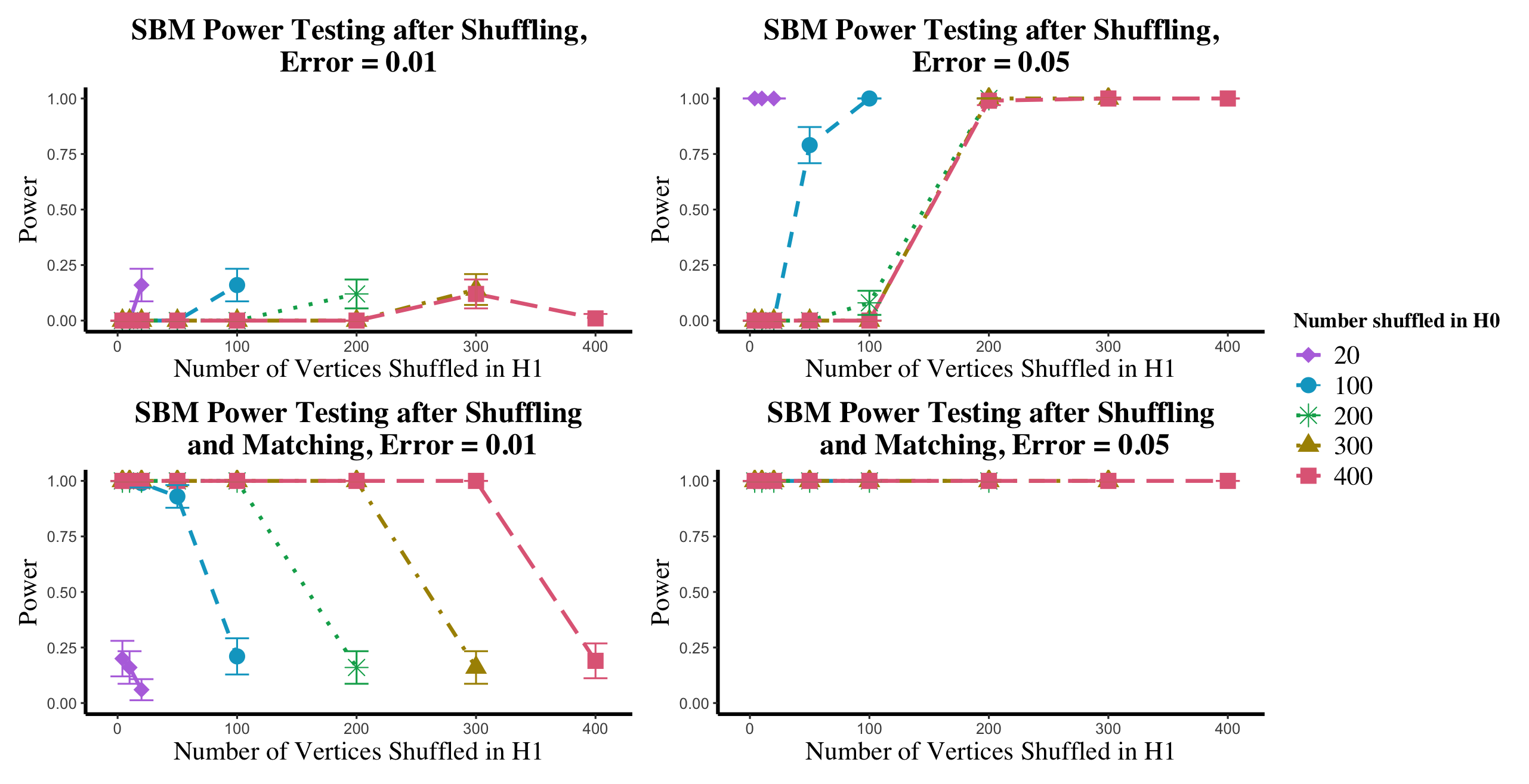}
\caption{(Top row) Shuffling different number of vertices across two blocks in an SBM 
model from Section \ref{sec:SGM_SBM} with $\epsilon = 0.01$, $0.05$
to calculate power for $n=500$ vertices.
(Bottom row) The power of the test after using \texttt{SGM} to match the $k$ vertices in $U_{n,k}$ under the null, versus the $\ell$ vertices truly shuffled in the alternative.
Results are averaged over 100 Monte Carlo replicates with error bars $\pm2$s.e.}
\label{fig:SBMShuffMatch-0.05}
\end{center}
\end{figure}
We first begin by testing the dual effects of shuffling and matching on testing power in the SBM setting (where, as in Section \ref{sec:Phat_Adj_eps}, the critical values and the power can be computed via direct simulation).
Adopting the notation of Section \ref{sec:SBM_test},
we first consider an SBM with $n_1=n_2=250$ vertices, where $\Lambda$ is given by:
$${\Lambda}= 
\begin{pmatrix}
0.55 & 0.4\\
0.4 & 0.45
\end{pmatrix}\in\mathbb{R}^{250\times 250}$$
Letting $\textbf{Q}_h$ ($h\leq k$, so that $h/2$ vertices are shuffled between each of the two blocks) be as defined in Section \ref{sec:SBM_test}, we consider the error matrix $\textbf{E}=\epsilon \mathbf{J}_n$; in this example we will consider $\epsilon \in \{0.01, 0.05\}$. 
In this setting, we will simulate directly (using the true model parameters) from the SBM models to estimate the relevant critical values and the testing powers; here, all critical values and power estimates are based on $nMC=100$ Monte Carlo replicates.

In Figures \ref{fig:SBMShuffMatch-0.05}, we plot (in the upper panels) the power loss due to shuffling in the $\epsilon=0.01$ and $0.05$ settings; in the bottom panels, we plot the effect of shuffling and then matching on testing power (where we first match the graphs, and then use the unshuffled statistic $\|\hbP_1-\hbP_2\|_F$ for the hypothesis test.  For each $k$, we would (in practice) use \texttt{SGM} to unshuffle \textit{all} $k$ vertices in $U_{n,k}$ regardless of the value of $\ell$---represented here by the unshuffling in the $\ell=k$ case.
Here, we also show the effect of unshuffling only the $\ell$ vertices truly shuffled in the alternative.
From the $k=\ell$ case, we see that matching recovers the lost shuffling power.  From the $\ell<k$ setting, we see that the (possible) downside of unshuffling is its propensity to align the two graphs better than the true, latent alignment.  In general, the more vertices being matched, the smaller the graph matching objective function, and hence the smaller the test statistic, which here manifests as larger than desired type-I error probability.
Note that in each ASE used to compute $\widehat\bP$ in this experiment, we used the true value of $d=2$.

\begin{figure}[t!]
\begin{center}
\includegraphics[width=1\textwidth]{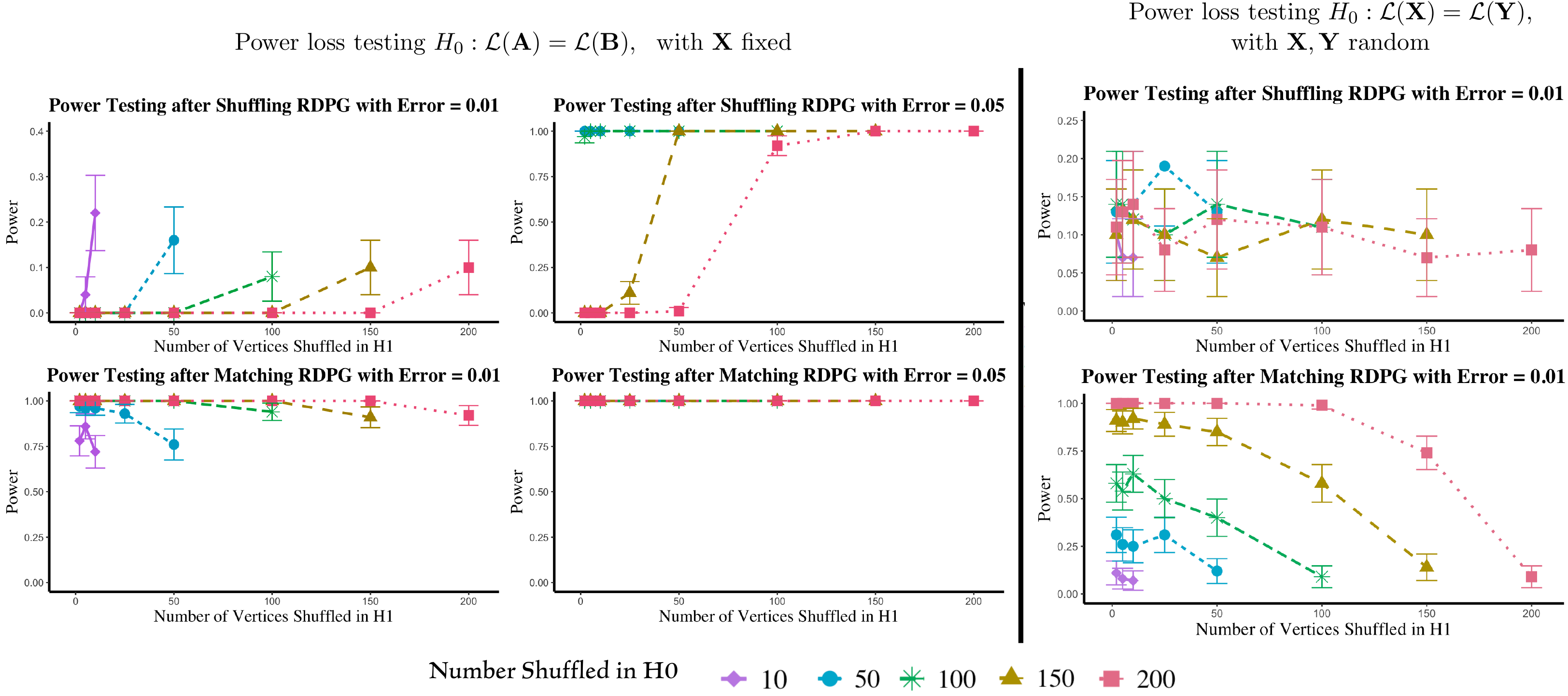}
\caption{(Top row) Shuffling different number of vertices across two blocks in an RDPG 
model from Section \ref{sec:ASE} with $\epsilon = 0.01$, and $0.05$
to calculate power for $n=500$ vertices.
(Bottom row) The power of the test after using \texttt{SGM} to match the $k$ vertices in $U_{n,k}$ under the null, versus the $\ell$ vertices truly shuffled in the alternative.
Error bars are $\pm2$s.e.}
\label{fig:RDPGshuff_unshuff}
\end{center}
\end{figure}
These results play out in more general random graph settings as well, with Figure 
\ref{fig:RDPGshuff_unshuff}, showing the corresponding results in the RDPG setting of Section \ref{sec:ASE}; i.e., each graph is $n=500$ vertices with the rows of $\bX$ are i.i.d. Dirichlet(1,1,1).  The error here is added directly to the latent positions, as $\widetilde \bX=\bX+\epsilon \mathbf{J}_{n,3}$ for $\epsilon=0.01$, and $0.05$.
In the left panels, where we are fixing the latent positions and testing $H_0:\mathcal{L}(A)=\mathcal{L}(B)$, the results are similar: unshuffling recovering the testing power, and over-matching providing artificially high testing power.
In the right panel, we show results for testing $H_0:\mathcal{L}(\bX)=\mathcal{L}(\bY)$ where the latent positions are random and the rows of $\bX$ are i.i.d.  In this case, shuffling has no effect on testing power (as seen in the upper panel), while over-matching can still detrimentally add artificial power (as seen in the lower panel).


\subsection{Brain networks shuffling and matching}
\label{sec:brain_shuff_unshuff}

We explore the dual effects of shuffling and matching on testing power in the motivating brain testing example of Section \ref{sec:motive} (again with 200 bootstrap samples and averaged over nMC=100 Monte Carlo trials).
In Figures \ref{fig:unshuffbrains}, we plot (in the left panel) the power of the test when the brains are shuffled, then matched before testing (the power here is estimated using the parametric bootstrap procedure outlined in Section \ref{sec:motive}); note, as we first match the graphs we use the unshuffled statistic for this hypothesis test, 
i.e., we use $T(\bA_1, \bA_2)=\|\widehat \bP_1 -\widehat \bP_2\|_F$ when estimating the null critical value, and $T(\bA_1, \bA_3) = \|\widehat \bP_1 -\widehat \bP_3\|_F$ when estimating the testing power.  
Again, we show the effect of unshuffling all $k$ unknown vertices in the null and only the $\ell$ vertices truly shuffled in the alternative.
As in the simulations, from the $k=\ell$ case, we see that matching recovers the lost shuffling power \emph{while} maintaining the desired testing level.  From the $\ell<k$ setting, we see that this recovered power is at the expense of testing level much greater than the desired $\alpha=0.05$.
In the event that the matching is viewed as a pre-processing step, matching more vertices in the null than the alternative could increase the true level of the test resulting in heightened type-I error risk (resulting in possibly artificially high testing power) as seen in the right panel of Figure \ref{fig:unshuffbrains}.

\begin{figure}[t!]
\begin{center}
\includegraphics[width=1\textwidth]{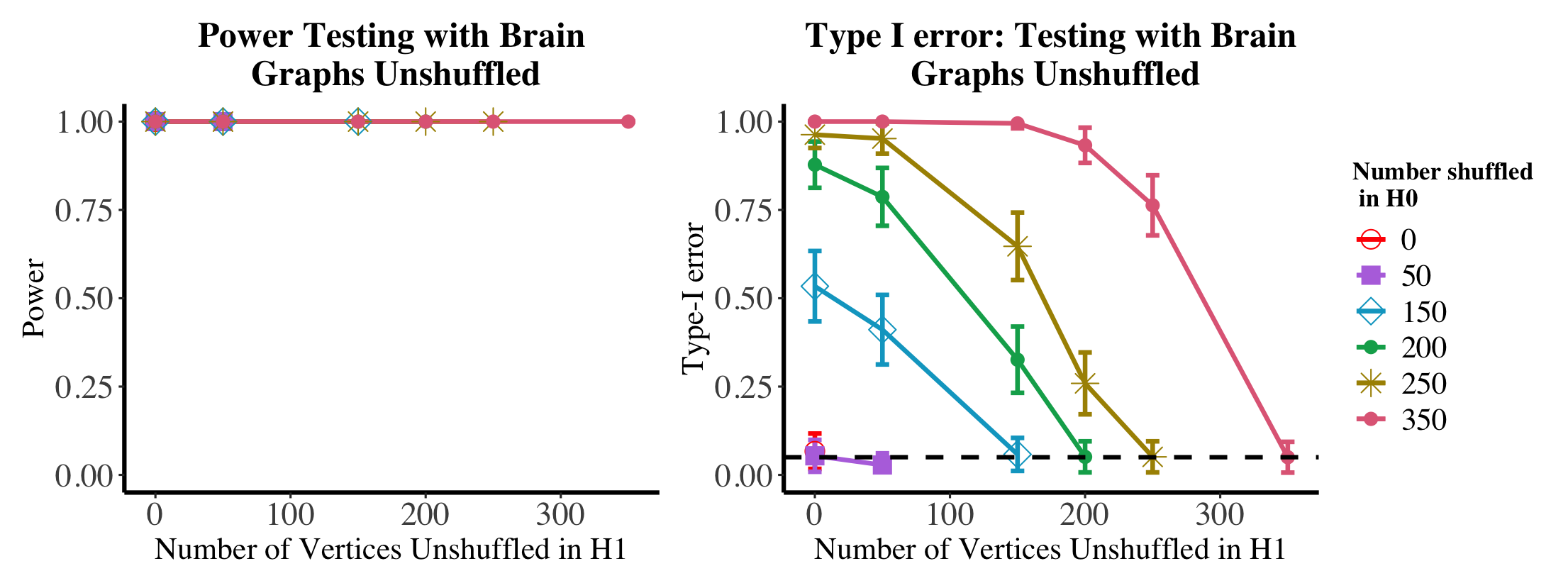}
\caption{Results for 200 bootstrap replicates to estimate power in the shuffled-then-unshuffled brain test.
In the figure the $x$-axis represents the number of vertices unshuffled via $\textbf{Q}_\ell$ (from $0$ to $k$) while the curve colors represent the maximum number of vertices potentially shuffled via $\Pi_{k,n}$, here all shuffled by $\bQ_k$.
The left panel displays testing power, the right type-I error with the dashed line at $y=0.05$; results are averaged over 100 Monte Carlo iterates (error bars are $\pm2$s.e.).}
\label{fig:unshuffbrains}
\end{center}
\end{figure}

\section{Conclusions}
\label{sec:conc}

As network data has become more commonplace, a number of statistical tools tailored for handling network data have been developed, including methods for hypothesis testing, goodness-of-fit analysis, clustering, and classification among others.  
Classically, 
many of the paired or multiple network inference tasks have assumed the vertices are aligned across networks, and this known alignment is often leveraged to improve subsequent inference.
Exploring the impact of shuffled/misaligned vertices on inferential performance---here on testing power---is an increasingly important problem as these methods gain more traction in noisy network domains.
By systematically breaking down a simple Frobenius-norm hypothesis test, we uncover and numerically analyze the decline of power as a function of both the distributional difference in the alternative and the number of shuffled nodes. 
Further analysis in a pair of real data settings reinforce our findings in giving practical guidance for the level of tolerable noise in paired, vertex-aligned testing procedures.

Our most thorough analysis of power loss is done in the context of random dot product and stochastic block models; bolstered by extensive simulation with real data experiments backing up our findings.  
While the goal of our research is to test the robustness of multiple network hypothesis testing methodologies, there still remains work to do in extending our findings to more general network models and to more complex network testing paradigms. 
A natural next step is to lift the simple Frobenius norm hypothesis test analysis to more broad and complex models, as well as test misalignment of vertices in the Omnibus and Semiparametric testing settings (see Section \ref{sec:ASE}).
Within the context of SBM's we aim to see how our power analysis changes for more esoteric shufflings (more blocks, different number flipped between blocks, etc.). 
Further extensions include extending the theory to non-edge independent graph models, and to shuffling in two sample tests where there are multiple graphs per sample, and where there is an interplay to explore in shuffling within and across populations.
In non-testing inference tasks, often the vertices are assumed aligned as well (e.g., tensor factorization, multiple graph embeddings, etc.), and exploring the inferential performance loss due to shuffled vertices in these settings is a natural next step.

In the event that vertex labels are incorrectly known, it is natural to use a graph matching/network alignment  methods to align the networks before proceeding with subsequent inference.
There are a host of matching procedures in the literature that could be applied to recover the true vertex alignment, and in doing so recover the lost testing power \cite{lyzinski2018information}.
We explore this in the context of our simulations and find that while matching recovers the lost power, ``over matching'' can result in artificially high power.  Care must be taken when using alignment tools for data pre-processing, as graph matching methods can induce artificial signal across even disparate network pairs (this is related to the ``phantom alignment strength'' phenomena of \cite{fishkind2021phantom}). 
Natural next steps in this direction include the following questions (among others): how the signal in an imperfectly recovered matching affects power loss as opposed to a random misalignment; 
how a probabilistic alignment (where the unknown in vertex labels is encoded into a stochastic matrix giving probabilities of alignment) can be incorporated into the testing framework; and how to use matching metrics (e.g., alignment strength \cite{fishkind2019alignment,fishkind2021phantom}) to estimate the size and membership of $U_{n,k}$ when this is unknown a priori.

\vspace{5mm}
\noindent\textbf{Acknowledgements:}
This material is based on research sponsored by the Air Force Research
Laboratory (AFRL) and Defense Advanced Research Projects Agency
(DARPA) under agreement number FA8750-20-2-1001. The U.S. Government is authorized to reproduce and distribute reprints for Governmental
purposes notwithstanding any copyright notation thereon. The views and
conclusions contained herein are those of the authors and should not be
interpreted as necessarily representing the official policies or endorsements,
either expressed or implied, of the AFRL and DARPA or the U.S.
Government.

 \bibliographystyle{elsarticle-num} 
 \bibliography{cite.bib}

\begin{thebibliography}{10}
\expandafter\ifx\csname url\endcsname\relax
  \def\url#1{\texttt{#1}}\fi
\expandafter\ifx\csname urlprefix\endcsname\relax\def\urlprefix{URL }\fi
\expandafter\ifx\csname href\endcsname\relax
  \def\href#1#2{#2} \def\path#1{#1}\fi

\bibitem{Goldenberg2009}
A.~Goldenberg, A.~X. Zheng, S.~E. Fienberg, E.~M. Airoldi, {A survey of
  statistical network models}, Found. Trends Mach. Learn. 2~(2) (2009)
  129--233.
\newblock \href {http://arxiv.org/abs/0912.5410} {\path{arXiv:0912.5410}},
  \href {https://doi.org/10.1561/2200000005} {\path{doi:10.1561/2200000005}}.

\bibitem{bullmore2009complex}
E.~Bullmore, O.~Sporns, Complex brain networks: graph theoretical analysis of
  structural and functional systems, Nature reviews neuroscience 10~(3) (2009)
  186--198.

\bibitem{durante2018bayesian}
D.~Durante, D.~B. Dunson, Bayesian inference and testing of group differences
  in brain networks, Bayesian Analysis 13~(1) (2018) 29--58.

\bibitem{chung2021statistical}
J.~Chung, E.~Bridgeford, J.~Arroyo, B.~D. Pedigo, A.~Saad-Eldin,
  V.~Gopalakrishnan, L.~Xiang, C.~E. Priebe, J.~T. Vogelstein, Statistical
  connectomics, Annu. Rev. Stat. Appl. 8 (2021) 463--492.

\bibitem{mitchell1974social}
J.~C. Mitchell, Social networks, Annual review of anthropology 3~(1) (1974)
  279--299.

\bibitem{carrington2005models}
P.~J. Carrington, J.~Scott, S.~Wasserman, Models and methods in social network
  analysis, Vol.~28, Cambridge university press, 2005.

\bibitem{vazquez2003global}
A.~Vazquez, A.~Flammini, A.~Maritan, A.~Vespignani, Global protein function
  prediction from protein-protein interaction networks, Nature biotechnology
  21~(6) (2003) 697--700.

\bibitem{temkin2020chemical}
O.~N. Temkin, A.~V. Zeigarnik, D.~Bonchev, Chemical reaction networks: a
  graph-theoretical approach, CRC Press, 2020.

\bibitem{kolaczyk2009statistical}
E.~D. Kolaczyk, Statistical Analysis of Network Data: Methods and Models,
  Springer Science \& Business Media, 2009.

\bibitem{kolaczyk2014statistical}
E.~D. Kolaczyk, G.~Cs{\'a}rdi, Statistical analysis of network data with R,
  Vol.~65, Springer, 2014.

\bibitem{tang2017semiparametric}
M.~Tang, A.~Athreya, D.~L. Sussman, V.~Lyzinski, Y.~Park, C.~E. Priebe, A
  semiparametric two-sample hypothesis testing problem for random graphs,
  Journ. of Computational and Graphical Statistics 26~(2) (2017) 344--354.

\bibitem{Tang2017}
M.~Tang, A.~Athreya, D.~L. Sussman, V.~Lyzinski, C.~E. Priebe, {A nonparametric
  two-sample hypothesis testing problem for random graphs}, Bernoulli 23~(3)
  (2017) 1599--1630.
\newblock \href {http://arxiv.org/abs/arXiv:1409.2344}
  {\path{arXiv:arXiv:1409.2344}}, \href {https://doi.org/10.3150/15-BEJ789}
  {\path{doi:10.3150/15-BEJ789}}.

\bibitem{testing1_2017}
C.~E. Ginestet, J.~Li, P.~Balachandran, S.~Rosenberg, E.~D. Kolaczyk,
  Hypothesis testing for network data in functional neuroimaging, Ann. Appl.
  Stat. (2017) 725--750.

\bibitem{goodness1}
D.~R. Hunter, S.~M. Goodreau, M.~S. Handcock, Goodness of fit of social network
  models, J. Am. Stat. Assoc. 103~(481) (2008) 248--258.

\bibitem{goodness2}
J.~Lei, A goodness-of-fit test for stochastic block models, Ann. Stat. 44~(1)
  (2016) 401--424.

\bibitem{GoF_ModSel}
Y.~R. Wang, P.~J. Bickel, Likelihood-based model selection for stochastic block
  models, Ann. Stat. 45~(2) (2017) 500--528.

\bibitem{newman2001clustering}
M.~E. Newman, Clustering and preferential attachment in growing networks,
  Physical review E 64~(2) (2001) 025102.

\bibitem{clauset2004finding}
A.~Clauset, M.~E. Newman, C.~Moore, Finding community structure in very large
  networks, Physical review E 70~(6) (2004) 066111.

\bibitem{blondel2008fast}
V.~D. Blondel, J.-L. Guillaume, R.~Lambiotte, E.~Lefebvre, Fast unfolding of
  communities in large networks, Journ. of statistical mechanics: theory and
  experiment 2008~(10) (2008) P10008.

\bibitem{rohe2011spectral}
K.~Rohe, S.~Chatterjee, B.~Yu, Spectral clustering and the high-dimensional
  stochastic blockmodel, Ann. Stat. 39~(4) (2011) 1878--1915.

\bibitem{Sussman2012}
D.~L. Sussman, M.~Tang, D.~E. Fishkind, C.~E. Priebe, {A consistent adjacency
  spectral embedding for stochastic blockmodel graphs}, J. Am. Stat. Assoc.
  107~(499) (2012) 1119--1128.
\newblock \href {http://arxiv.org/abs/1108.2228} {\path{arXiv:1108.2228}},
  \href {https://doi.org/10.1080/01621459.2012.699795}
  {\path{doi:10.1080/01621459.2012.699795}}.

\bibitem{lei2015consistency}
J.~Lei, A.~Rinaldo, Consistency of spectral clustering in stochastic block
  models, Ann. Stat. 43~(1) (2015) 215--237.

\bibitem{vogelstein2015shuffled}
J.~T. Vogelstein, C.~E. Priebe, Shuffled graph classification: Theory and
  connectome applications, J. Classif. 32~(1) (2015) 3--20.

\bibitem{tang2013universally}
M.~Tang, D.~L. Sussman, C.~E. Priebe, Universally consistent vertex
  classification for latent positions graphs, Ann. Stat. 41~(3) (2013)
  1406--1430.

\bibitem{lyzinski2016community}
V.~Lyzinski, M.~Tang, A.~Athreya, Y.~Park, C.~E. Priebe, Community detection
  and classification in hierarchical stochastic blockmodels, IEEE Trans. on
  Network Science and Engineering 4~(1) (2016) 13--26.

\bibitem{zhang2018end}
M.~Zhang, Z.~Cui, M.~Neumann, Y.~Chen, An end-to-end deep learning architecture
  for graph classification, in: Thirty-Second AAAI Conf. on Artificial
  Intelligence, 2018.

\bibitem{durante2017nonparametric}
D.~Durante, D.~B. Dunson, J.~T. Vogelstein, Nonparametric bayes modeling of
  populations of networks, Journal of the American Statistical Association
  112~(520) (2017) 1516--1530.

\bibitem{Arroyo2019}
J.~Arroyo, A.~Athreya, J.~Cape, G.~Chen, C.~E. Priebe, J.~T. Vogelstein,
  \href{http://arxiv.org/abs/1906.10026}{{Inference for multiple heterogeneous
  networks with a common invariant subspace}} (2019) 1--45\href
  {http://arxiv.org/abs/1906.10026} {\path{arXiv:1906.10026}}.
\newline\urlprefix\url{http://arxiv.org/abs/1906.10026}

\bibitem{lyzinski2020matchability}
V.~Lyzinski, D.~L. Sussman, Matchability of heterogeneous networks pairs,
  Information and Inference: A Journal of the IMA 9~(4) (2020) 749--783.

\bibitem{viswanath2009evolution}
B.~Viswanath, A.~Mislove, M.~Cha, K.~P. Gummadi, On the evolution of user
  interaction in facebook, in: Proceedings of the 2nd ACM workshop on Online
  social networks, 2009, pp. 37--42.

\bibitem{heimann2018regal}
M.~Heimann, H.~Shen, T.~Safavi, D.~Koutra, Regal: Representation learning-based
  graph alignment, in: Proceedings of the 27th ACM international conference on
  information and knowledge management, 2018, pp. 117--126.

\bibitem{magnani2011ml}
M.~Magnani, L.~Rossi, The ml-model for multi-layer social networks, in: Inter.
  Conf. Adv. Soc Netw Anal Min, IEEE, 2011, pp. 5--12.

\bibitem{patsolic2020vertex}
H.~G. Patsolic, Y.~Park, V.~Lyzinski, C.~E. Priebe, Vertex nomination via
  seeded graph matching, Statistical Analysis and Data Mining: The ASA Data
  Science Journal 13~(3) (2020) 229--244.

\bibitem{mastrandrea2015contact}
R.~Mastrandrea, J.~Fournet, A.~Barrat, Contact patterns in a high school: a
  comparison between data collected using wearable sensors, contact diaries and
  friendship surveys, PloS one 10~(9) (2015) e0136497.

\bibitem{30yrs}
D.~Conte, P.~Foggia, C.~Sansone, M.~Vento, Thirty years of graph matching in
  pattern recognition, Intern J Pattern Recognit Artif Intell 18~(03) (2004)
  265--298.

\bibitem{10yrs}
P.~Foggia, G.~Percannella, M.~Vento, Graph matching and learning in pattern
  recognition in the last 10 years, Intern J Pattern Recognit Artif Intell
  28~(01) (2014) 1450001.

\bibitem{short_survey_GM}
J.~Yan, X.-C. Yin, W.~Lin, C.~Deng, H.~Zha, X.~Yang, A short survey of recent
  advances in graph matching, in: Proceedings of the 2016 ACM on International
  Conf. on Multimedia Retrieval, 2016, pp. 167--174.

\bibitem{pedigo2023bisected}
B.~D. Pedigo, M.~Winding, C.~E. Priebe, J.~T. Vogelstein, Bisected graph
  matching improves automated pairing of bilaterally homologous neurons from
  connectomes, Network Neuroscience 7~(2) (2023) 522--538.

\bibitem{fiori2013robust}
M.~Fiori, P.~Sprechmann, J.~Vogelstein, P.~Mus{\'e}, G.~Sapiro, Robust
  multimodal graph matching: Sparse coding meets graph matching, Advances in
  neural information processing systems 26 (2013).

\bibitem{lyzinski2018information}
V.~Lyzinski, Information recovery in shuffled graphs via graph matching, IEEE
  Trans. on Information Theory 64~(5) (2018) 3254--3273.

\bibitem{chen2020multiple}
G.~Chen, J.~Arroyo, A.~Athreya, J.~Cape, J.~T. Vogelstein, Y.~Park, C.~White,
  J.~Larson, W.~Yang, C.~E. Priebe, Multiple network embedding for anomaly
  detection in time series of graphs, arXiv preprint arXiv:2008.10055 (2020).

\bibitem{chen2023hypothesis}
L.~Chen, J.~Zhou, L.~Lin, Hypothesis testing for populations of networks,
  Communications in Statistics-Theory and Methods 52~(11) (2023) 3661--3684.

\bibitem{du2023hypothesis}
X.~Du, M.~Tang, Hypothesis testing for equality of latent positions in random
  graphs, Bernoulli 29~(4) (2023) 3221--3254.

\bibitem{asta2014geometric}
D.~Asta, C.~R. Shalizi, Geometric network comparison, Proceedings of the 31st
  Annual Conference on Uncertainty in AI (UAI) (2015).

\bibitem{levin2017central}
K.~Levin, A.~Athreya, M.~Tang, V.~Lyzinski, C.~E. Priebe, A central limit
  theorem for an omnibus embedding of multiple random dot product graphs, 2017
  IEEE inter. conf. on data mining workshops (2017) 964--967.

\bibitem{agterberg2020nonparametric}
J.~Agterberg, M.~Tang, C.~Priebe, Nonparametric two-sample hypothesis testing
  for random graphs with negative and repeated eigenvalues, arXiv preprint
  arXiv:2012.09828 (2020).

\bibitem{fishkind2019seeded}
D.~E. Fishkind, S.~Adali, H.~G. Patsolic, L.~Meng, D.~Singh, V.~Lyzinski, C.~E.
  Priebe, Seeded graph matching, Pattern Recognit. 87 (2019) 203--215.

\bibitem{coppersmith2014vertex}
G.~Coppersmith, Vertex nomination, Wiley Interdisciplinary Reviews:
  Computational Statistics 6~(2) (2014) 144--153.

\bibitem{fishkind2015vertex}
D.~E. Fishkind, V.~Lyzinski, H.~Pao, L.~Chen, C.~E. Priebe, Vertex nomination
  schemes for membership prediction, Ann. Appl. Stat. 9~(3) (2015) 1510--1532.

\bibitem{fishkind2019alignment}
D.~E. Fishkind, L.~Meng, A.~Sun, C.~E. Priebe, V.~Lyzinski, Alignment strength
  and correlation for graphs, Pattern Recognit. Lett. 125 (2019) 295--302.

\bibitem{levin2019bootstrapping}
K.~Levin, E.~Levina, Bootstrapping networks with latent space structure, arXiv
  preprint arXiv:1907.10821 (2019).

\bibitem{zuo2014open}
X.~Zuo, J.~S. Anderson, P.~Bellec, R.~M. Birn, B.~B. Biswal, J.~Blautzik, J.~C.
  Breitner, R.~L. Buckner, V.~D. Calhoun, F.~X. Castellanos, et~al., An open
  science resource for establishing reliability and reproducibility in
  functional connectomics, Scientific data 1~(1) (2014) 1--13.

\bibitem{kiar2017high}
G.~Kiar, E.~W. Bridgeford, W.~R.~G. Roncal, C.~for Reliability, R.~(CoRR),
  V.~Chandrashekhar, D.~Mhembere, S.~Ryman, X.~Zuo, D.~S. Margulies, R.~C.
  Craddock, et~al., A high-throughput pipeline identifies robust connectomes
  but troublesome variability, bioRxiv (2017) 188706.

\bibitem{young2007random}
S.~J. Young, E.~R. Scheinerman, Random dot product graph models for social
  networks, in: International Workshop on Algorithms and Models for the
  Web-Graph, Springer, 2007, pp. 138--149.

\bibitem{Athreya2018}
A.~Athreya, D.~E. Fishkind, M.~Tang, C.~E. Priebe, Y.~Park, J.~T. Vogelstein,
  K.~Levin, V.~Lyzinski, Y.~Qin, D.~L. Sussman, {Statistical inference on
  random dot product graphs: A survey}, J. Mach. Learn. Res. 18 (2018) 1--92.

\bibitem{hoff2002latent}
P.~D. Hoff, A.~E. Raftery, M.~S. Handcock, Latent space approaches to social
  network analysis, J. Am. Stat. Assoc. 97~(460) (2002) 1090--1098.

\bibitem{holland1983stochastic}
P.~W. Holland, K.~B. Laskey, S.~Leinhardt, Stochastic blockmodels: First steps,
  Social networks 5~(2) (1983) 109--137.

\bibitem{airoldi2008mixed}
E.~M. Airoldi, D.~M. Blei, S.~E. Fienberg, E.~P. Xing, Mixed membership
  stochastic blockmodels, J. Mach. Learn. Res. (2008).

\bibitem{karrer2011stochastic}
B.~Karrer, M.~E.~J. Newman, Stochastic blockmodels and community structure in
  networks, Physical review E 83~(1) (2011) 016107.

\bibitem{li2020hierarchical}
T.~Li, L.~Lei, S.~Bhattacharyya, K.~Van~den Berge, P.~Sarkar, P.~J. Bickel,
  E.~Levina, Hierarchical community detection by recursive partitioning, J. Am.
  Stat. Assoc. (2020) 1--18.

\bibitem{Rubin-Delanchy2017}
P.~Rubin-Delanchy, J.~Cape, M.~Tang, C.~E. Priebe, A statistical interpretation
  of spectral embedding: the generalised random dot product graph, J. R. Stat.
  Soc. Series B to appear (2022).

\bibitem{Sussman2014}
D.~L. Sussman, M.~Tang, C.~E. Priebe, {Consistent latent position estimation
  and vertex classification for random dot product graphs}, IEEE Trans. Pattern
  Anal. Mach. Intell. 36~(1) (2014) 48--57.
\newblock \href {http://arxiv.org/abs/arXiv:1207.6745v1}
  {\path{arXiv:arXiv:1207.6745v1}}, \href
  {https://doi.org/10.1109/TPAMI.2013.135} {\path{doi:10.1109/TPAMI.2013.135}}.

\bibitem{gallagher2023spectral}
I.~Gallagher, A.~Jones, A.~Bertiger, C.~E. Priebe, P.~Rubin-Delanchy, Spectral
  embedding of weighted graphs, Journal of the American Statistical Association
  (2023) 1--10.

\bibitem{Lyzinski2014}
V.~Lyzinski, D.~L. Sussman, M.~Tang, A.~Athreya, C.~E. Priebe, {Perfect
  clustering for stochastic blockmodel graphs via adjacency spectral
  embedding}, Electron. J. Stat. 8 (2014) 2905--2922.
\newblock \href {http://arxiv.org/abs/1310.0532} {\path{arXiv:1310.0532}},
  \href {https://doi.org/10.1214/14-EJS978} {\path{doi:10.1214/14-EJS978}}.

\bibitem{Athreya2013}
A.~Athreya, C.~E. Priebe, M.~Tang, V.~Lyzinski, D.~J. Marchette, D.~L. Sussman,
  A limit theorem for scaled eigenvectors of random dot product graphs, Sankhya
  A 78 (2016) 1--18.

\bibitem{sanna2021spectral}
F.~Sanna~Passino, N.~A. Heard, P.~Rubin-Delanchy, Spectral clustering on
  spherical coordinates under the degree-corrected stochastic blockmodel,
  Technometrics~(just-accepted) (2021) 1--28.

\bibitem{Pantazis2020}
K.~Pantazis, A.~Athreya, W.~N. Frost, E.~S. Hill, V.~Lyzinski, The importance
  of being correlated: Implications of dependence in joint spectral inference
  across multiple networks, Journal of Machine Learning Research 23~(141)
  (2022) 1--77.

\bibitem{yoder2020vertex}
J.~Yoder, L.~Chen, H.~Pao, E.~Bridgeford, K.~Levin, D.~E. Fishkind, C.~Priebe,
  V.~Lyzinski, Vertex nomination: The canonical sampling and the extended
  spectral nomination schemes, Computational Statistics \& Data Analysis 145
  (2020) 106916.

\bibitem{fishkind2013consistent}
D.~E. Fishkind, D.~L. Sussman, M.~Tang, J.~T. Vogelstein, C.~E. Priebe,
  Consistent adjacency-spectral partitioning for the stochastic block model
  when the model parameters are unknown, SIAM Journ. on Matrix Analysis and
  Applications 34~(1) (2013) 23--39.

\bibitem{chatterjee2015matrix}
S.~Chatterjee, Matrix estimation by universal singular value thresholding, Ann.
  Stat. 43~(1) (2015) 177--214.

\bibitem{li2020network}
T.~Li, E.~Levina, J.~Zhu, Network cross-validation by edge sampling, Biometrika
  107~(2) (2020) 257--276.

\bibitem{zhu2006automatic}
M.~Zhu, A.~Ghodsi, Automatic dimensionality selection from the scree plot via
  the use of profile likelihood, Computational Statistics \& Data Analysis
  51~(2) (2006) 918--930.

\bibitem{Draves2020}
B.~Draves, D.~L. Sussman, Bias-variance tradeoffs in joint spectral embeddings,
  arXiv preprint arXiv:2005.02511 (2020).

\bibitem{levin2017central2}
K.~Levin, A.~Athreya, M.~Tang, V.~Lyzinski, Y.~Park, C.~E. Priebe, A central
  limit theorem for an omnibus embedding of multiple random graphs and
  implications for multiscale network inference, arXiv preprint
  arXiv:1705.09355 (2017).

\bibitem{fang2018tractable}
F.~Fang, D.~L. Sussman, V.~Lyzinski, Tractable graph matching via soft seeding,
  arXiv preprint arXiv:1807.09299 (2018).

\bibitem{mossel2020seeded}
E.~Mossel, J.~Xu, Seeded graph matching via large neighborhood statistics,
  Random Structures \& Algorithms 57~(3) (2020) 570--611.

\bibitem{fishkind2021phantom}
D.~E. Fishkind, F.~Parker, H.~Sawczuk, L.~Meng, E.~Bridgeford, A.~Athreya,
  C.~Priebe, V.~Lyzinski, The phantom alignment strength conjecture: practical
  use of graph matching alignment strength to indicate a meaningful graph
  match, Appl. Netw. Sci. 6~(1) (2021) 1--27.

\bibitem{stein1986approximate}
C.~Stein, Approximate computation of expectations, IMS, 1986.

\bibitem{ross2011fundamentals}
N.~Ross, Fundamentals of {S}tein’s method, Probability Surveys 8 (2011)
  210--293.

\end{thebibliography}
 
\newpage

\appendix



\section{Proofs of main and supporting results}
Herein, we collect the proofs of the main and supporting results from the paper.
We first present a useful theorem and corollary that will be used throughout.

In the shuffled testing analysis that we consider herein, we will use the following ASE consistency result from \cite{Rubin-Delanchy2017,du2023hypothesis}.
Note that we write here that a random variable $X$ is $\op(g(n))$ if for any constant $A>0$, there exists $n_0\in\mathbb{Z}^+$ and constant $B>0$ (both possibly depending on $A$) such that $\mathbb{P}(|X|\leq Bg(n))\geq 1-n^{-A}$ for all $n\geq n_0$.
\begin{theorem}
\label{thm:asecon}
Given Assumption \ref{ass:ass1}, let $\bA_n\sim RDPG(\bX_n,\nu_n)$ be a sequence of $d$-dimensional RDPGs, and let the adjacency spectral embedding of $\bA_n$ be given by $\hbX_n\sim$ASE$(\bA_n,d)$.
There exists a sequence of
orthogonal matrices $\mathbf{W}_n\in\mathcal{O}_d$ 
and a universal constant $c>0$ such that if the sparsity factor $\nu_n=\omega\left(\frac{\log^{4c}n}{n}\right)$, then (suppressing the dependence of $\bX$ and $\hbX$ on $n$)
\begin{equation}
\label{eq:2inf}
\max_{i=1,2,\ldots,n}\|\mathbf{W}_n\widehat X_i-\nu_n^{1/2}X_i\|=O_{\mathbb{P}}\left( \frac{\log^{c}n }{n^{1/2}}\right)
\end{equation}
\end{theorem}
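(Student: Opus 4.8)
The plan is to reduce this to the established two-to-infinity ASE concentration theory of \cite{Rubin-Delanchy2017} (and its sparse adaptation in \cite{du2023hypothesis}) by verifying that Assumption \ref{ass:ass1} supplies exactly the spectral regularity those results require. First I would fix notation: write $\bP=\nu\bX\bX^T=\mathbb{E}(\bA\mid\bX)$, so that the scaled true latent positions $\nu^{1/2}\bX$ satisfy $\bP=(\nu^{1/2}\bX)(\nu^{1/2}\bX)^T$, and let $\bP=\mathbf{U}_P\mathbf{S}_P\mathbf{U}_P^T$ be the rank-$d$ eigendecomposition. Under Assumption \ref{ass:ass1} the rows of $\bX$ lie in a fixed compact set and $\bX$ is rank $d$ with balanced singular values, so the $d$ nonzero eigenvalues of $\bP$ all satisfy $\lambda_k(\bP)=\Theta(n\nu)$; in particular $\|\mathbf{U}_P\|_{2\to\infty}=O(n^{-1/2})$ and $\mathbf{S}_P^{-1}$ has operator norm $\Theta((n\nu)^{-1})$. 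There is an orthogonal $\bW_X\in\mathcal{O}_d$ with $\nu^{1/2}\bX=\mathbf{U}_P\mathbf{S}_P^{1/2}\bW_X$, which is where the eventual rotation $\bW_n$ originates.

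The analytic core has three steps. (i) Spectral concentration: a matrix Bernstein bound gives $\|\bA-\bP\|=\op(\sqrt{n\nu})$, valid precisely because the hypothesis $\nu=\omega(\log^{4c}n/n)$ forces $n\nu=\omega(\log^{4c}n)$, so the independent bounded entries concentrate with the claimed high probability. (ii) Subspace perturbation: since the eigengap of $\bP$ is $\Theta(n\nu)$ and the perturbation is of order $\sqrt{n\nu}$, a Davis--Kahan argument produces an orthogonal $\mathbf{W}_1$ with $\|\mathbf{U}_A\mathbf{W}_1-\mathbf{U}_P\|=\op((n\nu)^{-1/2})$. (iii) Promotion to the row-wise norm: expand
\begin{equation*}
\hbX\bW_n-\nu^{1/2}\bX=(\bA-\bP)\mathbf{U}_P\mathbf{S}_P^{-1/2}\bW_X+\mathbf{R},
\end{equation*}
where $\mathbf{R}$ collects the higher-order remainder. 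I would control the leading term row by row: row $i$ of $(\bA-\bP)\mathbf{U}_P$ is a sum of independent mean-zero bounded variables with coordinate variance $O(\nu)$ (using $P_{ij}=O(\nu)$ from bounded rows), so a Bernstein estimate together with a union bound over the $n$ rows gives $\|(\bA-\bP)\mathbf{U}_P\|_{2\to\infty}=\op(\sqrt{\nu\log^{2c}n})$; multiplying by $\|\mathbf{S}_P^{-1/2}\|=\Theta((n\nu)^{-1/2})$ yields $\op(\sqrt{\log^{2c}n/n})=\op(\log^c n/n^{1/2})$, matching the target rate. The universal constant $c$ is then chosen large enough that the logarithmic factors appearing in every high-probability event above are absorbed.

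The hard part will be step (iii), and specifically bounding the remainder $\mathbf{R}$ in the two-to-infinity norm. The difficulty is that the empirical eigenvectors $\mathbf{U}_A$ depend on the entire matrix $\bA$, so cross terms such as $(\bA-\bP)(\mathbf{U}_A\mathbf{W}_1-\mathbf{U}_P)$ cannot be dispatched by naive independence. I would resolve this exactly as in \cite{Rubin-Delanchy2017,du2023hypothesis}, via a leave-one-out decoupling (comparing $\mathbf{U}_A$ to the eigenvectors of the matrix with the $i$-th row and column resampled) combined with the two-to-infinity singular-subspace machinery; these show each remainder contribution is $\op(\log^c n/n^{1/2})$ or smaller, reusing the eigengap and concentration inputs of steps (i)--(ii). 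Finally I would set $\bW_n$ to be the product of $\mathbf{W}_1$ and the alignment $\bW_X$, and read off the stated bound after noting that the zero-diagonal (hollowness) correction to $\bP$ perturbs all of these estimates only at lower order.
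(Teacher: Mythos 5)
There is an important scope mismatch to flag first: the paper does not prove Theorem \ref{thm:asecon} at all. It is imported as a known consistency result from \cite{Rubin-Delanchy2017,du2023hypothesis}, and the appendix's only original work at this point is the short derivation of the Frobenius-norm bound in \hyperref[cor:P-phat]{Corollary A.1} from Eq. \ref{eq:2inf}. So there is no in-paper argument to compare yours against; the relevant benchmark is the proofs in the cited literature. Measured against those, your sketch is a faithful reconstruction of the standard route: (i) spectral concentration $\|\bA-\bP\|=\op(\sqrt{n\nu})$, which the hypothesis $n\nu_n=\omega(\log^{4c}n)$ makes available; (ii) Davis--Kahan subspace perturbation using the $\Theta(n\nu)$ eigengap supplied by Assumption \ref{ass:ass1}; (iii) the first-order expansion with leading term $(\bA-\bP)\mathbf{U}_P\mathbf{S}_P^{-1/2}$, controlled row-wise by Bernstein plus a union bound over the $n$ rows. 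Your arithmetic in step (iii) is right --- $\sqrt{\nu\log^{2c}n}\cdot\Theta\bigl((n\nu)^{-1/2}\bigr)=\log^{c}n/n^{1/2}$ --- as are the supporting observations that $\|\mathbf{U}_P\|_{2\to\infty}=O(n^{-1/2})$, that $P_{ij}=O(\nu)$ under the compactness assumption, and that the hollowness (zero-diagonal) correction enters only at lower order. The one caveat worth stating plainly: the genuinely hard content of the theorem is the two-to-infinity control of the remainder $\mathbf{R}$, and your proposal discharges it by invoking the leave-one-out decoupling and singular-subspace machinery of \cite{Rubin-Delanchy2017,du2023hypothesis} rather than carrying it out. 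That is not a gap relative to this paper---the paper defers the entire theorem to those references---but you should be aware that your write-up, like the paper's treatment, ultimately rests on a citation for its hardest step rather than being self-contained.
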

From Eq. \ref{eq:2inf}, we can derive the following rough (though sufficient for our present needs) estimation bound on $\|\hbP-\bP\|_F$.
\begin{cor}
\label{cor:P-phat}
With notation and assumptions as in \hyperref[thm:asecon]{Theorem A.1}, let $\bP=\nu\bX\bX^T$ and $\hbP=\hbX\hbX^T$ (where $\hbX\sim$ASE$(\bA,d)$).  We then have 
\begin{equation}
    \label{eq:PPhat}
    \|\hbP-\bP\|_F=O_{\mathbb{P}}\left( \sqrt{n\nu_n}\log^{c}n\right)
\end{equation}
\end{cor}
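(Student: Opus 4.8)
The plan is to reduce the bound on $\|\hbP-\bP\|_F$ to the row-wise control furnished by \hyperref[thm:asecon]{Theorem A.1} via a telescoping argument that exploits the orthogonal invariance of $\hbP$. First I would observe that for the orthogonal matrices $\mathbf{W}_n$ of \hyperref[thm:asecon]{Theorem A.1}, $\hbP=\hbX\hbX^T=(\hbX\mathbf{W}_n^T)(\hbX\mathbf{W}_n^T)^T$; writing $\widetilde{\bX}:=\hbX\mathbf{W}_n^T$ and $\mathbf{Z}:=\nu^{1/2}\bX$, we then have $\hbP=\widetilde{\bX}\widetilde{\bX}^T$ and $\bP=\nu\bX\bX^T=\mathbf{Z}\mathbf{Z}^T$. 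The key point is that the $i$-th row of $\widetilde{\bX}$ is exactly $(\mathbf{W}_n\widehat X_i)^T$, so that Eq.~\ref{eq:2inf} is precisely a bound on the rows of $\widetilde{\bX}-\mathbf{Z}$.

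Next I would telescope, $\hbP-\bP=\widetilde{\bX}(\widetilde{\bX}-\mathbf{Z})^T+(\widetilde{\bX}-\mathbf{Z})\mathbf{Z}^T$, and bound each summand using submultiplicativity $\|AB\|_F\leq\|A\|_2\|B\|_F$ and $\|AB\|_F\leq\|A\|_F\|B\|_2$ (here $\|\cdot\|_2$ denotes the spectral norm). It then remains to control three quantities. For $\|\widetilde{\bX}-\mathbf{Z}\|_F$, I would square and sum the row bounds: since $\max_i\|\mathbf{W}_n\widehat X_i-\nu^{1/2}X_i\|=\op(\log^c n/n^{1/2})$, summing $n$ such squared row deviations yields $\|\widetilde{\bX}-\mathbf{Z}\|_F=\op(\log^c n)$. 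For $\|\mathbf{Z}\|_2$, Assumption \ref{ass:ass1}(ii) gives $\|X_i\|\leq C$ uniformly, whence $\|\bX\|_2\leq\|\bX\|_F=O(\sqrt n)$ and $\|\mathbf{Z}\|_2=\nu^{1/2}\|\bX\|_2=O(\sqrt{n\nu})$. Finally, $\|\widetilde{\bX}\|_2\leq\|\mathbf{Z}\|_2+\|\widetilde{\bX}-\mathbf{Z}\|_2\leq\|\mathbf{Z}\|_2+\|\widetilde{\bX}-\mathbf{Z}\|_F=O(\sqrt{n\nu})+\op(\log^c n)$, and the sparsity hypothesis $\nu=\omega(\log^{4c}n/n)$ forces $\sqrt{n\nu}=\omega(\log^{2c}n)\gg\log^c n$, so that $\|\widetilde{\bX}\|_2=\op(\sqrt{n\nu})$.

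Assembling the pieces gives $\|\hbP-\bP\|_F\leq\|\widetilde{\bX}\|_2\|\widetilde{\bX}-\mathbf{Z}\|_F+\|\widetilde{\bX}-\mathbf{Z}\|_F\|\mathbf{Z}\|_2=\op(\sqrt{n\nu}\log^c n)$, as claimed. The main obstacle here is not the algebra but the bookkeeping of the $\op(\cdot)$ statements: Eq.~\ref{eq:2inf} is a high-probability bound on the maximum row deviation, so I would fix the event on which this bound holds and verify that, on that same event, the spectral-norm estimate $\|\widetilde{\bX}\|_2=O(\sqrt{n\nu})$ holds simultaneously—this is exactly where the sparsity condition is used, to guarantee that the estimation error $\log^c n$ is of strictly smaller order than the signal $\sqrt{n\nu}$. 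A secondary point worth flagging is the normalization in Assumption \ref{ass:ass1}: it is the compactness of the rows in Assumption \ref{ass:ass1}(ii), rather than the singular-value scaling in (i), that I would invoke to obtain $\|\bX\|_2=O(\sqrt n)$, which is the scaling consistent with the stated conclusion.
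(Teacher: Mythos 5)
Your proof is correct, and it rests on the same two ingredients as the paper's argument: the uniform row bound of \hyperref[thm:asecon]{Theorem A.1} and the compactness of the latent positions from Assumption \ref{ass:ass1}(ii). The difference is the level at which the bounds are applied. The paper works entry-wise: it writes each entry of $\hbP-\bP$ as $\nu X_i^TX_j-\hX_i^T\hX_j$, applies the same add-and-subtract decomposition you use but per entry, bounds each entry by $O_{\mathbb{P}}\bigl(\nu^{1/2}\log^{c}n/n^{1/2}\bigr)$ via Cauchy--Schwarz on the rows, and then sums the $n^2$ squared entry bounds to get $\|\hbP-\bP\|_F^2=O_{\mathbb{P}}\bigl(n\nu\log^{2c}n\bigr)$. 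You work at the matrix level: the same telescoping identity, but closed out with $\|AB\|_F\leq\|A\|_2\|B\|_F$, the aggregated error $\|\widetilde{\bX}-\mathbf{Z}\|_F=O_{\mathbb{P}}(\log^{c}n)$, and the spectral-norm bound $\|\mathbf{Z}\|_2=O(\sqrt{n\nu})$. Both routes invoke the sparsity hypothesis at the analogous spot: the paper uses it (implicitly) to absorb the term $\|\bW\hX_j-\nu^{1/2}X_j\|_2=O_{\mathbb{P}}(\log^{c}n/n^{1/2})$ into $O(\nu^{1/2})$ when bounding $\|\bW\hX_j\|_2$, while you use it to absorb $\log^{c}n$ into $\sqrt{n\nu}$ when bounding $\|\widetilde{\bX}\|_2$. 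What your phrasing buys is that it is the standard matrix-perturbation argument, it makes the role of the sparsity condition explicit, and it would still go through in settings where one only controls $\|\widetilde{\bX}-\mathbf{Z}\|_F$ rather than the maximum row deviation; what the paper's entry-wise phrasing buys is that it needs nothing beyond row norms, so spectral-norm considerations never arise. Finally, your flag about the normalization is well taken: the scaling $\sigma_d(\bX_n)=\Theta(n)$ in Assumption \ref{ass:ass1}(i) is inconsistent with rows lying in a fixed compact set (it is presumably intended for the eigenvalues of $\bX_n\bX_n^T$), and the paper's proof, like yours, only ever uses the compactness in (ii).
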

\begin{proof}
Let $(\bW_n)$ be the sequence of $\bW$'s from \hyperref[thm:asecon]{Theorem A.1}.
Note first that (suppressing the subscript dependence on $n$)
\begin{align*}
|\nu X_i^TX_j-\hX_i^T\hX_j|&\leq 
|\nu^{1/2} X_i^T(\nu^{1/2}X_j-\bW\hX_j)|+|(\nu^{1/2}X_i^T-\hX_i^T\bW^T)\bW\hX_j|\\
&\leq \|\nu^{1/2}X_j-\bW\hX_j\|_2\|\nu^{1/2} X_i^T\|_2+
\|\nu^{1/2}X_i-\bW\hX_i\|_2\|\bW\hX_j\|_2\\
&\leq O_{\mathbb{P}}\left( \frac{\nu^{1/2}\log^{c}n }{n^{1/2}}\right)+O_{\mathbb{P}}\left( \frac{\log^{c}n }{n^{1/2}}\right)(\|\bW\hX_j-\nu^{1/2}X_j\|_2+\|\nu^{1/2}X_j\|_2)\\
&=O_{\mathbb{P}}\left( \frac{\nu^{1/2}\log^{c}n }{n^{1/2}}\right)
\end{align*}
Applying this entry-wise to $\|\hbP-\bP\|_F^2$ we get
\begin{align*}
\|\hbP-\bP\|_F^2&=\sum_{i,j} |\nu X_i^TX_j-\hX_i^T\hX_j|^2=O_{\mathbb{P}}\left( n\nu\log^{2c}n\right)
\end{align*}
as desired.
\end{proof}


\subsection{Proof of Theorems 
\ref{thm:power_to_1} and \ref{thm:power_to_12}
}
\label{pf:power_to_1}
These proofs will proceed by using \hyperref[cor:P-phat]{Corollary A.1} to sharply bound the critical value of the level $\alpha$ test in terms of the error between the models (i.e., the difference of the $\bP$ matrices) and the sampling error (i.e., the difference between $\bP$ and $\widehat{\bP}$).  Growth rate analysis on the difference of the $\bP$ matrices will then allow for the detailed power analysis.

We will begin by recalling/establishing some notation.
Let $\bP_1=\mathbb{E}(A_1)$, $\bP_2=\mathbb{E}(A_2)$.
To ease notation moving forward, we will define
\begin{itemize}
\item[i.] For $i=1,2$, let $\hbP_i$ be
the ASE-based estimate of $\bP_i$ derived from $\bA_i$;
\item[ii.] For $i=1,2$, for any $\bQ\in\Pi_{n,k}$, let $\hbP_{i,\bQ}$ be
the ASE-based estimate of  $\bP_{i,\bQ}=\bQ\bP_2\bQ^T$ derived from $\bQ\bA_i\bQ^T$;
\end{itemize}
Given $\bA_1$ and $\bB_2=\tilde\bQ\bA_2(\tilde\bQ)^T$ (where $\tilde \bQ$ shuffles $\ell\leq k$ vertices of $U_{n,k}$), a valid (conservative) level-$\alpha$ test using the Frobenius norm test statistic would correctly reject $H_0$ if $$T(\bA_1,\bB_2):=\|\hbP_1-\hbP_{2,\tilde\bQ}\|_F \geq  \max_{\bQ\in\Pi_{n,k}}c_{\alpha,\bQ}.$$ 
Our first Proposition will bound $\max_{\bQ\in\Pi_{n,k}}c_{\alpha,\bQ}$ in terms of $\max_{\bQ\in\Pi_{n,k}}\|\bP_1-\bP_{1,\bQ}\|_F$ as follows: 

\begin{prop}
\label{prop:crit-value}
With notation and setup as above, we have that for any fixed $\alpha>0$, there exists an $M>0$ such that for all sufficiently large $n$, 
$$\max_{\bQ\in\Pi_{n,k}}c_{\alpha,\bQ}\leq \max_{\bQ\in\Pi_{n,k}}\|\bP_1-\bP_{1,\bQ}\|_F+M \sqrt{n\nu_n}\log^{c}n$$
and
$$\max_{\bQ\in\Pi_{n,k}}c_{\alpha,\bQ}\geq \max_{\bQ\in\Pi_{n,k}}\|\bP_1-\bP_{1,\bQ}\|_F-M \sqrt{n\nu_n}\log^{c}n.$$
\end{prop}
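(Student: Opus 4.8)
The plan is to control each individual critical value $c_{\alpha,\bQ}$ by $\|\bP_1-\bP_{1,\bQ}\|_F$ up to a common additive error of order $\sqrt{n\nu_n}\log^c n$, and then pass to the maximum over $\bQ\in\Pi_{n,k}$. First I would exploit that under $H_0$ we have $\bP_1=\bP_2$, so that $\bQ\bP_2\bQ^T=\bQ\bP_1\bQ^T=\bP_{1,\bQ}$. Writing $T_{\bQ}:=\|\hbP_1-\bQ\hbP_2\bQ^T\|_F$, one reverse and one forward application of the triangle inequality give, for every $\bQ\in\Pi_{n,k}$,
\[
\big|\,T_{\bQ}-\|\bP_1-\bP_{1,\bQ}\|_F\,\big|\leq \|\hbP_1-\bP_1\|_F+\|\bQ\hbP_2\bQ^T-\bQ\bP_2\bQ^T\|_F.
\]
The crucial observation is that permutation matrices are orthogonal and the Frobenius norm is orthogonally invariant, so $\|\bQ\hbP_2\bQ^T-\bQ\bP_2\bQ^T\|_F=\|\hbP_2-\bP_2\|_F$; hence the right-hand side equals the single random quantity $R:=\|\hbP_1-\bP_1\|_F+\|\hbP_2-\bP_2\|_F$, which does not depend on $\bQ$.

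Next I would apply \hyperref[cor:P-phat]{Corollary A.1} twice to get $R=O_{\mathbb{P}}(\sqrt{n\nu_n}\log^c n)$; by the paper's definition of $O_{\mathbb{P}}$, taking $A=1$ yields constants $M>0$ and $n_0$ with $\mathbb{P}_{H_0}(R\leq M\sqrt{n\nu_n}\log^c n)\geq 1-n^{-1}$ for all $n\geq n_0$, and $n^{-1}<\alpha$ once $n$ is large. Set $D:=\max_{\bQ\in\Pi_{n,k}}\|\bP_1-\bP_{1,\bQ}\|_F$. For the upper bound I take $c^{*}:=D+M\sqrt{n\nu_n}\log^c n$: for each fixed $\bQ$ the deterministic bound $T_{\bQ}\leq\|\bP_1-\bP_{1,\bQ}\|_F+R\leq D+R$ gives the pointwise inclusion $\{T_{\bQ}\geq c^{*}\}\subseteq\{R\geq M\sqrt{n\nu_n}\log^c n\}$, whence $\mathbb{P}_{H_0}(T_{\bQ}\geq c^{*})\leq n^{-1}<\alpha$. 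By the minimality in the definition of $c_{\alpha,\bQ}$ this forces $c_{\alpha,\bQ}\leq c^{*}$ for every $\bQ$, and taking the maximum yields the first inequality.

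For the lower bound, let $\bQ^{*}$ attain $D$ (so $\bQ^{*}\in\Pi_{n,k}$) and set $c_{-}:=D-M\sqrt{n\nu_n}\log^c n$. The reverse triangle bound $T_{\bQ^{*}}\geq\|\bP_1-\bP_{1,\bQ^{*}}\|_F-R=D-R$ shows that on the event $\{R\leq M\sqrt{n\nu_n}\log^c n\}$ we have $T_{\bQ^{*}}\geq c_{-}$, so $\mathbb{P}_{H_0}(T_{\bQ^{*}}\geq c_{-})\geq 1-n^{-1}>\alpha$. Thus $c_{-}$ violates the defining level constraint, forcing $c_{\alpha,\bQ^{*}}>c_{-}$, and a fortiori $\max_{\bQ}c_{\alpha,\bQ}\geq c_{\alpha,\bQ^{*}}\geq c_{-}$, which is the second inequality.

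I expect the main obstacle to be the uniformity over $\Pi_{n,k}$: a direct approach bounding $\sup_{\bQ}|T_{\bQ}-\|\bP_1-\bP_{1,\bQ}\|_F|$ would incur a union bound over the $k!$ permutations and destroy the sharp $\sqrt{n\nu_n}\log^c n$ rate. The argument circumvents this precisely because the fluctuation $R$ is \emph{identical} for all $\bQ$, so the common envelopes $c^{*}$ and $c_{-}$ apply to every permutation at once and no union bound is needed. The residual care is bookkeeping: reconciling strict versus non-strict inequalities at the boundary (absorbable into the constant $M$), and noting that the implicit identity $\bQ\hbP_2\bQ^T=\hbP_{2,\bQ}$ follows from the permutation-equivariance of the ASE outer product $\hbX\hbX^T$.
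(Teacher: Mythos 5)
Your proof is correct and follows essentially the same route as the paper's: a forward/reverse triangle inequality decomposition, the permutation-invariance of the Frobenius norm to make the estimation error $\|\hbP_1-\bP_1\|_F+\|\hbP_2-\bP_2\|_F$ independent of $\bQ$, \hyperref[cor:P-phat]{Corollary A.1} to bound that error by $O_{\mathbb{P}}(\sqrt{n\nu_n}\log^c n)$, and the minimality in the definition of $c_{\alpha,\bQ}$ to convert the probability bounds into the two stated inequalities. Your explicit remark that the common fluctuation $R$ obviates any union bound over $\Pi_{n,k}$ is exactly the (implicit) mechanism in the paper's argument as well.
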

\begin{proof}
Note that under the null hypothesis, $\bP_1=\bP_2$.
As these critical values are computed under under the null hypothesis assumption, we shall make use of this throughout.
Note, however that $\hbP_1\neq\hbP_2$ in general, as these are estimated from $\bA_1$ and $\bA_2$, which are equal only in distribution.
Note that for the ASE-based estimate of $\bP_{2,\bQ}=\bP_{1,\bQ}$, we have
$\hbP_{2,\bQ}=\bQ\hbP_2\bQ^T$.
We then have
\begin{align*}
    \max_{\bQ\in\Pi_{n,k}}\|\hbP_1-\hbP_{1,\bQ}\|_F&\leq \max_{\bQ\in\Pi_{n,k}}\left( \|\hbP_1-\bP_1\|_F+
    \|\bP_1-\bP_{2,\bQ}\|_F+
    \|\bP_{2,\bQ}-\hbP_{2,\bQ}\|_F
    \right)\\
    &= \max_{\bQ\in\Pi_{n,k}}\left( \|\hbP_1-\bP_1\|_F+
    \|\bP_1-\bP_{2,\bQ}\|_F+
    \|\bQ(\bP_{2}-\hbP_{2})\bQ^T\|_F\right)\\
     &= \left(\max_{\bQ\in\Pi_{n,k}}\|\bP_1-\bP_{2,\bQ}\|_F\right) + \|\hbP_1-\bP_1\|_F+
    \|\bP_{2}-\hbP_2\|_F\\
    &= \max_{\bQ\in\Pi_{n,k}}\|\bP_1-\bP_{1,\bQ}\|_F+\op(\sqrt{n\nu_n}\log^{c}n)
\end{align*}
where the last line follows from \hyperref[cor:P-phat]{Corollary A.1}.
Therefore, for any $\epsilon\in(0,\alpha)$ there exists an $M_1>0$ and $N_1>0$ such that for any $n\geq N_1$, and any $\widehat\bQ\in\Pi_{n,k}$ we have that
\begin{align*}
\mathbb{P}_{H_0}&\left( \|\hbP_1-\hbP_{2,\widehat\bQ}\|_F> \max_{\bQ\in\Pi_{n,k}}\|\bP_1-\bP_{1,\bQ}\|_F+M_1 \sqrt{n\nu_n}\log^{c}n\right)\\
&\leq 
\mathbb{P}_{H_0}\left( \max_{\bQ\in\Pi_{n,k}}\|\hbP_1-\hbP_{2,\bQ}\|_F> \max_{\bQ\in\Pi_{n,k}}\|\bP_1-\bP_{1,\bQ}\|_F+M_1 \sqrt{n\nu_n}\log^{c}n\right)\leq \epsilon,
\end{align*}
implying (as $\widehat\bQ$ was chosen arbitrarily)
\begin{align*}
c_{\alpha,\widehat\bQ}&\leq \max_{\bQ\in\Pi_{n,k}}\|\bP_1-\bP_{1,\bQ}\|_F+M_1 \sqrt{n\nu_n}\log^{c}n\\
&\Rightarrow \max_{\bQ\in\Pi_{n,k}}c_{\alpha,\bQ}\leq \max_{\bQ\in\Pi_{n,k}}\|\bP_1-\bP_{1,\bQ}\|_F+M_1 \sqrt{n\nu_n}\log^{c}n.
\end{align*}

For the lower bound, recall that for $\widehat\bQ\in\Pi_{n,k}$, we have $c_{\alpha,\widehat\bQ}$ is the smallest value such that $\mathbb{P}_{H_0}(\|\hbP_1-\hbP_{2,\widehat\bQ}\|_F\geq c_{\alpha,\widehat \bQ})\leq \alpha$.
From the triangle inequality, we have that
\begin{align*}
    \|\hbP_1-\hbP_{2,\bQ}\|_F&\geq  -\|\hbP_1-\bP_1\|_F+
    \|\bP_1-\bP_{2,\bQ}\|_F-
    \|\bP_{2,\bQ}-\hbP_{2,\bQ}\|_F\\
    &\Leftrightarrow  \|\bP_{2,\bQ}-\hbP_{2,\bQ}\|_F +\|\hbP_1-\bP_1\|_F\geq  
    \|\bP_1-\bP_{2,\bQ}\|_F-\|\hbP_1-\hbP_{2,\bQ}\|_F
\end{align*}
so that for any $\epsilon_2>0$, there exists an $M_2$ and $N_2$ such that for all $n\geq N_2$ (recalling $\bP_1=\bP_2$ by assumption), 
\begin{align*}
\mathbb{P}_{H_0}&\left( 
\|\bP_{2,\bQ}-\hbP_{2,\bQ}\|_F +\|\hbP_1-\bP_1\|_F \leq  M_2\sqrt{n\nu_n}\log^{c}n\right)\geq1-\epsilon_2.\\
&\Rightarrow\mathbb{P}_{H_0}\left( 
\|\bP_1-\bP_{2,\bQ}\|_F-\|\hbP_1-\hbP_{2,\bQ}\|_F\leq  M_2\sqrt{n\nu_n}\log^{c}n\right)\geq1-\epsilon_2.\\
&\Leftrightarrow \mathbb{P}_{H_0}\left( \|\hbP_1-\hbP_{2,\bQ}\|_F\geq  \|\bP_1-\bP_{1,\bQ}\|_F-M_2\sqrt{n\nu_n}\log^{c}n\right)\geq1-\epsilon_2.
\end{align*}
This then implies that (for a well chosen $\epsilon_2$), that there exists an $M_2$ and $N_2$ such that for all $n\geq N_2$
$$\max_{\bQ\in\Pi_{n,k}} c_{\alpha,\bQ}\geq \max_{\bQ\in\Pi_{n,k}}\|\bP_1-\bP_{1,\bQ}\|_F-M_2 \sqrt{n\nu_n}\log^{c}n$$
Letting $M=\max(M_1,M_2)$ yields the desired result. 
\end{proof}

\noindent For ease of notation let $\bQ^*\in\text{argmax}_{\bQ\in\Pi_{n,k}}\|\bP_1-\bP_{1,\bQ}\|_F$, and define
\begin{align*}
T_{1,k,\ell}:=&\|\bP_1-\bP_{1,\bQ^*}\|^2_F-\|\bP_1-\bP_{1,\widetilde\bQ}\|^2_F;\\
T_{2,\ell,\ell}:=&\|\bP_1-\bP_{2,\widetilde\bQ}\|^2_F-\|\bP_1-\bP_{1,\widetilde\bQ}\|^2_F.
\end{align*}
We have then that (under the assumptions of Theorem \ref{thm:power_to_1}) there exists constants $C_1, C_2>0$ and an integer $n_0$ such that for $n\geq n_0$, the following holds with probability at least $1-n^{-2}$ under $H_1$,
\begin{align*}
\|\hbP_1-\hbP_{2,\widetilde\bQ}\|_F&\geq
\left(\|\bP_1-\bP_{1,\bQ^*}\|^2_F-
T_{1,k,\ell}+T_{2,\ell,\ell}\right)^{1/2}-C_1 \sqrt{n\nu_n}\log^{c}n;\\
\|\hbP_1-\hbP_{2,\widetilde\bQ}\|_F&\leq
\left(\|\bP_1-\bP_{1,\bQ^*}\|^2_F-
T_{1,k,\ell}+T_{2,\ell,\ell}\right)^{1/2}+C_2 \sqrt{n\nu_n}\log^{c}n.
\end{align*}
Recalling the form of $\mathbf{E}$, we have the following simplification of $T_{2,\ell,\ell}$ under $H_1$; 
first note that
\begin{align*}
\|\bP_1-\bP_{2,\widetilde\bQ}\|^2_F&=\|\bP_1-\bP_{1,\widetilde\bQ}\|^2_F+2\text{trace}\left(\left(\bP_{1,\widetilde\bQ^T}-\bP_{1}\right)^T\bE\right)+\|\bE\|_F^2,
\end{align*}
so that (where $C>0$ and $c>0$ are constants that can change line--to--line)
\begin{align*}
T_{2,\ell,\ell}\geq \begin{cases}
C n r\nu^2\epsilon^2-c nr\nu^2\epsilon &\text{ if }\ell\geq r\\
C n r\nu^2\epsilon^2-c n\ell\nu^2\epsilon&\text{ if }\ell\leq r
\end{cases}\quad\quad
T_{2,\ell,\ell}\leq \begin{cases}
C n r\nu^2\epsilon^2+c nr\nu^2\epsilon &\text{ if }\ell\geq r\\
C n r\nu^2\epsilon^2+c n\ell\nu^2\epsilon&\text{ if }\ell\leq r
\end{cases}
\end{align*}
We are now ready to prove Theorem \ref{thm:power_to_1}, which we state here for completeness.  Recall, we are concerned with showing that for all sufficiently large $n$,
\begin{equation}
\label{eq:power_to_1_2ap}
\mathbb{P}_{H_1}\left(\|\hbP_1-\hbP_{2,\tilde\bQ}\|_F>\max_{\bQ\in\Pi_{n,k}}c_{\alpha,\bQ}\right)\geq 1-n^{-2}.
\end{equation}
\begin{manualtheorem}{\ref{thm:power_to_1}}
\label{thm:power2}
With notation as above, assume there exist $\alpha\in(0,1]$ such that $r=\Theta(n^{\alpha})$ and 
$k,\ell\ll n^\alpha$, and that $$\frac{\|\bP_1-\bP_{1,\bQ^*}\|_F^2-\|\bP_1-\bP_{1,\widetilde\bQ}\|_F^2 }{\nu^2 n}=O(k).$$
In the sparse setting, consider $\nu\gg\frac{\log^{4c}(n)}{n^\beta}$ for $\beta\in(0,1]$ where $\alpha\geq\beta$.  If either \begin{itemize}
\item[i.]$k=O\left(\frac{n^\beta}{\log^{2c}n}\right)$ 
and
$\epsilon\gg  \sqrt{\frac{n^{\beta-\alpha}}{\log^{2c}(n)}}$;
\item[ii.] $k\gg \frac{n^\beta}{\log^{2c}(n)}$ and $\epsilon\gg \sqrt{\frac{k}{n^\alpha}}$
\end{itemize}
then Eq. \ref{eq:power_to_1_2ap} holds for all $n$ sufficiently large.
In the dense case where $\nu=1$, if either
\begin{itemize}
\item[i.]$k\gg\log^{2c}(n)\text{ and }\epsilon\gg \sqrt{k/n^{\alpha}};$ 
\item[ii.]$k\ll\log^{2c}(n)\text{ and }\epsilon\gg\sqrt{(\log^{c}(n))/ n^{\alpha}},$
\end{itemize}
then Eq. \ref{eq:power_to_1_2ap} holds for all $n$ sufficiently large.
\end{manualtheorem}
\begin{proof}
Note that we will see below that we will require $k< c_2 n^\alpha$ for an appropriate constant $c_2$, so the assumption that $k\ll n^\alpha$ is not overly stringent.
We begin by noting that for sufficiently large $n$ (as $\ell\ll r$)
\begin{align}
T_{2,\ell,\ell}-T_{1,k,\ell}\geq& 
C n r\nu^2\epsilon^2-c n\ell\nu^2\epsilon 
-\|\bP_1-\bP_{1,\bQ^*}\|^2_F
+\|\bP_1-\bP_{1,\widetilde\bQ}\|^2_F.
\label{eq:offset12}
\end{align}
Now, for power to be asymptotically almost surely 1 (i.e., bounded below by $1-n^{-2}$ for all $n$ sufficiently large), it suffices that under $H_1$ (as the critical value for the hypothesis test is bounded above by Proposition \ref{prop:crit-value} by $\|\bP_1-\bP_{1,\bQ^*}\|_F+M \sqrt{n\nu}\log^{c}n$),
\begin{align*}
\|\hbP_1-\hbP_{2,\widetilde\bQ}\|_F&\geq
\left(\|\bP_1-\bP_{1,\bQ^*}\|^2_F-
T_{1,k,\ell}+T_{2,\ell,\ell}\right)^{1/2}-C_1 \sqrt{n\nu}\log^{c}n\\
&\geq \|\bP_1-\bP_{1,\bQ^*}\|_F+M \sqrt{n\nu}\log^{c}n,
\end{align*}
which is implied by
\begin{align}
C n r\nu^2\epsilon^2\geq&
c n\ell\nu^2\epsilon 
+(\|\bP_1-\bP_{1,\bQ^*}\|^2_F
-\|\bP_1-\bP_{1,\widetilde\bQ}\|^2_F) +(M+C_1)^2 n\nu\log^{2c}n\notag\\
&\hspace{5mm}+2\|\bP_1-\bP_{1,\bQ^*}\|_F(M+C_1) \sqrt{n\nu}\log^{c}n.
\label{eq:keypower22}
\end{align}
To show Eq. \ref{eq:keypower22} holds, it suffices that all of the following hold 
\begin{align}
\epsilon \gg \frac{\ell}{n^\alpha};\quad\quad 
\epsilon\gg \sqrt{\frac{k}{n^{\alpha}}  }
;\quad\quad
\epsilon\gg  \sqrt{\frac{\log^{2c}n}{n^\alpha\nu}};\quad\quad \epsilon\gg
\sqrt{\frac{k^{1/2} \log^{c}n  }{n^{\alpha}\nu^{1/2}}} 
\label{eq:eqneed14k}
\end{align}
In the sparse setting where $\nu\gg\frac{\log^{4c}(n)}{n^\beta}$ for $\beta\in(0,1]$, we have
Eq. \ref{eq:eqneed14k} is implied by the following (as $k,\ell\ll n^\alpha$ and $k\geq \ell$ so that $\frac{k}{n^\alpha}\ll 1$ and then $\sqrt{\frac{k}{n^\alpha}}\gg\frac{k}{n^\alpha}\geq \frac{\ell}{n^\alpha}$)
\begin{align}
    \epsilon\gg\sqrt{\frac{k}{n^\alpha}};\quad\quad\quad\epsilon\gg  \sqrt{\frac{n^{\beta-\alpha}}{\log^{2c}(n)}};\quad\quad\quad
    \epsilon\gg\sqrt{\frac{ n^{\beta-\alpha}}{\log^{c}(n)} \cdot\frac{k^{1/2} }{n^{\beta/2}}};
    \label{eq:eqneed143k}
\end{align}
note that for these equations to be possible, we must have $\alpha\geq\beta$. 
If $k=O\left(\frac{n^\beta}{\log^{2c}n}\right)$, then 
Eq. \ref{eq:eqneed143k} is implied by 
$\epsilon\gg  \sqrt{\frac{n^{\beta-\alpha}}{\log^{2c}(n)}}$.  Consider next $k\gg \frac{n^\beta}{\log^{2c}(n)}$, in which case Eq. \ref{eq:eqneed143k} are implied by  $\epsilon\gg\sqrt{\frac{k}{n^\alpha}}$.

In the dense case where $\nu=1$, we have that Eq. \ref{eq:eqneed14k} is implied by the following
\begin{align}
\epsilon\gg\sqrt{\frac{k}{n^{\alpha}}  };\quad\quad
\epsilon&\gg  \sqrt{\frac{\log^{2c}n}{n^\alpha}};\quad\quad
 \epsilon\gg
\sqrt{\frac{k^{1/2} \log^{c}n  }{n^{\alpha}}}
\label{eq:eqneed144k}
\end{align}
Eq. \ref{eq:eqneed144k} is then implied by
\begin{align*}
\epsilon\gg \sqrt{\frac{k}{n^{\alpha}}}&\,\,\text{ if }\,\,k\gg\log^{2c}(n);\\
\epsilon\gg\sqrt{\frac{\log^{c}(n) }{n^{\alpha}}}&\,\,\text{ if }\,\,k\ll\log^{2c}(n).
\end{align*} 
as desired.
\end{proof}

\noindent We next turn our attention to Theorem \ref{thm:power_to_12}.
\begin{manualtheorem}
{\ref{thm:power_to_12}}
\label{thm:power3}
With notation as above, assume there exist $\alpha\in(0,1]$ such that $r=\Theta(n^{\alpha})$ and $k,\ell\ll n^{\alpha}$, and that $$\frac{\|\bP_1-\bP_{1,\bQ^*}\|_F^2-\|\bP_1-\bP_{1,\widetilde\bQ}\|_F^2 }{\nu^2 n}=O(k-\ell).$$
In the sparse setting where $\nu\gg\frac{\log^{4c}(n)}{n^\beta}$ for $\beta\in(0,1]$ where $\alpha\geq\beta$, if $k\gg \frac{n^\beta}{\log^{2c}(n)}$ and either
\begin{itemize}
\item[i.] $\frac{k-\ell}{k^{1/2}}\geq\frac{n^{\beta/2}}{\log^{2c}(n)}$; $\epsilon\gg \frac{\ell}{n^\alpha}$; and $\epsilon\gg \sqrt{\frac{k-\ell}{n^\alpha}}$; or
\item[ii.]  $\frac{k-\ell}{k^{1/2}}\leq\frac{n^{\beta/2}}{\log^{2c}(n)}$;  
$\epsilon\gg\frac{\ell}{n^\alpha}$; and $\epsilon\gg\sqrt{\frac{n^{\beta/2}}{\log^{2c}(n)}\frac{k^{1/2} }{n^{\alpha}}}$
\end{itemize}
then Eq. \ref{eq:power_to_1_2ap} holds for all $n$ sufficiently large.
In the dense case where $\nu=1$ and $k=\omega(\log^{2c}n)$, if either
\begin{itemize}
\item[i.] $\frac{k-\ell}{k^{1/2}}\geq\log^{c}(n)$; $\epsilon\gg \frac{\ell}{n^\alpha}$; and $\epsilon\gg \sqrt{\frac{(k-\ell)}{n^{\alpha}}}$; or
\item[ii.] $\frac{k-\ell}{k^{1/2}}\leq\log^{c}(n)$; $\epsilon\gg \frac{\ell}{n^\alpha}$; and $\epsilon\gg \sqrt{\frac{k^{1/2}\log^{c}(n) }{n^{\alpha}}}$,
\end{itemize}
then Eq. \ref{eq:power_to_1_2ap} holds for all $n$ sufficiently large.
\end{manualtheorem}

\begin{proof}
Mimicking the proof of Theorem \ref{thm:power_to_1}, for Eq. \ref{eq:power_to_1_2ap} to hold, it suffices that all of the following hold 
\begin{align}
\epsilon \gg \frac{\ell}{n^\alpha};\quad\quad 
\epsilon\gg \sqrt{\frac{k-\ell}{n^{\alpha}}  }
;\quad\quad
\epsilon\gg  \sqrt{\frac{\log^{2c}n}{n^\alpha\nu}};\quad\quad \epsilon\gg
\sqrt{\frac{k^{1/2} \log^{c}n  }{n^{\alpha}\nu^{1/2}}}
\label{eq:eqneed14}
\end{align}
In the sparse setting where $\nu\gg\frac{\log^{4c}(n)}{n^\beta}$ for $\beta\in(0,1]$, we have
Eq. \ref{eq:eqneed14} is implied by the following 
\begin{align}
\epsilon\gg \frac{\ell}{n^\alpha};\quad\quad
\epsilon\gg \sqrt{\frac{k-\ell}{n^\alpha}};\quad\quad
    \epsilon\gg  \sqrt{\frac{n^{\beta-\alpha}}{\log^{2c}(n)}};\quad\quad
    \epsilon\gg\sqrt{\frac{ n^{\beta-\alpha}}{\log^{c}(n)} \cdot\frac{k^{1/2} }{n^{\beta/2}}};
    \label{eq:eqneed143}
\end{align}
note that for these equations to be possible, we must have $\alpha\geq\beta$. 
Recalling the assumption that $k\gg \frac{n^\beta}{\log^{2c}(n)}$, the behavior in this case hinges on $\ell$ as well, as Eq. \ref{eq:eqneed143} is implied by
\begin{align*}
\epsilon\gg \frac{\ell}{n^\alpha},\quad \epsilon\gg \sqrt{\frac{k-\ell}{n^\alpha}}&\,\,\text{ if }\,\,
\frac{k-\ell}{k^{1/2}}\geq \frac{n^{\beta/2}}{\log^{2c}(n)};\\
\epsilon\gg \frac{\ell}{n^\alpha},\quad \epsilon\gg\sqrt{\frac{n^{\beta/2}}{\log^{2c}(n)}\frac{k^{1/2} }{n^{\alpha}}}&\,\,\text{ if }
\frac{k-\ell}{k^{1/2}}\leq\frac{n^{\beta/2}}{\log^{2c}(n)};
\end{align*}
In the dense case where $\nu=1$, we have that Eq. \ref{eq:eqneed14} is implied by the following
\begin{align}
\epsilon \gg\frac{\ell}{n^\alpha};\quad\quad
\epsilon\gg\sqrt{\frac{k-\ell}{n^{\alpha}}  };\quad\quad
\epsilon\gg  \sqrt{\frac{\log^{2c}n}{n^\alpha}};\quad\quad
 \epsilon\gg
\sqrt{\frac{k^{1/2} \log^{c}n  }{n^{\alpha}}}
\label{eq:eqneed144}
\end{align}
Under the assumption that $k\gg \log^{2c}n$, Eq. \ref{eq:eqneed144} is implied by 
\begin{align*}
\epsilon\gg\frac{\ell}{n^\alpha},\quad\epsilon\gg\sqrt{\frac{(k-\ell)}{n^{\alpha}}}&\,\,\text{ if }\,\,\frac{k-\ell}{k^{1/2}}\geq\log^{c}(n);\\
\epsilon\gg\frac{\ell}{n^\alpha},\quad \epsilon\gg\sqrt{\frac{k^{1/2}\log^{c}(n) }{n^{\alpha}}}&\,\,\text{ if }\,\,\frac{k-\ell}{k^{1/2}}\leq\log^{c}(n),
\end{align*} 
as desired.
\end{proof}

\subsection{Proof of Theorem \ref{thm:power_to_0}}
\label{pf:power_to_0}

Recall that 
$$T_{2,\ell,\ell}-T_{1,k,\ell}\leq \begin{cases}
C n r\nu^2\epsilon^2+c nr\nu^2\epsilon-\|\bP_1-\bP_{1,\bQ^*}\|^2_F
+\|\bP_1-\bP_{1,\widetilde\bQ}\|^2_F &\text{ if }\ell\geq r\\
C n r\nu^2\epsilon^2+c n\ell\nu^2\epsilon-\|\bP_1-\bP_{1,\bQ^*}\|^2_F
+\|\bP_1-\bP_{1,\widetilde\bQ}\|^2_F&\text{ if }\ell\leq r
\end{cases}
$$
so that the assumption $$\frac{\|\bP_1-\bP_{1,\bQ^*}\|_F^2-\|\bP_1-\bP_{1,\widetilde\bQ}\|_F^2 }{\nu^2 n}=\Omega(k-\ell)$$ yields that there exists a constant $R>0$ such that for $n$ sufficiently large
$$T_{2,\ell,\ell}-T_{1,k,\ell}\leq \begin{cases}
C n r\nu^2\epsilon^2+c nr\nu^2\epsilon-R n\nu^2(k-\ell) &\text{ if }\ell\geq r\\
C n r\nu^2\epsilon^2+c n\ell\nu^2\epsilon-R n\nu^2(k-\ell)&\text{ if }\ell\leq r
\end{cases}
$$
Now, for power to be asymptotically almost surely 0 (i.e., bounded below by $n^{-2}$ for all $n$ sufficiently large), it suffices that under $H_1$ we have, where $\mathfrak{c}$ is an appropriate constant that can change line-to-line (as the critical value for the hypothesis test is bounded below by Proposition \ref{prop:crit-value} by $\|\bP_1-\bP_{1,\bQ^*}\|_F-M \sqrt{n\nu}\log^{c}n$)
\begin{align}
\|\hbP_1-\hbP_{2,\widetilde\bQ}\|_F&\leq
\left(\|\bP_1-\bP_{1,\bQ^*}\|^2_F-
T_{1,k,\ell}+T_{2,\ell,\ell}\right)^{1/2}+C_2 \sqrt{n\nu}\log^{c}n\notag\\
&\leq \|\bP_1-\bP_{1,\bQ^*}\|_F-M \sqrt{n\nu}\log^{c}n
\label{eq:cutline}
\end{align}
Eq. \ref{eq:cutline} is then implied by 
$$
T_{2,\ell,\ell}-
T_{1,k,\ell}\leq (C_2+M)^2n\nu \log^{2c}n
-2(M+C_2) \sqrt{n\nu}\log^{c}n\cdot\|\bP_1-\bP_{2,\bQ^*}\|_F
$$
which is implied by
\begin{align}
\label{eq:pwr0}
\begin{cases}
n^{1+\alpha}\nu^2\epsilon^2+ n^{1+\alpha}\nu^2\epsilon\leq \mathfrak{c} n\nu^2(k-\ell) + \mathfrak{c} n\nu \log^{2c}n
-\mathfrak{c} n\nu^{3/2}\sqrt{k}\log^{c}n &\text{ if }\ell\geq r\\
n^{1+\alpha}\nu^2\epsilon^2+ n\ell\nu^2\epsilon\leq \mathfrak{c} n\nu^2(k-\ell) + \mathfrak{c} n\nu \log^{2c}n
-\mathfrak{c} n\nu^{3/2}\sqrt{k}\log^{c}n &\text{ if }\ell\leq r
\end{cases}
\end{align}
Suppose further that
$$\frac{k-\ell}{\sqrt{k}}\gg \frac{\log^cn}{\sqrt{\nu}}.$$
In this case Eq. \ref{eq:pwr0} is implied by
\begin{align}
 n^{1+\alpha}\nu^2\epsilon^2&+ n^{1+\alpha}\nu^2\epsilon\ll  n\nu\log^{2c}n+n(k-\ell)\nu^2 \quad\text{ if }\ell\geq r\notag\\
 n^{1+\alpha}\nu^2\epsilon^2&+ n\ell\nu^2\epsilon\ll  n\nu\log^{2c}n+n(k-\ell)\nu^2\quad\text{ if }\ell\leq r\notag\\
&\Leftrightarrow
\begin{cases}
 \epsilon^2+ \epsilon\ll  \frac{\log^{2c}n}{n^\alpha\nu}+\frac{k-\ell}{n^\alpha} \quad&\text{ if }\ell\geq r\\
 n^{\alpha}\epsilon^2+ \ell\epsilon\ll  \frac{\log^{2c}n}{\nu}+(k-\ell)\quad&\text{ if }\ell\leq r
\end{cases}\notag\\
&\Leftarrow
\begin{cases}
 \epsilon\ll  \frac{k-\ell}{n^\alpha} \quad&\text{ if }\ell\geq r\\
\epsilon\ll  \sqrt{\frac{k-\ell}{n^\alpha}};\,\, \epsilon\ll  \frac{k-\ell}{\ell}\quad&\text{ if }\ell\leq r
\end{cases}
\end{align}
as desired.

\subsection{Proof of Proposition \ref{prop:AP}}
\label{sec:APpf}

To ease notation, define $T_{h,A}=T_A(\bA_1,\bA_{2,h}):=\frac{1}{2}\|\bA_1-\bA_{2,h}\|_F^2$.
We will adopt the following notations for the entries of the shuffled edge expectation matrices: for $\eta=1,2$, the $(i,j)$-th entry of $\bP_{\eta,h}=\bQ_{2h}\bP_\eta\bQ_{2h}^T$ is denoted
via $p_{ij}^{(\eta,h)}$ (and where $p_{i,j}^{(\eta)}$ will refer to the $(i,j)$-th entry of $\bP_\eta$).
We will also define (recall, $\sum_{\{i,j\}}$ signifies the sum over unordered pairs of elements of $[n]$)
$\mu_{1}:=\sum_{\{i,j\}}p_{i,j}^{(1)}$, and  $\mu_E :=\sum_{\{i,j\}} e_{ij}.$
  Then, we have that (where we recall that $\bE_\ell=\bQ_{2\ell} \bE\bQ_{2\ell}^T=[e_{i,j}^{(\ell)}]$ is the shuffled noise matrix)
  \begin{align*}
\mathbb{E}_{H_0}(T_{k,A})
 &=2\mu_1-2\sum_{\{i,j\}} p_{ij}^{(1)}p_{ij}^{(1,k)}\\
 \mathbb{E}_{H_1}(T_{\ell,A}) 
 &=2\mu_1+\mu_E-2\sum_{\{i,j\}} p^{(1)}_{ij}(p^{(1,\ell)}_{ij}+e^{(\ell)}_{ij})
\end{align*}
So that in the dense setting (i.e., $\nu_n= 1$ for all $n$), letting $\xi_{ij}:=(2p^{(1)}_{ij}-1)e^{(\ell)}_{ij}$ and 
$\mu_\xi:=\sum_{\{ij\}}\xi_{ij}$, we have that (where to ease notation, we define $\delta=\max_i |\Lambda_{1i}-\Lambda_{2i}|$)
\begin{align}
    \mathbb{E}_{H_0}(T_{k,A})-&\mathbb{E}_{H_1}(T_{\ell,A})=2\sum_{\{ij\}} p^{(1)}_{ij}(p^{(1,\ell)}_{ij}-p^{(1,k)}_{ij})-\mu_E+ 2 \sum_{\{ij\}}p^{(1)}_{ij}e^{(\ell)}_{ij}\notag\\
    &=2\bigg(\sum_{h>2}n_h(k-\ell)(\Lambda_{1h}-\Lambda_{2h})^2+\binom{k-\ell}{2}(\Lambda_{11}-\Lambda_{22})^2\label{eq:rowdiff1}\\
    &\hspace{5mm}+(n_1-k)(k-\ell)(\Lambda_{11}-\Lambda_{21})^2+(n_2-k)(k-\ell)(\Lambda_{22}-\Lambda_{12})^2 \label{eq:rowdiff2}\\
    &\hspace{5mm}- 2\ell(k-\ell)(\Lambda_{11}-\Lambda_{12})(\Lambda_{22}-\Lambda_{21})\label{eq:rowdiff3}\bigg)\\
    &\hspace{5mm}-\mu_E+ 2 \sum_{\{ij\}}p^{(1)}_{ij}e^{(\ell)}_{ij}\label{eq:rowdiff4}\\
    &\leq 2n(k-\ell)\delta^2+\mu_\xi
\end{align}
To see this, we first focus
on bounding the terms in 
Eqs. (\ref{eq:rowdiff1})--(\ref{eq:rowdiff3}).
To ease notation, let $x:=\Lambda_{11}-\Lambda_{21}$ and $y:=\Lambda_{22}-\Lambda_{12}$.
For the desired bound, it suffices to show that 
$$
\binom{k-\ell}{2}(x-y)^2-k(k-\ell)(x^2+y^2)-2\ell(k-\ell)xy\leq 0.
$$
To see this, note that 
\begin{align*}
\binom{k-\ell}{2}&(x-y)^2-k(k-\ell)(x^2+y^2)-2\ell(k-\ell)xy\\
&=\frac{k-\ell}{2}\left((k-\ell-1)(x^2+y^2)-2(k-\ell-1)xy
-2k(x^2+y^2)-4\ell xy \right)\\
&=\frac{k-\ell}{2}\left((-k-\ell-1)(x^2+y^2)-2(k-\ell-1)xy-4\ell xy \right)\leq 0,
\end{align*} as this yields the terms in Eqs. (\ref{eq:rowdiff1})--(\ref{eq:rowdiff3}) are bounded above by
$2\sum_{h\geq 1}n_h(k-\ell)(\Lambda_{1h}-\Lambda_{2h})^2$
as desired.
For a lower bound, we have that (where $\gamma:=|\Lambda_{11}-\Lambda_{22}|$)
\begin{align*}
\mathbb{E}_{H_0}(T_{k,A})-\mathbb{E}_{H_1}(T_{\ell,A})&
\geq 2(k-\ell)\left[
\sum_{h\geq 1}n_h(\Lambda_{1h}-\Lambda_{2h})^2
-(k+\ell)\delta^2-\gamma^2/2
\right]
+\mu_{\xi}\\
&\geq2(k-\ell)[\min_i n_i-(k+\ell)]\delta^2-(k-\ell)\gamma^2+\mu_{\xi}
\end{align*}
To derive the desired lower bound on the terms in Eqs. (\ref{eq:rowdiff1})--(\ref{eq:rowdiff3}), we see that
\begin{align*}
\binom{k-\ell}{2}&(x-y)^2-k(k-\ell)(x^2+y^2)-2\ell(k-\ell)xy\\
&=-\frac{k-\ell}{2}((k+\ell)(x+y)^2+(x-y)^2)\\
&\geq-\frac{k-\ell}{2}((k+\ell)2\delta^2+\gamma^2)=-(k-\ell)(k+\ell)\delta^2-(k-\ell)\gamma^2/2
\end{align*}
as desired.

Under our assumptions on $\Lambda$ and $\bE$, we have that 
\begin{align*}
2\eta(1-\eta)\mathbb{E}_{H_0}(T_{k,A}) &\leq \text{Var}_{H_0}(T_{k,A})\leq
(1-2\eta(1-\eta)) \mathbb{E}_{H_0}(T_{k,A})\\
2(\eta-\hat\eta)(1-\eta+\hat\eta)\mathbb{E}_{H_1}(T_{\ell,A}) &\leq \text{Var}_{H_1}(T_{\ell,A})\leq
(1-2(\eta-\hat\eta)(1-\eta+\hat\eta)) \mathbb{E}_{H_1}(T_{\ell,A})
\end{align*}
To see this for $T_{k,A}$ (with $T_{\ell,A}$ being analogous), let 
$\sigma_k$ (resp., $\sigma_\ell$) be the permutation associated with $\bQ_{2k}$ (resp., $\bQ_{2\ell}$), and define
  \begin{align*}
\csp&:=\{\, \{i,j\}\text{ s.t. }\{i,j\}\neq \{\sigma_k(i),\sigma_k(j)\}\}\\
\csq&:=\{\, \{i,j\}\text{ s.t. }\{i,j\}\neq \{\sigma_\ell(i),\sigma_\ell(j)\}\}.
  \end{align*}
For ease of notation, define
\begin{align*}
\mathfrak{a}_{ij}:&= p^{(1)}_{i,j}(1-p_{ij}^{(1,k)}) + p_{ij}^{(2,k)}(1-p^{(1)}_{ij})\\
\mathfrak{b}_{ij}:&= 2p^{(1)}_{i,j}(1-{p}^{(1)}_{ij}) 
\end{align*}
Then
\begin{align*}
\text{Var}_{H_0}(T_{k,A})= \mathbb{E}_{H_0}(T_{k,A})-\sum_{\{ij\}\in\csp}\mathfrak{a}_{ij}^2 -\sum_{\{ij\}\notin\csp} \mathfrak{b}_{ij}^2;
\end{align*}
noting that 
\begin{align*}
2\eta(1-\eta)&\leq \mathfrak{a}_{ij}\leq 1-2\eta(1-\eta)\\
2\eta(1-\eta)&\leq \mathfrak{b}_{ij} \leq 1-2\eta(1-\eta)
\end{align*}
yields
\begin{align*}
\text{Var}_{H_0}(T_{k,A})&\geq \mathbb{E}_{H_0}(T_{k,A})-(1-2\eta(1-\eta))\!\!\sum_{\{ij\}\in\csp}\!\!\mathfrak{a}_{ij} -(1-2\eta(1-\eta))\!\!\sum_{\{ij\}\notin\csp} \!\!\mathfrak{b}_{ij}\\
&= 2\eta(1-\eta)\mathbb{E}_{H_0}(T_{k,A})
\end{align*}
and, similarly, $\text{Var}_{H_0}(T_{k,A})
\leq(1-2\eta(1-\eta))\mathbb{E}_{H_0}(T_{k,A})$.

Stein's method (see \cite{stein1986approximate,ross2011fundamentals}) yields that under both $H_0$ (resp., $H_1$) $T_A(\bA_1,\bA_{2,k})$ (resp., $T_A(\bA_1,\bA_{2,\ell})$) are aymptotically normally distributed, and hence the testing power is asymptotically equal to (where $\widetilde\Phi$ is the standard normal tail CDF, and $C>0$ is a constant that can change line-to-line, and $n_*=\min_i n_i$) 
\begin{align*}
\mathbb{P}_{H_1}&(T_{\ell,A}\geq \mathfrak{c}_{\alpha,k})\approx\mathbb{P}_{H_1}(T_{\ell,A} \geq z_{\alpha}\sqrt{\text{Var}_{H_0}(T_{k,A})} + \mathbb{E}_{H_0}(T_{k,A})) \\
    &= \mathbb{P}_{H_1}\left(\frac{T_{\ell,A}-\mathbb{E}_{H_1}(T_{\ell,A})}{\sqrt{\text{Var}_{H_1}(T_{\ell,A})}}\geq \frac{z_{\alpha}\sqrt{\text{Var}_{H_0}(T_{k,A})} + \mathbb{E}_{H_0}(T_{k,A}) - \mathbb{E}_{H_1}(T_{\ell,A})}{\sqrt{\text{Var}_{H_1}(T^{(a)}_{\ell,A})}}\right)\\
    &\approx\widetilde\Phi\left( \frac{z_{\alpha}\sqrt{\text{Var}_{H_0}(T_{k,A})} + \mathbb{E}_{H_0}(T_{k,A}) - \mathbb{E}_{H_1}(T_{\ell,A})}{\sqrt{\text{Var}_{H_1}(T_{\ell,A})}}\right)\\
    &\leq\widetilde\Phi\left( C\cdot\left(
        2(k-\ell)\left(\frac{n_*}{n}-\frac{k+\ell}{n}\right)\delta^2-\frac{k-\ell}{n}\gamma^2+\frac{\mu_{\xi}}{n}\right)\right)\\
    &\leq\widetilde\Phi\left( C\cdot\left(
        (k-\ell)\frac{n_*}{n}-\frac{k^2-\ell^2}{n}\delta^2-\frac{k-\ell}{n}\gamma^2+\frac{\mu_{\xi}}{n}\right)\right)
\end{align*}
Now, we have that power is asymptotically negligible if 
$$(k-\ell)\frac{n_*}{n}-\frac{k^2-\ell^2}{n}\delta^2-\frac{k-\ell}{n}\gamma^2+\frac{\mu_{\xi}}{n}\gg0$$
as desired.

\newpage
\subsection{Additional Experiments}
\label{app:addexp}

Herein, we include the additional experiments from Sections \ref{sec:PhatvsA} and \ref{sec:ASE}.
We first show an example of the $\widehat\bP$ test in a sparser regime than that considered in Section \ref{sec:PhatvsA}.
As in Section \ref{sec:PhatvsA}, we consider
$b(v)=2-\mathds{1}\{v\in\{1,2,\cdots,250\}$, we consider two $n=500$ vertex SBMs defined via
\begin{equation}
\label{eq:AandE2}
\bA\sim \text{SBM}\left(2, 
\begin{bmatrix}
0.05 & 0.01\\
0.01 & 0.04
\end{bmatrix}
,b,1\right);\quad
\bB\sim \text{SBM}\left(2, 
\begin{bmatrix}
0.05 & 0.01\\
0.01 & 0.04
\end{bmatrix}+\textbf{E}_{\epsilon},b,1\right)
\end{equation}
where $\textbf{E}_{\epsilon}=\epsilon J_{500}$ for $\epsilon=0.01, 0.05, -0.005$.
Results are displayed in Figure \ref{sparsePhat}; here we see the same trend in play for the modestly sparse regime (when $\epsilon=0.01,0.05$). 
Indeed, in this sparse regime, the testing power is high even for low noise ($\epsilon=0.01$), as the signal--to--noise ratio is still favorable for inference.  As in the dense case, if the error is too small (here $\epsilon=-0.005$), the test is unable to distinguish the two networks irregardless of the shuffling effect.

\begin{figure}[t!]
\centering
\includegraphics[width=1\textwidth]{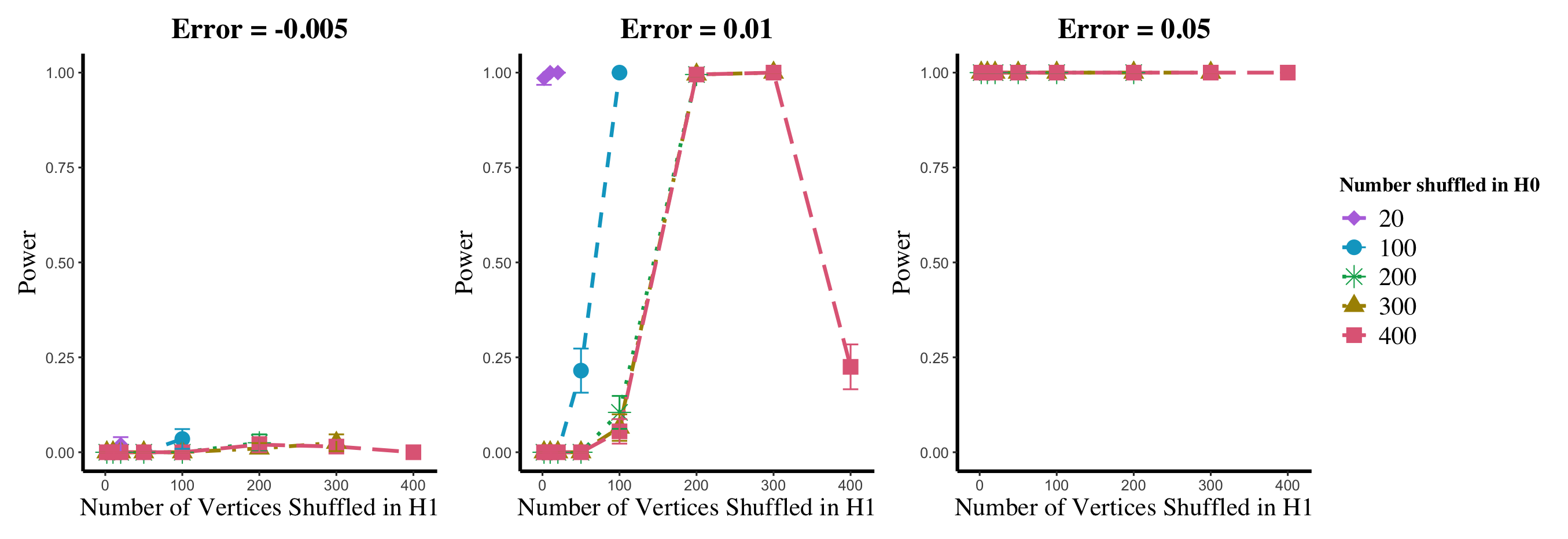}
\caption{For the experimental setup considered in Section \ref{app:addexp}, we plot the empirical testing power in the presence of shuffling for the $\hbP$-based test in the sparse regime.
In the figure
the x-axis represents the number of vertices actually shuffled in $U_{n,k}$ (i.e., the number shuffled in the alternative) while the curve colors
represent the maximum number of vertices potentially shuffled via in $U_{n,k}$; error bars are $\pm2$s.e.).}
\label{sparsePhat}
\end{figure}

We next consider additional experiments in the ASE-based testing of Section \ref{sec:ASE}.  Herein, we show results for more values of $k,\ell, and \lambda$.

\begin{center}
\begin{figure}[h!]
\centering
\includegraphics[width=1\textwidth]{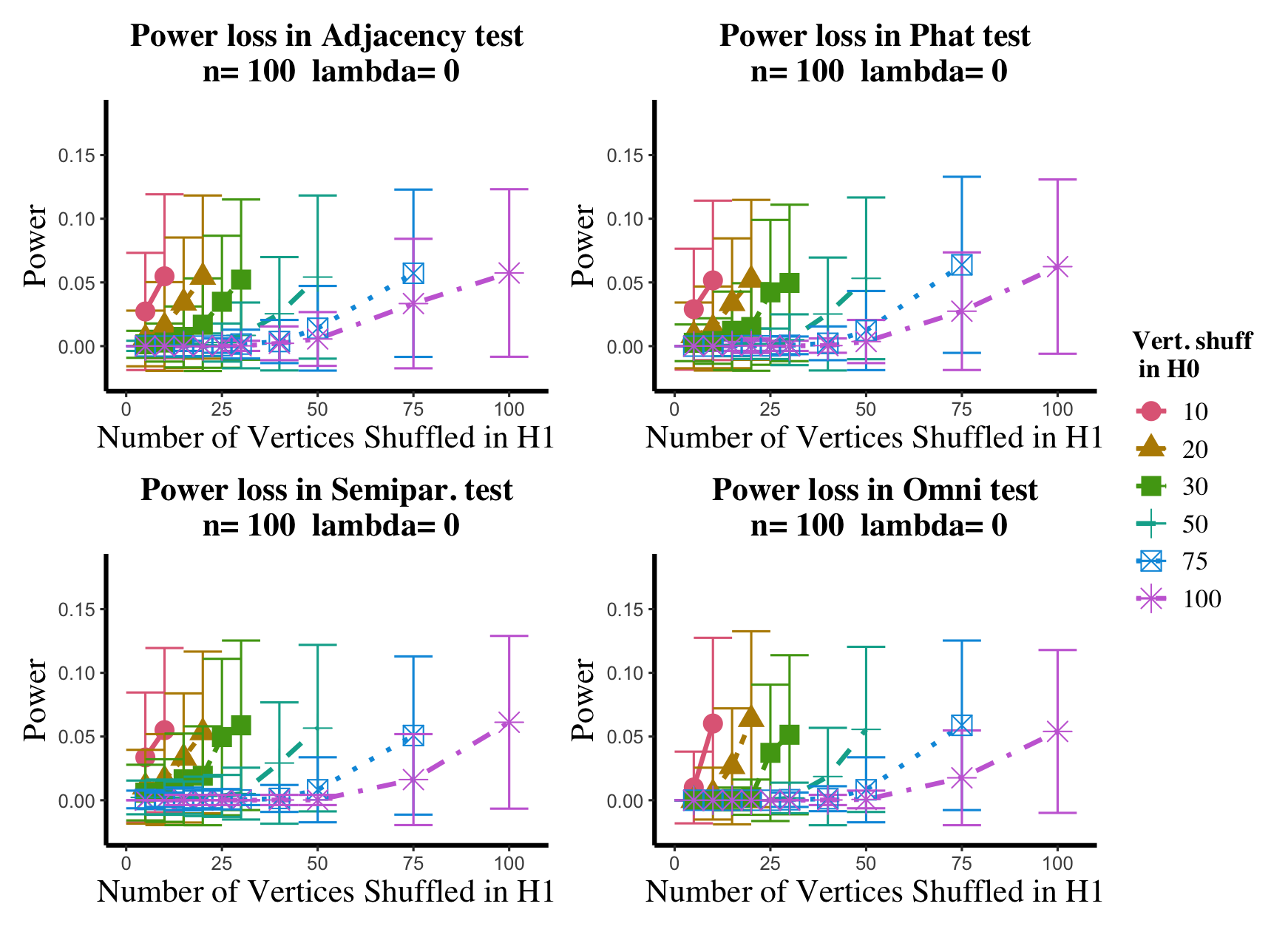}
\caption{For the experimental setup considered in Section \ref{sec:ASE}, we plot the empirical testing power in the presence of shuffling for the four tests: the Frobenius norm difference between the adjacency-matrices, between $\hbP$'s, $T_{\text{Omni}}$ and $T_{\text{Semipar}}$.
In the figure
the x-axis represents the number of vertices actually shuffled in $U_{n,k}$ (i.e., the number shuffled in the alternative) while the curve colors
represent the maximum number of vertices potentially shuffled via in $U_{n,k}$.}
\label{levinvsomniA1}
\end{figure}
\end{center}
\newpage

\begin{center}
\begin{figure}[h!]
\centering
\includegraphics[width=1\textwidth]{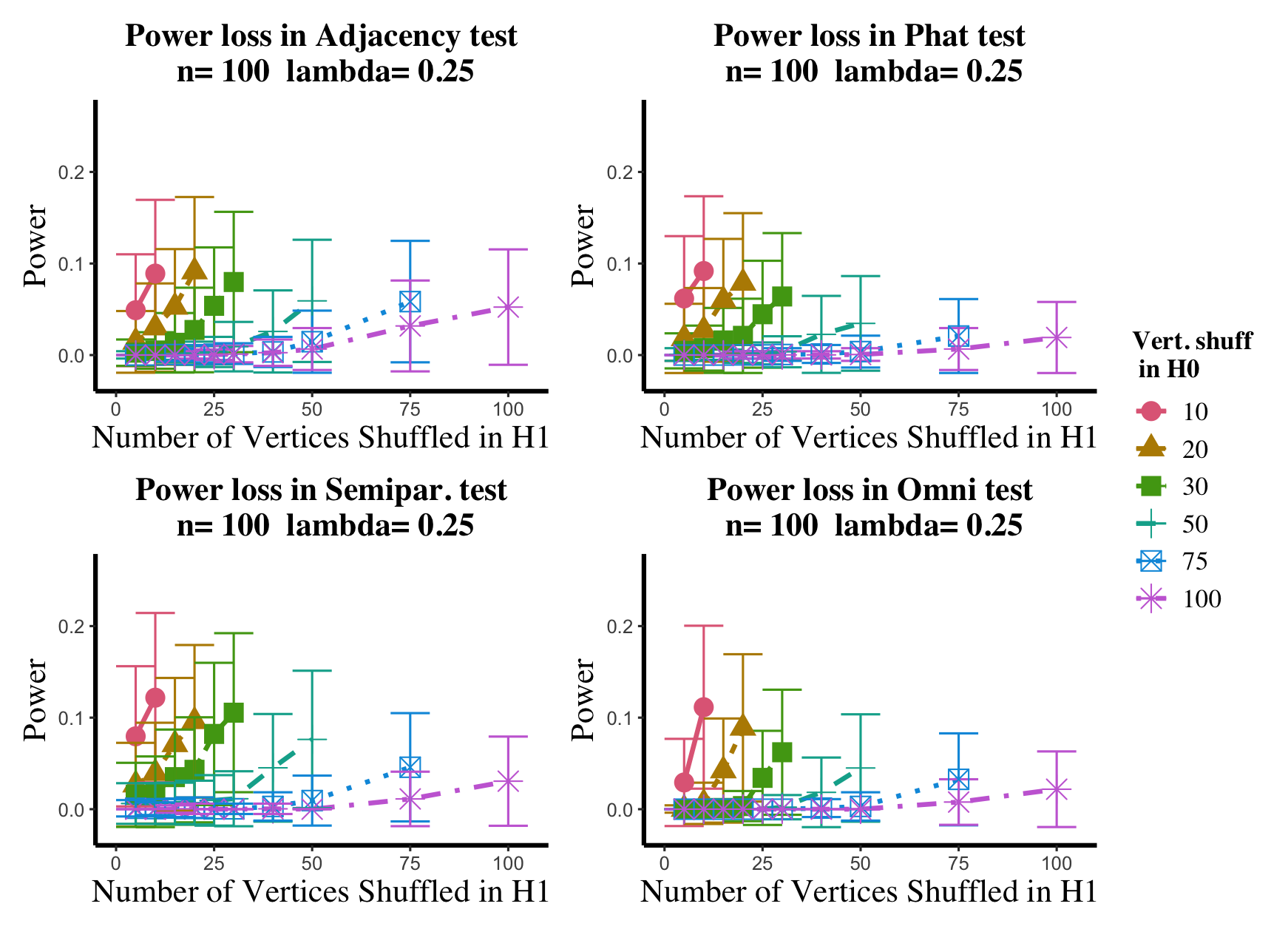}
\caption{For the experimental setup considered in Section \ref{sec:ASE}, we plot the empirical testing power in the presence of shuffling for the four tests: the Frobenius norm difference between the adjacency-matrices, between $\hbP$'s, $T_{\text{Omni}}$ and $T_{\text{Semipar}}$.
In the figure
the x-axis represents the number of vertices actually shuffled in $U_{n,k}$ (i.e., the number shuffled in the alternative) while the curve colors
represent the maximum number of vertices potentially shuffled via in $U_{n,k}$.}
\label{levinvsomniA2}
\end{figure}
\end{center}
\newpage

\begin{figure}[h!]
\centering
\includegraphics[width=1\textwidth]{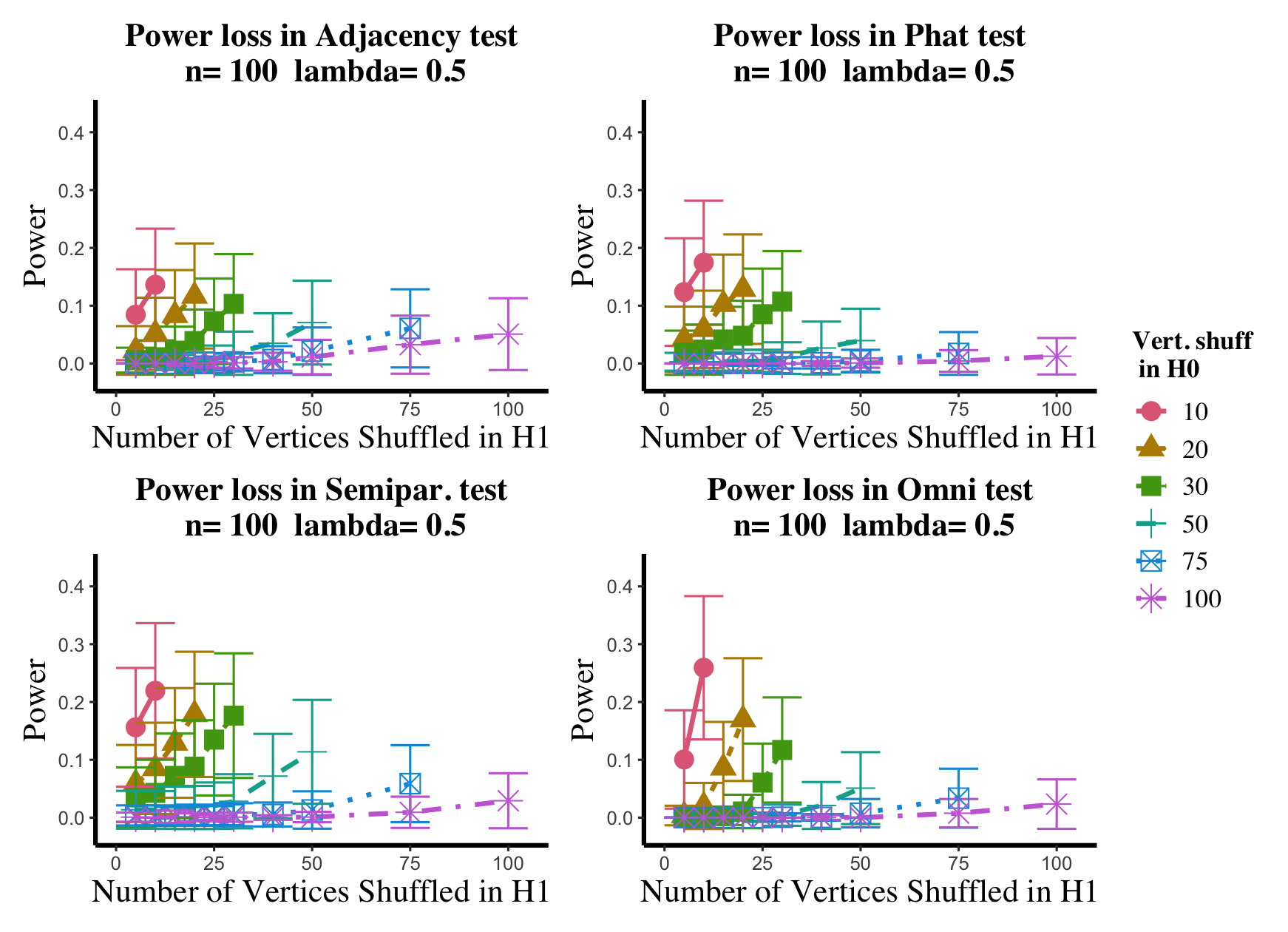}
\caption{For the experimental setup considered in Section \ref{sec:ASE}, we plot the empirical testing power in the presence of shuffling for the four tests: the Frobenius norm difference between the adjacency-matrices, between $\hbP$'s, $T_{\text{Omni}}$ and $T_{\text{Semipar}}$.
In the figure
the x-axis represents the number of vertices actually shuffled in $U_{n,k}$ (i.e., the number shuffled in the alternative) while the curve colors
represent the maximum number of vertices potentially shuffled via in $U_{n,k}$.}
\label{levinvsomniA3}
\end{figure}

\begin{figure}[h!]
\centering
\includegraphics[width=1\textwidth]{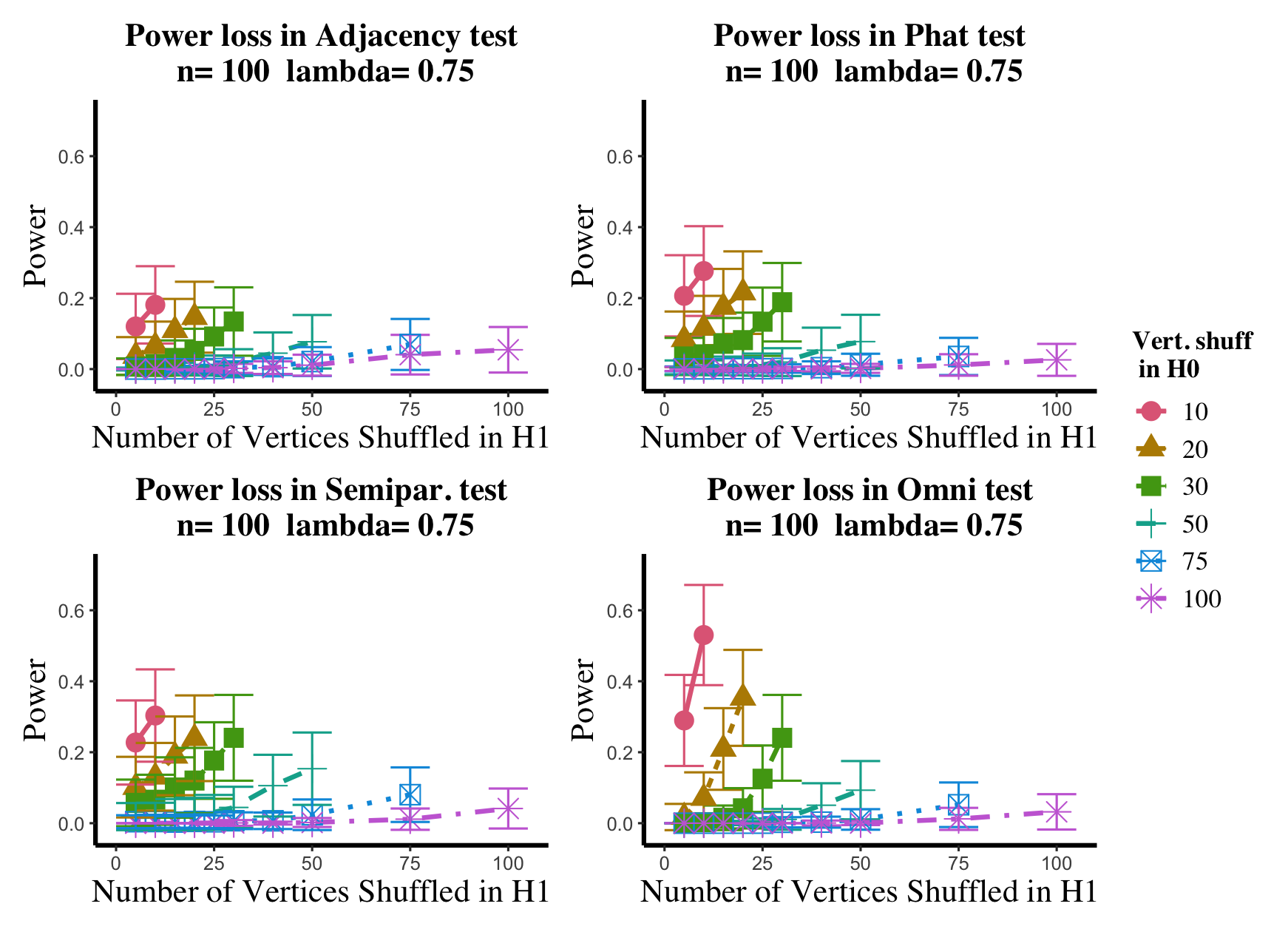}
\caption{For the experimental setup considered in Section \ref{sec:ASE}, we plot the empirical testing power in the presence of shuffling for the four tests: the Frobenius norm difference between the adjacency-matrices, between $\hbP$'s, $T_{\text{Omni}}$ and $T_{\text{Semipar}}$.
In the figure
the x-axis represents the number of vertices actually shuffled in $U_{n,k}$ (i.e., the number shuffled in the alternative) while the curve colors
represent the maximum number of vertices potentially shuffled via in $U_{n,k}$.}
\label{levinvsomniA4}
\end{figure}

\begin{figure}[h!]
\centering
\includegraphics[width=1\textwidth]{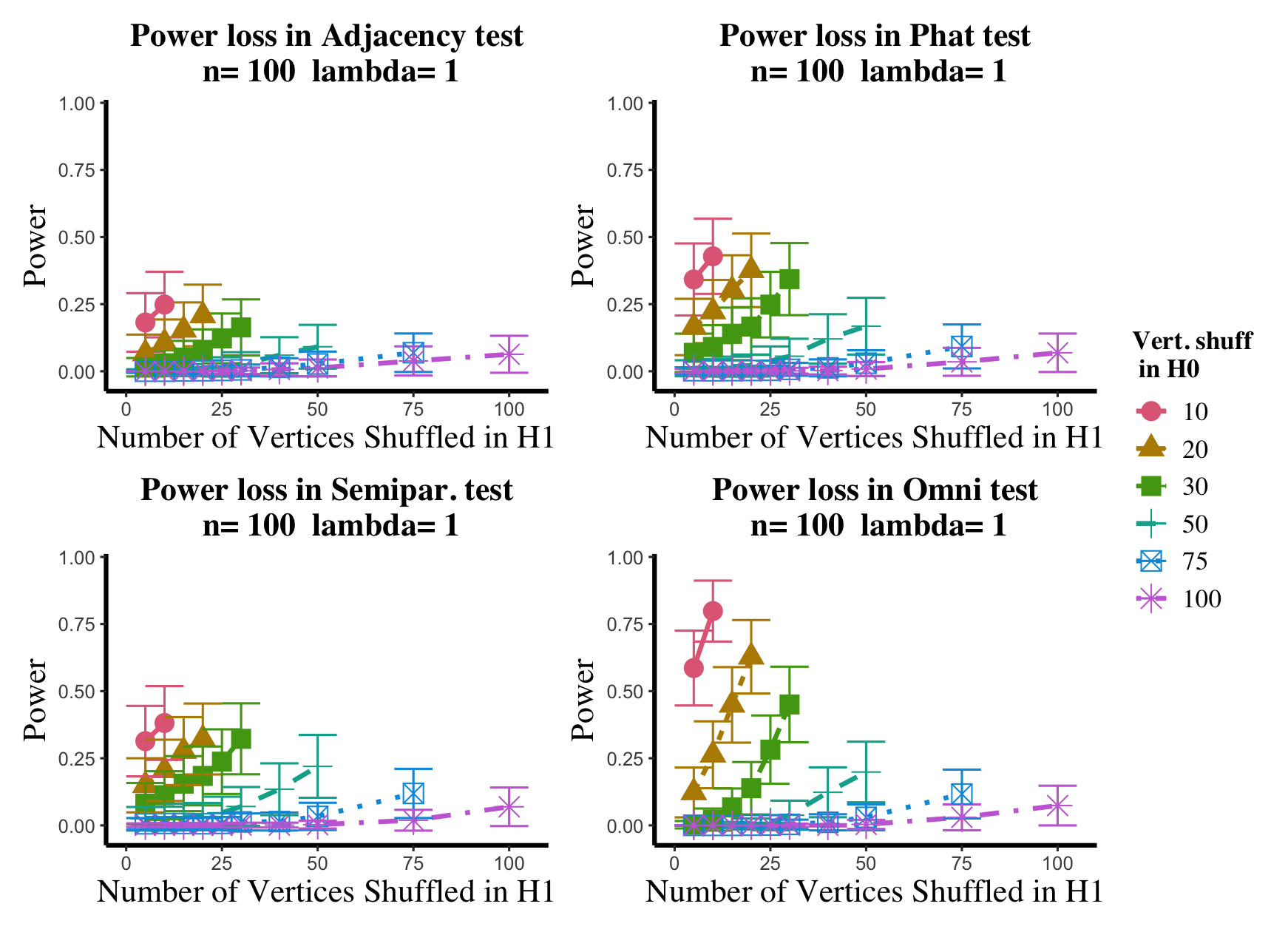}
\caption{For the experimental setup considered in Section \ref{sec:ASE}, we plot the empirical testing power in the presence of shuffling for the four tests: the Frobenius norm difference between the adjacency-matrices, between $\hbP$'s, $T_{\text{Omni}}$ and $T_{\text{Semipar}}$.
In the figure
the x-axis represents the number of vertices actually shuffled in $U_{n,k}$ (i.e., the number shuffled in the alternative) while the curve colors
represent the maximum number of vertices potentially shuffled via in $U_{n,k}$.}
\label{levinvsomniA5}
\end{figure}







\end{document}